
\pdfoutput=1
\documentclass[journal]{IEEEtran}

\ifCLASSINFOpdf
\else
\fi
\usepackage[mathscr]{eucal} 
\usepackage{amsmath,amsthm,amscd,amssymb,amsbsy}
\usepackage{graphicx}
\usepackage[caption=false,font=footnotesize]{subfig}
\usepackage{epstopdf}
\usepackage{algorithm}
\usepackage{algorithmic}
\usepackage{verbatim}
\usepackage{array}
\usepackage{cite}
\usepackage{color}

\pdfminorversion=6

\def\a{{\mathbf a}}

\def\d{{\mathbf d}}
\def\n{{\mathbf n}}

\def\s{{\mathbf s}}
\def\u{{\mathbf u}}
\def\x{{\mathbf x}}
\def\y{{\mathbf y}}

\def\h{{\mathbf h}}
\def\bv{{\mathbf v}}

\def\hx{{\hat{\mathbf x}}}
\def\ha{{\hat{\mathbf a}}}
\def\hs{{\hat{\mathbf s}}}

\def\hT{{\hat T}}
\def\hK{{\hat K}}

\def\tn{{\tilde{\mathbf n}}}

\def\A{{\mathbf A}}
\def\B{{\mathbf B}}
\def\C{{\mathbf C}}
\def\D{\mathbf{D}}

\def\H{{\mathbf H}}
\def\I{{\mathbf I}}
\def\L{{\mathbf L}}

\def\Q{{\mathcal Q}} 

\def\U{{\mathbf U}}
\def\W{{\mathbf W}}
\def\X{{\mathbf X}}

\def\cO{{\mathcal O}}
\def\bS{{\mathbf S}}

\def\Ct{\widetilde{\mathbf C}}
\def\ct{\widetilde{c}}

\def\bLambda{{\boldsymbol \Lambda}}

\def\balpha{{\boldsymbol \alpha}}

\def\cX{{\cal X}}

\def\cG{{\cal G}}
\def\cV{{\cal V}}
\def\cE{{\cal E}}
\def\cN{{\cal N}}
\def\cS{{\cal S}}
\def\cU{{\cal U}}

\def\cF{{\cal F}}
\def\cC{{\cal C}}

\def\sR{{\mathbb R}}
\def\sC{{\mathbb C}}

\def\ie{\emph{i.e.}}

\newtheorem{theorem}{Theorem}
\newtheorem{proposition}{Proposition}

\newtheorem{definition}{Definition}

\DeclareMathOperator*{\diag}{diag} 
\DeclareMathOperator*{\argmin}{argmin} 


\newcolumntype{C}[1]{>{\centering}p{#1}}

\graphicspath{{./figures/}}
\begin{document}
%
\title{Fast Graph Sampling Set Selection Using Gershgorin Disc Alignment}
%
%
%

\author{Yuanchao~Bai,~\IEEEmembership{Student Member,~IEEE,}
        Fen~Wang,
        Gene~Cheung,~\IEEEmembership{Senior Member,~IEEE,}\\
        Yuji~Nakatsukasa,
        Wen~Gao,~\IEEEmembership{Fellow,~IEEE}
\thanks{Yuanchao~Bai and Wen~Gao are with the School of Electronics Engineering and Computer Science,
Peking University, Beijing, 100871, China.
(e-mail: \{yuanchao.bai, wgao\}@pku.edu.cn).}
\thanks{Gene~Cheung is the \textit{corresponding author} and is with the Department of Electrical Engineering \& Computer Science, York University, Toronto, Canada. (e-mail: genec@yorku.ca).}
\thanks{Fen~Wang is with State Key Laboratory of ISN, Xidian University, Xi'an 710071, Shaanxi, China (e-mail: fenwang@stu.xidian.edu.cn)}
\thanks{Yuji Nakatsukasa is with the University of Oxford, UK, and the National Institute of Informatics, Japan. (e-mail: nakatsukasa@maths.ox.ac.uk)}
}

\maketitle

\begin{abstract}
Graph sampling set selection, where a subset of nodes are chosen to collect samples to reconstruct a smooth graph signal, is a fundamental problem in graph signal processing (GSP).
Previous works employ an unbiased least-squares (LS) signal reconstruction scheme and select samples via expensive extreme eigenvector computation.
Instead, we assume a biased graph Laplacian regularization (GLR) based scheme that solves a system of linear equations for reconstruction.
We then choose samples to minimize the condition number of the coefficient matrix---specifically, maximize the smallest eigenvalue $\lambda_{\min}$.
Circumventing explicit eigenvalue computation, we maximize instead the lower bound of $\lambda_{\min}$, designated by the smallest left-end of all Gershgorin discs of the matrix.
To achieve this efficiently, we first convert the optimization to a dual problem, where we minimize the number of samples needed to align all Gershgorin disc left-ends at a chosen lower-bound target $T$.
Algebraically, the dual problem amounts to optimizing two disc operations: i) shifting of disc centers due to sampling, and ii) scaling of disc radii due to a similarity transformation of the matrix.
We further reinterpret the dual as an intuitive disc coverage problem bearing strong resemblance to the famous NP-hard set cover (SC) problem.
The reinterpretation enables us to derive a fast approximation scheme from a known SC error-bounded approximation algorithm.
We find an appropriate target $T$ efficiently via binary search.
Extensive simulation experiments show that our disc-based sampling algorithm runs substantially faster than existing sampling schemes and outperforms other eigen-decomposition-free sampling schemes in reconstruction error.
\end{abstract}

\begin{IEEEkeywords}
Graph signal processing, graph sampling, Gershgorin circle theorem, combinatorial optimization
\end{IEEEkeywords}

%
\IEEEpeerreviewmaketitle

\section{Introduction}
\label{sec:intro}
Sampling of smooth signals---and subsequent signal reconstruction from collected samples---are fundamental problems in signal processing.
Graph sampling extends the well-known studies of Nyquist sampling on regular data kernels to irregular data kernels described by graphs.
Graph sampling can be studied under different contexts: aggregation sampling \cite{aggressive2016TSP,aggressive2018TSIPN}, local measurement \cite{wang2016local} and subset sampling \cite{chen2015sampling}.
In this paper, we focus on \textit{graph sampling set selection}: a subset of nodes are chosen to collect samples to reconstruct a graph signal with a well-defined notion of smoothness.

For the noiseless case, graph sampling set selection is well investigated \cite{sampling2008TAMS,chen2015signal,cornell}; in particular, \cite{cornell} shows that random selection of $K$ samples on a graph for a target $K$-bandlimited signal can enable perfect reconstruction with high probability.
The noisy case where acquired samples are corrupted by additive noise is more challenging, and there exists many proposals.
Assuming that the target signal is bandlimited, a class of schemes compute one or more extreme eigenvectors of a graph Laplacian matrix (or sub-matrix) and examine the per-node magnitudes to greedily determine a sample set.
While these methods show good sampling performance using a least-squares (LS) signal reconstruction scheme, their complexity is high.
On the other hand, there are faster sampling methods based on the notion of local operators \cite{akie2017accelerated,akie2018eigenFREE}, but these methods are ad-hoc in nature, and there are no global notions of optimality associated with the selected sampling sets.

In practice, real-world graph signals, such as temperature in sensor network \cite{TSIPN2016Chen}, category labels in graph-based classifier \cite{active_sampling2014KDD} and ratings in recommendation system \cite{TSP2019Ortiz}, are typically not strictly bandlimited with known bandwidths.
Instead, generally ``smooth" signals with respect to known graph structures are relatively more common.
Different from standard least-squares reconstruction only considering spectrum energy under a given cutoff frequency, we assume a different signal smoothness notion defined using \textit{graph Laplacian regularization} (GLR), and then propose a corresponding sampling scheme with roughly linear complexity.
Further, our scheme can be proven to minimize the global worst case mean squared error (MSE);
thus our sampling scheme can gracefully scale to very large graphs and has excellent signal reconstruction performance.
Specifically, we first assume a biased GLR  based scheme that solves a system of linear equations for reconstruction given collected samples \cite{GDenoising2017Pang}.
We then choose samples to minimize the condition number of the coefficient matrix; \textit{i.e.}, maximize the smallest eigenvalue $\lambda_{\min}$.
Circumventing computation of extreme eigenvalues, we maximize instead the lower bound of $\lambda_{\min}$, designated by the smallest left-end of all Gershgorin discs of the matrix \cite{varga04}.

To achieve this efficiently, we first convert the optimization to a dual problem, where we minimize the number of samples needed to align all Gershgorin disc left-ends at a pre-selected lower-bound target $T$.
Algebraically, the dual problem amounts to two disc operations: i) shifting of disc centers due to sampling, and ii) scaling of disc radii due to a similarity transformation of the matrix.
We further reinterpret the dual as an intuitive \textit{disc coverage} (DC) problem, which bears strong resemblance to the famous \textit{set cover} (SC) problem \cite{cormen2009introduction}.
Though DC remains NP-hard (we provide a proof by reduction), the reinterpretation enables us to derive a fast approximation scheme from a known SC error-bounded approximation algorithm \cite{cormen2009introduction}.
An appropriate target $T$ can be located quickly via simple binary search.
Extensive simulation experiments show that our disc-based sampling algorithm runs substantially faster than existing eigen-decomposition-based schemes (hundreds to thousands times faster) for medium-sized graphs and outperforms other eigen-decomposition-free schemes in reconstruction error for large-sized graphs.

The outline of the paper is as follows.
We first overview related works in Section\;\ref{sec:related_work}.
We introduce notations and a biased reconstruction method in Section\;\ref{sec:preliminary}.
In Section\;\ref{sec:formulation}, we first propose our sampling formulation to minimize the upper bound of MSE and then derive its dual problem based on Gershgorin circle theorem.
We develop a fast sampling set selection algorithm to tackle our sampling formulation with roughly linear complexity in Section\;\ref{sec:sampling}.
Experimental results are presented in Section\;\ref{sec:experiments}.
We conclude in Section\;\ref{sec:conclusion}.

\section{Related Works}
\label{sec:related_work}
The problem of graph sampling can be defined from different contexts.
From an \textit{aggregation sampling} context, the authors in \cite{aggressive2016TSP,aggressive2018TSIPN} proposed to aggregate the sampled signal at one node (or a subset) but with the graph signal shifted successively by a known shift operator.
From a \emph{local weighted sampling} context, \cite{wang2016local} proposed to collect samples via a weighted sum of signals from different local sets.
In contrast, we focus on graph sampling in the context of set selection in this paper: how to select a subset of nodes to collect samples to reconstruct a smooth signal in high fidelity.
We next focus on detailed reviews of  related works under the umbrella of set selection-based graph sampling.

\subsection{Deterministic Graph Sampling Set Selection}
Sampling theory for noise-free graph signals was first studied in \cite{sampling2008TAMS}, where a \textit{uniqueness set} is defined for perfect recovery of \emph{bandlimited} graph signals.
Starting from this definition, \cite{sp_proxy2016,anis2018sampling,gadde2015probabilistic} proved a sufficient and necessary condition for a uniqueness set, and proposed a lightweight sampling set selection strategy based on the notion of graph spectral proxies.

In the presence of noise, assuming an unbiased \textit{least-squares} (LS) signal reconstruction with independent and identically distributed (i.i.d) noise model, the minimum mean square error (MMSE) function for sampling is actually the known A-optimal criterion \cite{boyd2004convex,MIA2018Fen,experiments}. In response, the authors in \cite{MFN2016TSP,2018greedy} proposed to directly optimize the A-optimal function with greedy scheme. Alternatively, \cite{e_optimal2015} used the E-optimality criterion for graph sampling, which can be interpreted as minimizing the worst case of MSE.
However, the criteria in \cite{MFN2016TSP,2018greedy,e_optimal2015} all required extreme eigen-pair computation of the graph variation operator to choose samples, which is not scalable for large graphs.
Though \cite{MIA2018Fen} avoided eigen-pair computation by developing a sampling method based on truncated Neumann series,  it must compute many matrix-series  multiplications for good approximation, which is expensive.

To tackle the complexity issue in large scale graphs, very recently, \cite{akie2017accelerated,akie2018eigenFREE} proposed an eigen-decomposition-free graph sampling method by successively maximizing the coverage of the localization operators, which was implemented via Chebyshev polynomial approximation \cite{gw2011Hammond}.
Nonetheless, this sampling idea utilizes local message heuristically and has no global performance guarantee.
Orthogonally, our early work  \cite{yuanchao2019ICASSP} proposed a fast graph sampling algorithm via Gershgorin disc alignment without any eigen-pair computation, but its performance was sub-optimal due to the node sampling strategy simply based on breadth first search (BFS).
It is known that random sampling \cite{r_sampling2018ACHA,puy2018structured} can lead to very low computational complexity, but it typically requires more samples for the same signal reconstruction quality compared to its deterministic counterparts.

In this paper, we propose a roughly linear-time  deterministic graph sampling method without any eigen-pair computation, while its reconstruction performance remains superior.
Unlike previous fast sampling works \cite{sp_proxy2016,akie2018eigenFREE,yuanchao2019ICASSP}, our method can be proven to optimize the robustness of the linear reconstruction system resulting from GLR and reduce the upper-bound of the global reconstruction MSE, while each sampling step is performed locally for speed.

\subsection{Sampling of GMRF Graph Signals}

Besides the bandlimitedness assumption, graph signal (label information) in classical semi-supervised learning is interpreted as a Gaussian Markov random field (GMRF), where the signal is smooth with respect to the graph Laplacian operator \cite{microsoft}.
With this property, active semi-supervised learning  can be interpreted as graph sampling on a nearest neighbor graph constructed by node features \cite{active_sampling2014KDD}.
An adaptive graph sampling method was proposed \cite{dasarathy2015s2} to collect samples based on both successively modified graph and selected samples.
To improve efficiency, authors in \cite{gad2016active} proposed a hybrid adaptive and non-adaptive sampling method to combine the graph sampling and active learning methods.
Under the GMRF model, \cite{pinyu} showed that the MSE function for graph sampling is the trace of the posterior covariance matrix, based on which they developed a greedy sampling method. However, this method is very inefficient in large graphs because of the matrix inverse computation in each greedy step.
In this paper, we derive a fast sampling strategy for  GMRF graph signals with smoothness prior, but it is also applicable for  bandlimited graph signals.

\section{Preliminary}
\label{sec:preliminary}
\subsection{Background and Notations}
A graph $\cG$ is defined by a triplet $\cG(\cV,\cE,\W)$, where $\cV$ and $\cE$ represent sets of $N$ nodes and $M$ edges in the graph, respectively.
Associated with each edge $(i, j)\in\cE$ is a weight $w_{ij}$, which reflects the correlation or similarity between two connected nodes $i$ and $j$.
In this paper, we assume a connected, undirected graph;  \textit{i.e.}, $w_{ij}=w_{ji}, \forall i, j \in \cV$.
$\W$ is an \textit{adjacency} matrix with $w_{ij}$ as the $(i,j)$-th entry of the matrix.
We assume a non-negative weight matrix with elements $w_{ij}>0$ if $\forall (i, j)\in\cE$, and $w_{ij}=0$ otherwise.

Given $\W$, the \textit{combinatorial graph Laplacian} matrix $\L$ is computed as \cite{shuman13,ortega2018ieee}:
\begin{equation}
    \L\triangleq \D-\W
    \label{eq:graph_laplacian}
\end{equation}
where $\D=\diag(\W\mathbf{1})$ is a diagonal \textit{degree} matrix. $\mathbf{1}$ is a vector of all 1's and $\diag(\cdot)$ is an operator that returns a diagonal matrix with the elements of an input vector on the main diagonal.
As $\L$ is real and symmetric, it can be eigen-decomposed via the \textit{Spectral Theorem} \cite{horn1990matrix},
\begin{equation}
    \L=\U\bLambda\U^\top
    \label{eq:graph_spectrum}
\end{equation}
where $\bLambda$ is a diagonal matrix containing real eigenvalues $\theta_k$ along its diagonal, and $\U$ is an orthonormal matrix composed of real eigenvectors $\u_k$'s as columns.
Since $w_{ij}\ge0$ for $\forall (i,j)\in \cE$, $\x^\top \L\x=\sum_i\sum_j w_{ij}(x_i-x_j)^2\ge0$ for $\forall \x\in\sR^N$.
Thus, $\L$ is a \textit{positive semi-definite} (PSD) matrix.
The non-negative eigenvalues $\theta_k$'s are interpreted as \textit{graph frequencies}.
By sorting $\theta_k$'s in a non-decreasing order, \emph{i.e.}, $0=\theta_1\le\theta_2\le \ldots \le\theta_N$, $\theta_k$'s are ordered from lowest to highest frequencies.
$\theta_1$ represents the lowest DC graph frequency and $\theta_N$ represents the highest AC graph frequency.
The corresponding eigenvectors in $\U$ are interpreted as Fourier-like graph frequency basis; together they form the \textit{graph spectrum} for a graph $\cG$.

Given a graph frequency basis $\U$, a graph signal $\x$ can be transformed to its graph frequency domain representation $\balpha$ via \textit{graph Fourier transform} (GFT) \cite{shuman13,ortega2018ieee}:
\begin{equation}
    \balpha=\U^\top\x=[\u_1, ..., \u_N]^\top\x
    \label{eq:gft}
\end{equation}
A graph signal $\x$ is called \textit{$\omega$-bandlimited} if $\alpha_i=0$, for $\forall i \in \{i ~|~ \theta_i>\omega \}$.

A graph signal $\x\in\sR^N$ is defined on the nodes of a graph,
\ie, one scalar value $x_i$ is assigned to the $i$-th node.
Given a graph signal $\x$ and subset  $\cS\subset\{1, ..., N\}$ with cardinality $|\cS|=K<N$, we define the sampling process as \cite{sampling2008TAMS,sampling2014icassp,e_optimal2015,sp_proxy2016,r_sampling2018ACHA,MIA2018Fen,MFN2016TSP,2018greedy,akie2018eigenFREE,cornell}:
\begin{equation}
    \y=\x_\cS+\n=\H\x+\n
    \label{eq:sampling}
\end{equation}
where $\y\in \sR^K$ is a sampled observation of length $K$.
$\x_\cS\in \sR^K$ denotes a sub-vector of $\x$ consisting of elements indexed by $\cS$ and
$\n$ is an additive noise term.
The \textit{sampling matrix} $\H\in\{0, 1 \}^{K\times N}$ is defined as \cite{MIA2018Fen}:
\begin{equation}
    \H_{ij}=\begin{cases}
    1,~~~~~j=\cS_i;\\
    0,~~~~~\mbox{otherwise}.
    \end{cases}
    \label{eq:sampling_matrix}
\end{equation}
where $\cS_i$ denotes the $i$-th element of set $\cS$.
The complement set of $\cS$ on $\cV$ is denoted by $\cS^c=\cV\setminus\cS$.
We define $L_2(\cS)$ as the space of all graph signals that are zero except on the subset of nodes indexed by $\cS$ \cite{sp_proxy2016}:
\begin{equation}\label{LSet}
   L_2(\cS)=\{\x\in\sR^N|~\x_{\cS^c}=\mathbf{0}\}
\end{equation}

\subsection{Signal Reconstruction with Graph Laplacian Regularizer}
Instead of assuming an unbiased \textit{least-squares} (LS) signal reconstruction scheme from sparse samples \cite{e_optimal2015,MFN2016TSP,2018greedy,MIA2018Fen}, we adopt a popular biased graph signal reconstruction scheme based on \textit{graph Laplacian regularization} (GLR) \cite{GDenoising2017Pang}, which has demonstrated its effectiveness in numerous practical applications, such as image processing \cite{GDenoising2017Pang,GS2017Xian,gene2018ieee,deblur2017YBai,liu2018graph,dinesh20183d} , computer graphics \cite{Laplacian_mesh2005eurographics} and semi-supervised learning \cite{gl_semi_learning2004lt,su2018}.
Biased signal reconstruction using GLR leads to an unconstrained $l_2$-$l_2$-norm minimization, whose solution can be obtained by solving a system of linear equations, efficiently computed using state-of-the-art numerical linear algebra algorithms such as \textit{conjugate gradient} (CG) \cite{hestenes1952methods}.
For large noise variance, it is shown that the biased estimation has smaller reconstructed errors on average than the unbiased counterpart \cite{MIA2018Fen,chen2017spl}.

Specifically, given a noise-corrupted sampled observation $\y\in\sR^K$ on a graph $\cG$, one can formulate an optimization for the target signal $\hat{\x} \in \mathbb{R}^N$ using GLR as follows:
\begin{equation}
    \hx=\argmin_\x \|\H\x-\y\|^2_2 +
    \mu \; \x^{\top} \L \, \x
    \label{eq:reconstruction}
\end{equation}
where $\H$ is a sampling matrix defined in (\ref{eq:sampling_matrix}).
$\mu > 0$ is a tradeoff parameter to balance GLR against the $l_2$-norm data fidelity term.
The law of selecting an optimal $\mu$ with respect to both noise variance and graph spectrum is studied in \cite{chen2017spl}.
The focus of this paper is on graph sampling rather than graph signal reconstruction, and interested readers can refer to \cite{chen2017spl} for a discussion on the appropriate selection of $\mu$.

Since objective (\ref{eq:reconstruction}) is quadratic, the optimal solution $\hx$ can be obtained by solving a system of linear equations:
\begin{equation}
    (\H^{\top}\H+\mu\L)\hx=\H^{\top} \y.
    \label{eq:linear_equation}
\end{equation}

For $\L$ constructed on a connected graph $\cG$, DC graph frequency $\u_1=\mathbf{1}/{\sqrt{N}}$ leads to $\u^\top_1\L\u_1=0$ and $\|\H\u_1\|_2>0$ for $\forall\cS$ with $K\ge1$; for any nonzero graph signal $\x\in \sR^N \ne \u_1$, $\x^\top\L\x>0$ and $\|\H\x\|_2\ge 0$.
Therefore, $\forall \x\ne \mathbf{0}$, $\x^\top(\H^\top\H+\mu\L)\x=\|\H\x\|^2_2+\mu\x^\top\L\x>0$.
Hence, coefficient matrix $\H^{\top}\H+\mu\L$ must be \textit{positive definite} (PD), and (\ref{eq:linear_equation}) has a unique solution $\hx=(\H^\top\H+\mu\L)^{-1}\H^\top\y$.

\section{Problem Formulation}
\label{sec:formulation}
\subsection{Reconstruction Stability and Sampling Set}
Although $\H^{\top}\H+\mu\L$ is a PD matrix, (\ref{eq:linear_equation}) can potentially be poorly conditioned, because
the minimum eigenvalue $\lambda_{\min}$ of $\H^{\top}\H+\mu\L$ can be very close to zero.
For instance, if $\H$ is chosen such that there exists a low frequency graph signal $\x\in L_2(\cS^c)$ and its bandlimit $\omega$ is very small, then
\begin{align}
    \lambda_{\min}&\le\frac{\x^\top(\H^\top\H+\mu\L)\x}{\|\x\|^2_2}
    \mathop =\limits^{(1)}
    \mu\frac{\x^\top\L\x}{\|\x\|^2_2} \notag \\
    &=\mu\frac{\alpha_1^2\theta_1+\ldots+\alpha_k^2\theta_k}{\|\balpha\|^2_2}\le \mu \, \omega. 
\end{align}
where $\mathop =\limits^{(1)}$ holds because of the definition in \eqref{LSet}.
For small $\mu$, $\lambda_{\min}$ is then upper-bounded by a small number.

This means that condition number $\rho = \lambda_{\max} / \lambda_{\min}$ for coefficient matrix $\H^{\top} \H + \mu \L$
can be very large, resulting in instability when $\hat{\x}$ is numerically computed in \eqref{eq:linear_equation}, and more importantly, $\hat{\x}$ is sensitive to the noise in the observation $\y$.
Our goal then is to maximize $\lambda_{\min}$ by appropriately choosing sampling matrix $\H$, so that $\rho$ is minimized\footnote{Largest eigenvalue $\lambda_{\max}$ is upper-bounded by $1+2 \mu d_{\max} \ll \infty$, where $d_{\max}$ is the maximum node degree of the graph $\mathcal{G}$. Thus to minimize condition number $\rho = \lambda_{\max}/\lambda_{\min}$, it is sufficient to maximize $\lambda_{\min}$.}.

Denote a diagonal matrix by  $\A=\H^{\top}\H=\diag(\a)=\diag(a_1, ..., a_N)\in\sR^{N\times N}$, where \textit{sampling vector} $\a$ satisfies
\begin{equation}
    a_{i}=\begin{cases}
    1,~~~~~i\in \cS,\\
    0,~~~~~\mbox{otherwise}.
    \label{eq:ata}
    \end{cases}
\end{equation}
In other words, $\a$ is $0$-$1$ bit array encoded sampling set $\cS$.
Denote by $\B=\A+\mu\L$.

Given matrix $\L$, parameter $\mu$ and a sampling budget $K$ where $0< K< N$, our objective is to find a sampling set $\cS$ with $|\cS| \leq K$ (or, equivalently, a sampling vector $\a$ with $\sum_{i=1}^N a_i \leq K$), to maximize $\lambda_{\min} $ of $\B$, \ie,
\begin{align}
    \max_\a&~~\lambda_{\min}(\B) \label{eq:max_lambda_min} \\
    \mbox{s.t.}~&\B=\A+\mu\L, \notag\\
    &\A=\diag(\a), ~~ a_{i}\in\{0,1\}, ~~ \sum_{i=1}^N a_i \leq K. \notag
\end{align}

We next show that maximizing $\lambda_{\min}(\B)$ in (\ref{eq:max_lambda_min}) is equivalent to minimizing the worst case of MMSE.

\begin{proposition}
    Maximizing $\lambda_{\min}(\B)$ in (\ref{eq:max_lambda_min}) minimizes the upper bound of MSE between original signal $\x$ and reconstructed signal $\hx$.
\end{proposition}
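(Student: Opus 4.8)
The plan is to derive a closed-form expression for the reconstruction error, split the induced MSE into a deterministic bias term plus a noise-variance term, and then bound both so that the only factor depending on the sampling matrix $\H$ (equivalently on $\cS$ or $\a$) is $1/\lambda_{\min}(\B)$.

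First I would substitute the observation model $\y=\H\x+\n$ into the GLR solution $\hx=\B^{-1}\H^\top\y$ of \eqref{eq:linear_equation}, use $\H^\top\H=\A$, and invoke the identity $\B^{-1}\A-\I=\B^{-1}(\A-\B)=-\mu\,\B^{-1}\L$ to obtain $\hx-\x=-\mu\,\B^{-1}\L\x+\B^{-1}\H^\top\n$. Taking the expectation over zero-mean noise with $\mathbb{E}[\n\n^\top]=\sigma^2\I$, the cross term vanishes and
\[
\mathrm{MSE}=\mathbb{E}\|\hx-\x\|_2^2=\mu^2\,\|\B^{-1}\L\x\|_2^2+\sigma^2\,\mathrm{tr}\!\big(\B^{-2}\A\big),
\]
where the first summand is the squared bias and the second is the variance.

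Next I would bound the two summands separately. Since $\B$ is PD (shown after \eqref{eq:linear_equation}), $\|\B^{-1}\|_2=1/\lambda_{\min}(\B)$, so the bias obeys $\mu^2\|\B^{-1}\L\x\|_2^2\le \mu^2\|\L\x\|_2^2/\lambda_{\min}(\B)^2$; if an explicit smoothness dependence is wanted, $\|\L\x\|_2^2=\x^\top\L^2\x\le\theta_N\,\x^\top\L\x$ can be used. For the variance, $\A=\diag(\a)$ has $a_i\in\{0,1\}$ and $\sum_i a_i=|\cS|\le K$, hence $\mathrm{tr}(\B^{-2}\A)=\sum_{i\in\cS}[\B^{-2}]_{ii}\le|\cS|\,\|\B^{-2}\|_2\le K/\lambda_{\min}(\B)^2$. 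Combining,
\[
\mathrm{MSE}\;\le\;\frac{\mu^2\|\L\x\|_2^2+\sigma^2 K}{\lambda_{\min}(\B)^2}.
\]

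Finally I would observe that the numerator $\mu^2\|\L\x\|_2^2+\sigma^2 K$ does not depend on the sampling matrix, so the right-hand side is a strictly decreasing function of $\lambda_{\min}(\B)$; thus a maximizer of \eqref{eq:max_lambda_min} is exactly a minimizer of this MSE upper bound, proving the proposition. The one delicate point is the variance bound: the diagonal entries $[\B^{-2}]_{ii}$ depend on $\cS$, so they must be bounded uniformly by $\|\B^{-2}\|_2$ in order to confine all $\cS$-dependence of the final bound to $\lambda_{\min}(\B)$ alone; the remaining manipulations are routine linear algebra.
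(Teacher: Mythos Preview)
Your argument is correct, but it follows a genuinely different route from the paper. The paper works \emph{per realization}: it introduces an $N$-dimensional noise extension $\tn$ with $\H^\top\n=\H^\top\H\tn$, rewrites $\hx=(\I-\mu\B^{-1}\L)(\x+\tn)$, and then bounds $\|\hx-\x\|_2$ by the triangle inequality and submultiplicativity to get $\|\hx-\x\|_2\le \mu\|\B^{-1}\|_2\|\L(\x+\tn)\|_2+\|\tn\|_2$, after which $\|\B^{-1}\|_2=1/\lambda_{\min}(\B)$ finishes the claim. You instead take the expectation over noise, obtain the exact bias--variance split $\mathrm{MSE}=\mu^2\|\B^{-1}\L\x\|_2^2+\sigma^2\mathrm{tr}(\B^{-2}\A)$, and bound each piece by $1/\lambda_{\min}(\B)^2$. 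Your approach yields a statistically natural bound on the \emph{expected} MSE with all $\cS$-dependence cleanly isolated in $\lambda_{\min}(\B)$, at the cost of invoking the i.i.d.\ noise assumption; the paper's bound is distribution-free and holds pathwise, but the additive $\|\tn\|_2$ term there is not modulated by $\lambda_{\min}(\B)$, so the dependence on the sampling set is less sharply captured. Both arguments are valid and short; yours is arguably the cleaner justification of the stated proposition.
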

\begin{proof}
    Reconstructed signal $\hx$ from a noisy sampled signal $\y$ using (\ref{eq:linear_equation}) is
    \begin{align}
        \hx&=(\H^\top\H+\mu\L)^{-1}\H^\top\y \notag\\
        &=(\H^\top\H+\mu\L)^{-1}\H^\top(\H\x+\n) \\
        &=(\H^\top\H+\mu\L)^{-1}\H^\top\H(\x+\tn)\notag
    \end{align}
    where $\tn\in\sR^N$ is an $N$-dimensional additive noise.
    Denote by $\B=\A+\mu\L=\H^\top\H+\mu\L$. We have
    \begin{equation}
        \hx=\B^{-1}(\B-\mu\L)(\x+\tn)
           =(\I-\mu\B^{-1}\L)(\x+\tn) \label{eq:mse_proof2}
    \end{equation}

    To minimize MSE between original signal $\x$ and reconstructed signal $\hx$ is equivalent to minimizing $\|\hx-\x\|_2$.
    \begin{align}
       \notag \|\hx-\x\|_2&=\|(\I-\mu\B^{-1}\L)(\x+\tn)-\x\|_2 \\\notag
        &=\|-\mu\B^{-1}\L(\x+\tn)+\tn\|_2 \\
        &\le\|-\mu\B^{-1}\L(\x+\tn)\|_2+\|\tn\|_2 \\\notag
        &\le\mu\|\B^{-1}\|_2\|\L(\x+\tn)\|_2+\|\tn\|_2
    \end{align}

    Since $\B$ is a symmetric PD matrix, $\|\B^{-1}\|_2=\frac{1}{\lambda_{\min}(\B)}$.
    Thus, maximizing $\lambda_{\min}(\B)$ minimizes the upper bound of MSE, given fixed $\mu$, $\L$ and $\x$.
\end{proof}

\subsection{Maximizing the lower bound of $\lambda_{\min}(\B)$}
\label{subsec:max_lower_bound}
Solving (\ref{eq:max_lambda_min}) directly is a challenging combinatorial optimization problem.
To begin with, for large graphs, computing $\lambda_{\min}(\B)$ given $\a$ using a state-of-the-art numerical linear algebra algorithm like the Lanczos algorithm \cite{parlett1998symmetric} or the LOBPCG algorithm \cite{lobpcg} is still computation-expensive.
Moreover, we need to compute $\lambda_{\min}(\B)$ for exponential number of candidate sampling vectors $\a$ where $\sum_{i=1}^N a_i = K$.

In order to study $\lambda_{\min}(\B)$ \textit{without} performing expensive eigen-pair computation, we design an efficient algorithm based on \textit{Gershgorin Circle Theorem} (GCT) \cite{horn1990matrix}.
We first describe GCT below:
\begin{theorem}[Gershgorin Circle Theorem]
    Let $\X$ be a complex $n\times n$ matrix with entries $x_{ij}$. For $i\in\{1, ..., n\}$, let $R_i=\sum_{j\ne i}|x_{ij}|$ be the sum of the absolute values of the off-diagonal entries in the $i$-th row. Consider the $n$ Gershgorin discs
    \begin{equation}
        \Psi_i(x_{ii}, R_i)=\{z\in\sC~|~|z-x_{ii}|\le R_i\},
        ~~ i \in \{1, \ldots, n\}.
        \label{eq:gct}
    \end{equation}
    where $x_{ii}$ and  $R_i$ are the center and radius of disc $i$, respectively.
    Each eigenvalue of $\X$ lies within at least one Gershgorin disc $\Psi_i(x_{ii}, R_i)$.
    \label{th:gct}
\end{theorem}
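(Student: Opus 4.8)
The plan is to argue directly from the eigenvalue equation, using the coordinate of the eigenvector with largest modulus to convert a single row of $\X\bv=\lambda\bv$ into the desired disc inclusion. First I would let $\lambda$ be an arbitrary eigenvalue of $\X$ and fix a corresponding eigenvector $\bv\in\sC^n$, $\bv\neq\mathbf{0}$, so that $\X\bv=\lambda\bv$. Since $\bv$ is nonzero, I would choose an index $i$ attaining $|v_i|=\max_{1\le j\le n}|v_j|>0$. This choice is the heart of the argument: it guarantees $|v_j|/|v_i|\le 1$ for every $j$, which is exactly what makes the subsequent estimate collapse to the radius $R_i$.

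Next I would write out the $i$-th coordinate of $\X\bv=\lambda\bv$, namely $\sum_{j=1}^n x_{ij}v_j=\lambda v_i$, and move the diagonal term to the left to obtain $(\lambda-x_{ii})v_i=\sum_{j\neq i}x_{ij}v_j$. Taking absolute values on both sides, then applying the triangle inequality and the bound $|v_j|\le|v_i|$, gives $|\lambda-x_{ii}|\,|v_i|\le\sum_{j\neq i}|x_{ij}|\,|v_j|\le|v_i|\sum_{j\neq i}|x_{ij}|=|v_i|\,R_i$. Dividing through by $|v_i|>0$ yields $|\lambda-x_{ii}|\le R_i$, i.e.\ $\lambda\in\Psi_i(x_{ii},R_i)$. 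Since $\lambda$ was an arbitrary eigenvalue, each eigenvalue of $\X$ lies in at least one Gershgorin disc, which is the claim.

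I expect no genuine obstacle beyond the one subtlety already noted: the index $i$ must be selected as (an) argmax of $|v_j|$ \emph{before} invoking the triangle inequality — choosing an arbitrary row would leave an uncontrolled factor $|v_j|/|v_i|$ that need not be at most $1$, and the bound would fail. The division by $|v_i|$ is legitimate precisely because a maximal component of a nonzero vector is strictly positive. Note also that the argument uses nothing about the spectral structure of $\X$ (no symmetry, normality, or diagonalizability), so the result holds for an arbitrary complex $n\times n$ matrix as stated.
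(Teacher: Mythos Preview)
Your argument is correct and is the standard proof of Gershgorin's theorem. Note, however, that the paper does not supply its own proof of this statement: the theorem is quoted as a classical result with a citation to Horn and Johnson, so there is nothing in the paper to compare your approach against. Your write-up would serve perfectly well as the omitted proof.
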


For our specific real-valued coefficient matrix $\B$, each real eigenvalue $\lambda$ lies within at least one \textit{Gershgorin disc} $\Psi_i(b_{ii},R_i)$ with disc center $b_{ii}$ and radius $R_i$, \textit{i.e.},
\begin{equation}
    b_{ii} - R_i \le \lambda\le b_{ii} + R_i,
    \label{eq:disk}
\end{equation}
where $R_i=\sum _{j\ne i} |b_{ij}|=\mu\sum _j w_{ij}=\mu d_{i}$, and $d_i$ is the degree of node $i$.
The third equality is true since there are no self-loops in $\cG$.
Center of disc $i$ is $b_{ii} = \mu d_i+a_{i}$.

Using GCT, we can compute the lower bound of $\lambda_{\min}(\B)$ by examining left-ends of Gershgorin discs.
Specifically, the lower bound of $\lambda_{\min}(\B)$ is computed as the smallest left-end of all Gershgorin discs, \ie,
\begin{equation}
    \min_i \{b_{ii} - R_i\} = \min_i a_{i}
    \leq \lambda_{\min}(\B).
    \label{eq:lower_bound1}
\end{equation}

For intuition, we examine the two extreme sampling cases.
One extreme case is when no nodes are sampled, \ie, $\cS=\varnothing$, and both the lower bound and $\lambda_{\min}$ of $\B =\mu\L$ are $0$.
The other extreme case is when all nodes are sampled, \ie, $\cS=\cV$, and both the lower bound and $\lambda_{\min}$ of $\B=\I+\mu\L$ are $1$.
Note that in both extreme cases, the smallest Gershgorin disc left-end (lower bound) is also exactly $\lambda_{\min}$.
The observation inspires us to maximize the lower bound of $\lambda_{\min}(\B)$ instead of maximizing $\lambda_{\min}(\B)$ directly.

In general, the tightness of the Gershgorin lower bound of $\lambda_{\min}$ does not hold for any sampling sets. In order to tighten and maximize the lower bound of $\lambda_{\min}(\B)$, we introduce two basic operations to manipulate Gershgorin discs:

\vspace{0.1in}
\noindent
\textbf{Disc Shifting}:
The first operation is \textit{disc shifting} via sampling.
In (\ref{eq:disk}), the left-end of the $i$-th Gershgorin disc $\Psi_i$ in matrix $\B$ is $b_{ii} - R_i = a_{i}$.
When node $i$ is sampled, we have
\begin{equation}
    a_i: 0\rightarrow 1,~~~i\in\cS
\end{equation}
The corresponding left-end of disc $\Psi_i$ shifts from $0$ to $1$.

\vspace{0.1in}
\noindent
\textbf{Disc Scaling}:
The second operation is \textit{disc scaling} via similarity transformation.
For $0<K<N$, there is at least one $a_i = 0$, $i \in \mathcal{S}^c$, and the lower bound of $\lambda_{\min}(\B)$ in \eqref{eq:lower_bound1} is always $0$.
We introduce disc scaling to pull left-ends of unsampled discs away from $0$, resulting in a tighter lower bound for $\lambda_{\min}(\B)$.
Specifically, disc scaling is defined as
\begin{equation}
    \C=\bS\B\bS^{-1},
    \label{eq:similar_trans}
\end{equation}
where $\bS$ is a chosen invertible square matrix, named \textit{scaling matrix}. Matrix $\C$ is the similarity transformation of the square matrix $\B$ and shares the same eigenvalues with $\B$ (but with different eigenvectors) \cite{varga04}.
Thus, we can examine Gershgorin disc left-ends of transformed matrix $\C$ instead of original $\B$.

\begin{figure}[!t]
\centering
\vspace{20pt}
\includegraphics[width=0.9\linewidth]{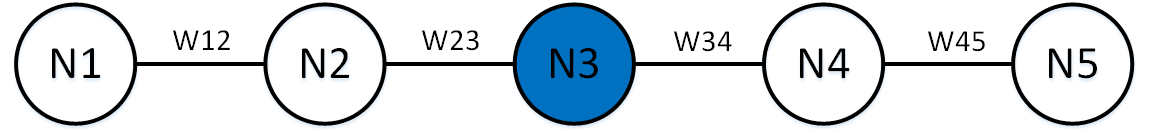}
\caption{An illustrative example of a $5$-node path  graph.}
\label{fig:nodes_example}
\end{figure}

To simplify the design of $\bS$, we employ a PD diagonal matrix $\bS=\diag(\s)$, $s_i > 0$, $\forall i \in \{1, \ldots, N\}$, to scale Gershgorin discs of $\B$, so that left-ends of Gershgorin discs of resulting transformed $\C$ can be moved right.
Specifically, the $i$-th diagonal term $s_i$ of $\bS$ in (\ref{eq:similar_trans}) and its corresponding element $s_i^{-1}$ in $\bS^{-1}$ are used to scale the radius $R_i$ of $\Psi_i$ and the radii of its neighbors' discs $\Psi_j$ respectively, where $j \in \mathcal{N}_i = \{j ~|~ w_{ij} > 0\}$.
For example, if a scalar $s_i>1$ is used to \textit{expand} $R_i$, then its neighbors' discs are \textit{shrunk} with $s_i^{-1}<1$.
Since $s_i$ is always offset by $s_i^{-1}$ on the main diagonal, the center $b_{ii}$ of disc $\Psi_i$ is unchanged.

\vspace{0.1in}
\noindent \textit{5-node Path Graph Example:} To illustrate the two basic operations (shifting and scaling) on Gershgorin discs, as an example we consider a $5$-node path graph and start by sampling node $3$, as shown in Fig.~\ref{fig:nodes_example}. Assuming $\mu=1$, the graph's coefficient matrix $\B$ with $(3,3)$-th entry updated is shown in Fig.\;\ref{fig:adj_matrix_s0}.
The left-end of disc $\Psi_3$ is shifted from $0$ to $1$, as shown in Fig.\;\ref{fig:disk_e1}.
After disc $\Psi_3$ is shifted, we apply a scalar $s_3>1$ on the third row of $\B$ to expand radius and center of disc $\Psi_3$ in Fig.~\ref{fig:adj_matrix_s1} and \ref{fig:disk_e2}.
Afterwords, its neighbors' discs $\Psi_2$ and $\Psi_4$ are shrunk when applying $s_3^{-1}<1$ on the third column of $\B$, as shown in Fig.~\ref{fig:adj_matrix_s2} and \ref{fig:disk_e3}. Note that $s_3$ is offset by $s_3^{-1}$ on the main diagonal of  disc $\Psi_3$, so its  center $b_{33}=1+d_3$ is unchanged.

\vspace{0.1in}
With the above two operations, we can select a sampling set and \textit{shift} left-ends of the corresponding Gershgorin discs from $0$ to $1$.
At the same time, we \textit{scale} radii of all Gershgorin discs to align their left-ends, resulting in a tighter lower bound of $\lambda_{\min}(\B)$.
Given a graph $\cG$ and a sampling budget $K$, maximizing the lower bound using the two disc operations (encoded in vectors $\a$ and $\s$) can be formulated as:
\begin{align}
    \max_{\a,\s}& ~ \min_{i \in \{1, \ldots, N\}} ~~
    c_{ii}-\sum_{j\ne i}|c_{ij}|
    \label{eq:max_lower_bound} \\
    \mbox{s.t.}~&\C=\bS \left( \A+\mu\L \right) \bS^{-1} \notag\\
    &\A=\diag(\a), ~~~ a_{i}\in\{0,1\}, ~~~ \sum_{i=1}^N a_i \leq K, \notag \\
    &\bS=\diag(\s), ~~~ s_i > 0. \notag
\end{align}
where the objective is to maximize the smallest left-end $\min_{i \in \{1, \ldots, N\}} ~ c_{ii}-\sum_{j\ne i}|c_{ij}|$ of all Gershgorin discs of similarity-transformed matrix $\C$ of original $\A + \mu \L$.
We call this formulated optimization  (\ref{eq:max_lower_bound}) the \textit{primal problem}.

\begin{figure}[!t]
\centering
\subfloat[]{
\label{fig:adj_matrix_s0}
\includegraphics[width=0.31\linewidth]{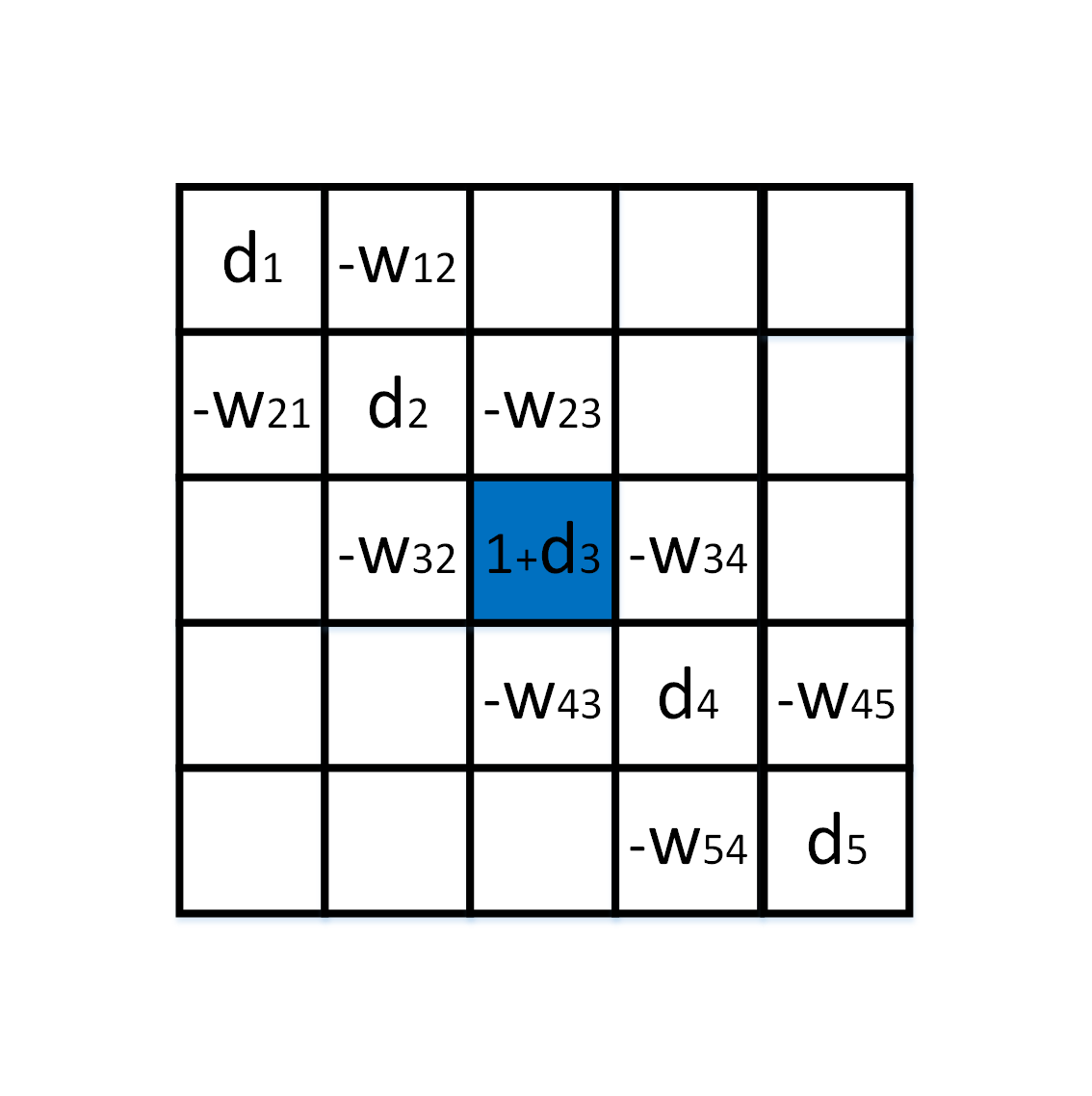}}
\subfloat[]{
\label{fig:adj_matrix_s1}
\includegraphics[width=0.32\linewidth]{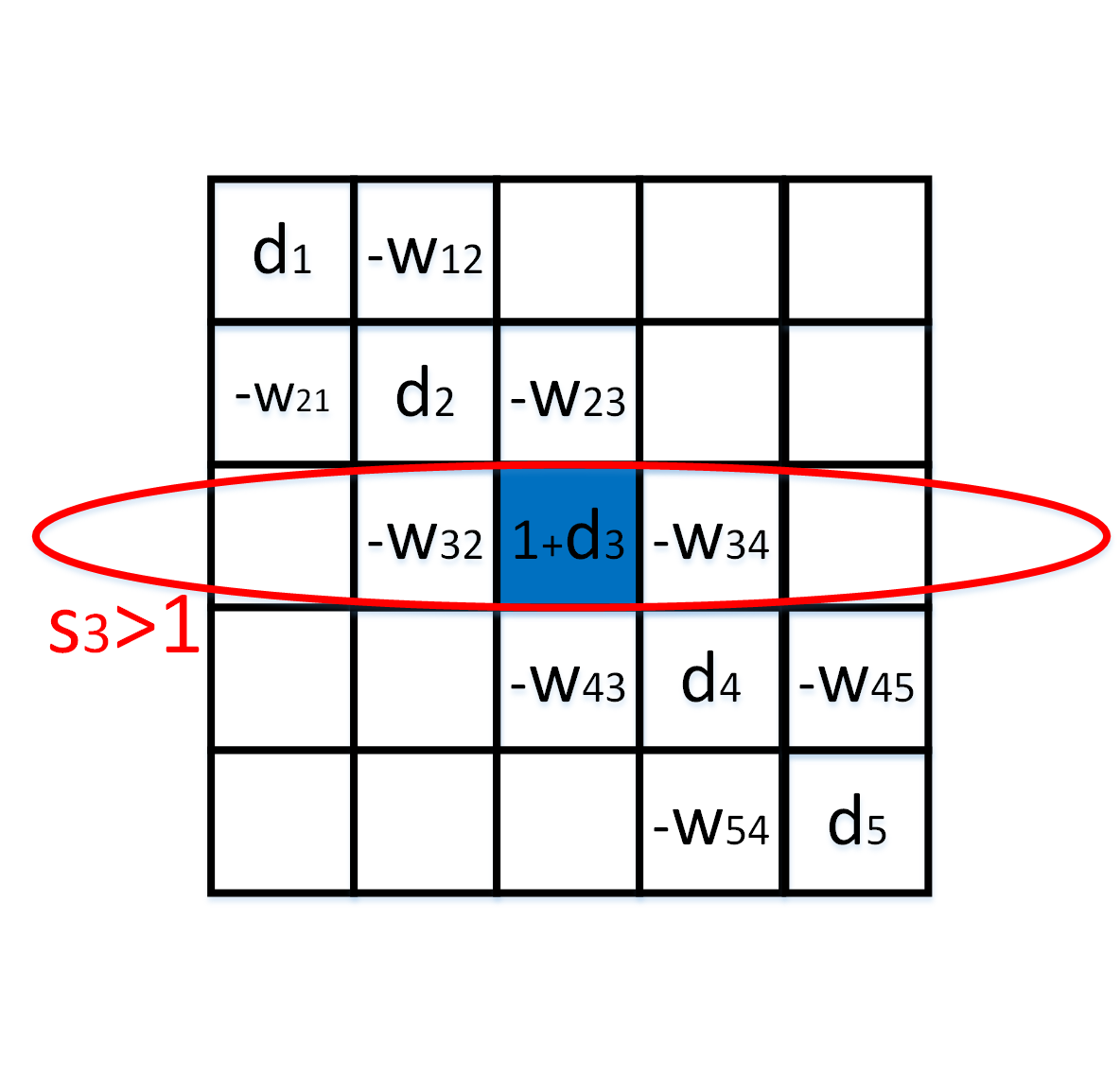}}
\subfloat[]{
\label{fig:adj_matrix_s2}
\includegraphics[width=0.32\linewidth]{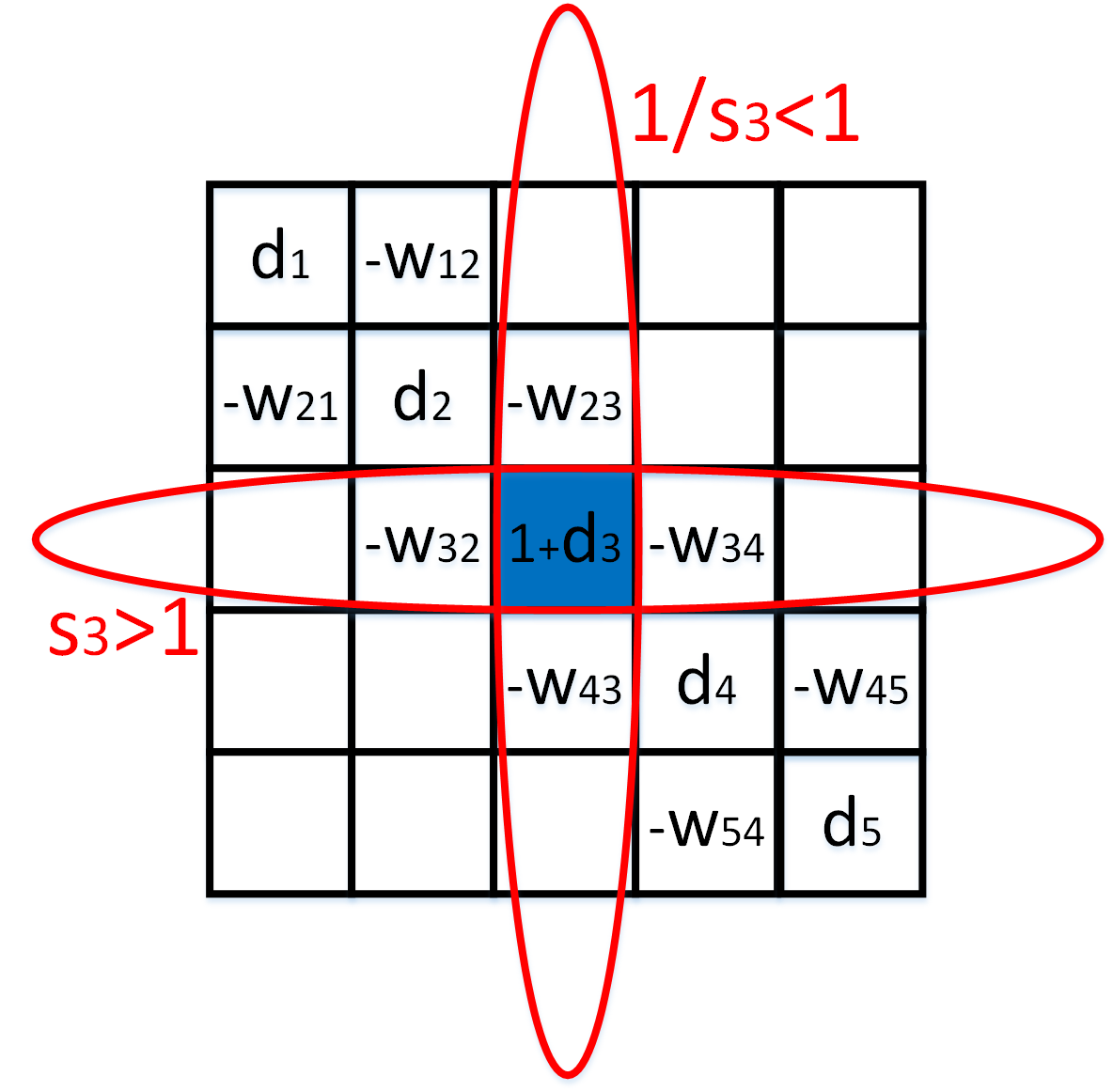}} \\
\subfloat[]{
\label{fig:disk_e1}
\includegraphics[width=0.32\linewidth]{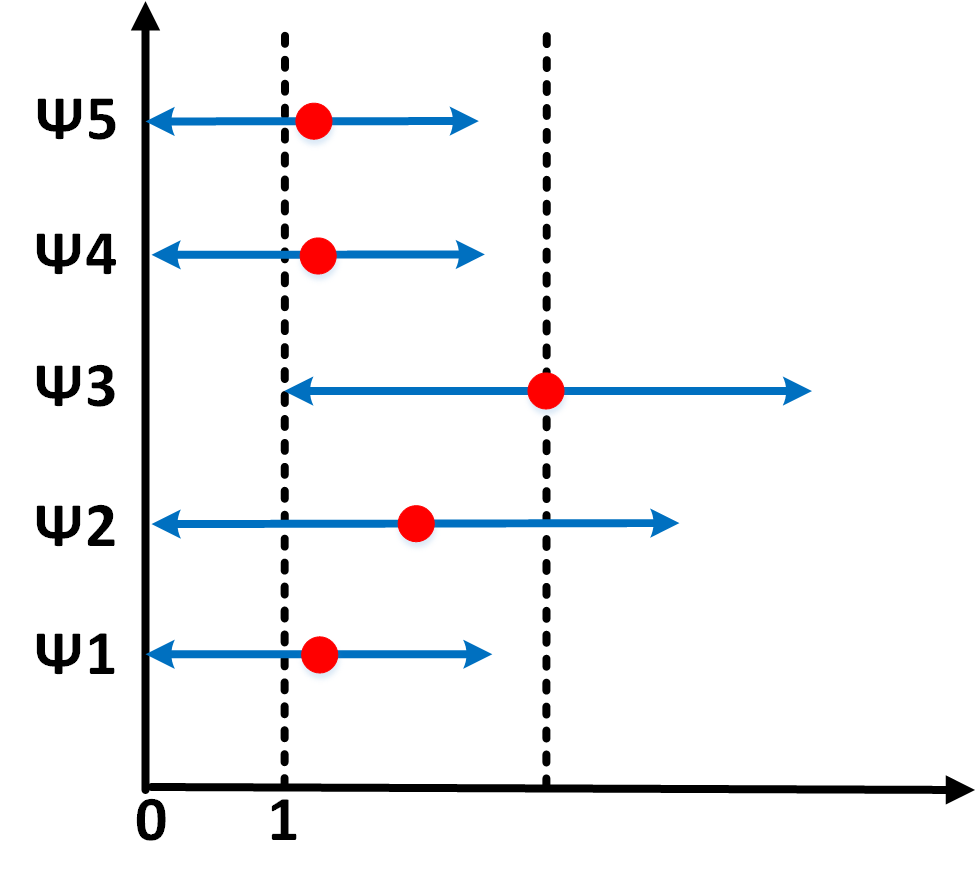}}
\subfloat[]{
\label{fig:disk_e2}
\includegraphics[width=0.32\linewidth]{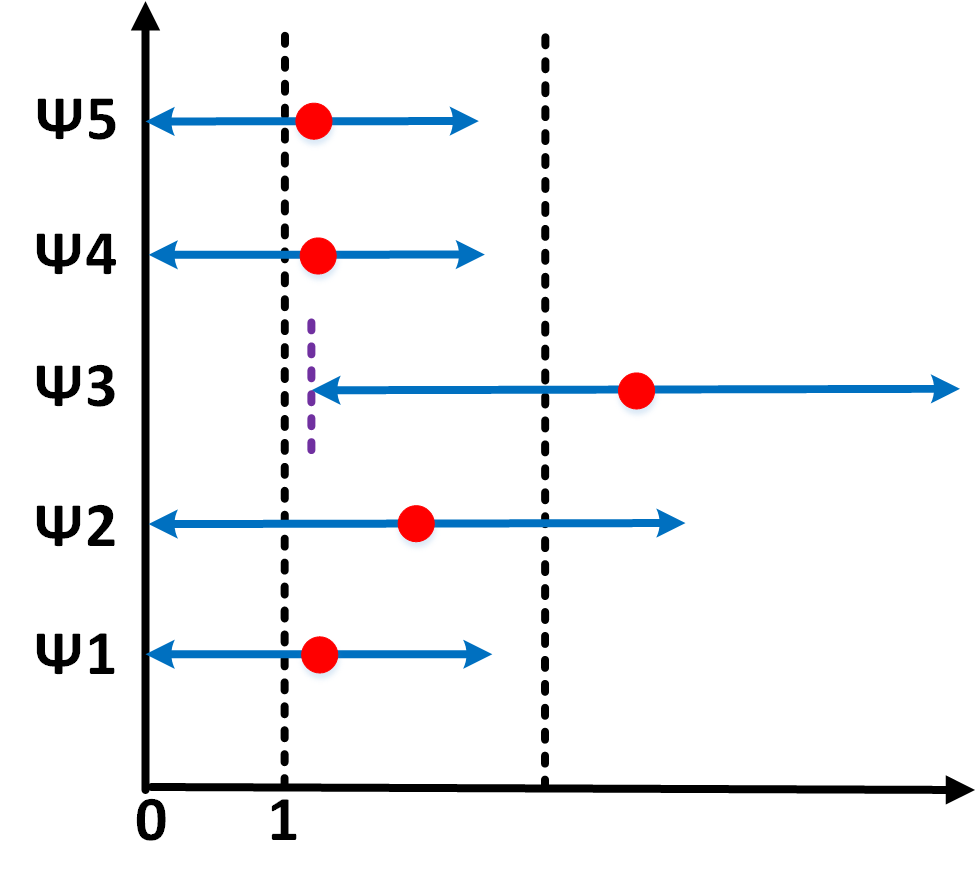}}
\subfloat[]{
\label{fig:disk_e3}
\includegraphics[width=0.32\linewidth]{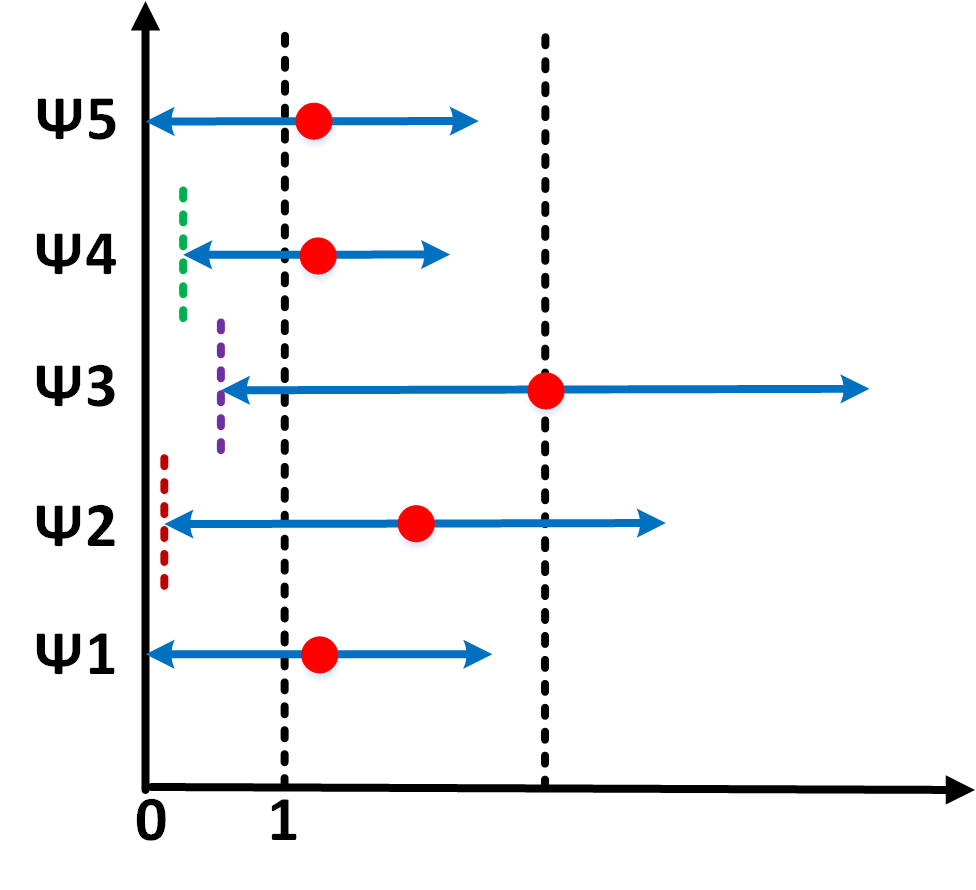}}
\caption{An illustration of disc operations. (a) $\&$ (d) Shifting disc $\Psi_3$. (b) $\&$ (e) Shifting and expanding disc $\Psi_3$ with $s_3>1$. (c) $\&$ (f) Shifting and expanding disc $\Psi_3$ with $s_3>1$ and $s_3^{-1}<1$. The center of $\Psi_3$ is unchanged. The radii of disc $\Psi_2$ and $\Psi_4$ are shrunk because $s_3^{-1}<1$. In (a), (b) and (c), we show disc operations on coefficient matrices $\B$. In (d), (e) and (f), red dots and blue arrows represent disc centers and radii respectively.
}
\label{fig:scale_op_example}
\end{figure}

\subsection{Formulating the Dual Problem}
The primal problem \eqref{eq:max_lower_bound} is difficult to solve because of the combinatorial nature of $a_i$'s and the max-min objective.
Instead, we formulate a \textit{dual problem}\footnote{Though the dual problem is not exactly the same definition as classical optimization setups in linear programming \cite{murty1983linear} and convex programming \cite{convex_optimization}, the essence of exchanging the roles of objective function and constraints to obtain a related optimization problem is identical.} \eqref{eq:disc_alignment} that is essentially a rewriting of \eqref{eq:max_lower_bound} with the objective and the constraint on $\sum_{i=1}^N a_i$ swapping roles.
Specifically, given a graph $\cG$ and a lower bound target $T$, \eqref{eq:disc_alignment} minimizes the number of sampled nodes subject to a constraint to lower-bound all left-ends:
\begin{align}
    \min_{\a,\s}&~~\sum_{i=1}^N a_i \label{eq:disc_alignment} \\
    \mbox{s.t.}~&\C=\bS \left( \A + \mu \L \right) \bS^{-1}, ~~~ c_{ii}-\sum_{j\ne i}|c_{ij}|\ge T, ~~ \forall i \notag\\
    &\A=\diag(\a), ~~~~~ a_{i}\in\{0,1\}, \notag \\
    &\bS\;=\diag(\s), ~~~~~ s_i > 0. \notag
\end{align}
Instead of a max-min criterion, the objective in \eqref{eq:disc_alignment} is now linear in $a_i$'s.

Optimal solutions to the primal problem (\ref{eq:max_lower_bound}) and the dual problem (\ref{eq:disc_alignment})
are related via the following proposition:

\begin{proposition}
If $T$ is chosen such that there exists at least one optimal solution $(\ha, \hs)$ to the dual problem (\ref{eq:disc_alignment}) satisfying $\sum_i \hat{a}_i = K$, then there exists one optimal solution to the dual problem (\ref{eq:disc_alignment}) at lower bound target $T$ that is also an optimal solution to the primal problem (\ref{eq:max_lower_bound}).
\label{prop:primal_dual}
\end{proposition}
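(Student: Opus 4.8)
The plan is to establish the equivalence by a two‑way comparison between the optimal objective values of the two problems, exploiting the monotonicity of both the dual objective in $T$ and the primal objective in the sampling budget. Let me write $g(T)$ for the optimal value of the dual problem \eqref{eq:disc_alignment} at target $T$, i.e. the minimum number of samples needed to push all Gershgorin disc left‑ends of $\C$ to at least $T$, and let $f(K)$ denote the optimal value of the primal problem \eqref{eq:max_lower_bound} with budget $K$, i.e. the largest achievable smallest left‑end. The key structural fact I would record first is that any feasible pair $(\a,\s)$ for the dual at target $T$ is automatically feasible for the primal at budget $K' = \sum_i a_i$ and achieves a smallest left‑end of at least $T$; conversely any feasible $(\a,\s)$ for the primal at budget $K$ that attains a smallest left‑end $T^\star$ is feasible for the dual at target $T^\star$ using the same number of samples. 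Thus $f$ and $g$ are ``inverse'' to each other in the usual sense: $g(T) \le K \iff f(K) \ge T$, modulo the combinatorial subtleties at equality.

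First I would take the hypothesized optimal dual solution $(\ha,\hs)$ with $\sum_i \hat a_i = K$. By dual feasibility, $(\ha,\hs)$ makes every left‑end of $\C = \hbS(\hA+\mu\L)\hbS^{-1}$ at least $T$, so $(\ha,\hs)$ is a feasible point of the primal problem \eqref{eq:max_lower_bound} with budget exactly $K$, with primal objective value $\ge T$; hence $f(K) \ge T$. Second, I would argue the reverse inequality $f(K) \le T$ by contradiction: suppose there is a primal‑feasible pair $(\a^\sharp,\s^\sharp)$ with $\sum_i a^\sharp_i \le K$ whose smallest left‑end equals some $T' > T$. Then $(\a^\sharp,\s^\sharp)$ is feasible for the dual problem at the strictly larger target $T'$, and it uses at most $K$ samples, so $g(T') \le K$. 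But $g$ is nondecreasing in $T$ (raising the target can only require more, never fewer, samples — this needs a one‑line justification: any pair feasible at $T'$ is feasible at every $T \le T'$), and by optimality of $(\ha,\hs)$ we have $g(T) = K$. The delicate point is to rule out $g(T') = K$ as well while $T' > T$; this is where I would need to invoke either a strict‑monotonicity argument or, more carefully, the precise phrasing of the proposition — note the statement only claims that \emph{some} optimal dual solution at $T$ is primal‑optimal, so it suffices to show $f(K) = T$ and then exhibit $(\ha,\hs)$ as the witness. So really the core is just: $f(K) \ge T$ from the hypothesized dual solution, and $f(K) \le T$ from the fact that any primal solution beating $T$ would, by dual feasibility at its own (larger) left‑end value, contradict that $K$ samples were \emph{minimally} required to reach $T$.

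Concretely, the cleanest route is: (i) show $f(K)\ge T$ as above; (ii) suppose for contradiction $f(K) > T$, pick a primal optimizer $(\a^\sharp,\s^\sharp)$ with $\sum_i a_i^\sharp \le K$ and smallest left‑end $f(K)$; (iii) observe $(\a^\sharp,\s^\sharp)$ is dual‑feasible at target $f(K)$ with $\le K$ samples, hence $g(f(K)) \le K$; (iv) combine with monotonicity $g(T)\le g(f(K))$ to get $g(T)\le K$, which is consistent, so I instead need the sharper observation that since $f(K) > T$ one can slightly \emph{decrease} the number of samples while still clearing target $T$ — i.e. drop one sampled node and re‑scale, using the ``slack'' $f(K) - T > 0$ — contradicting $g(T) = K$. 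This ``remove a node and re‑scale'' step is the main obstacle: I must verify that if all left‑ends exceed $T$ strictly, then un‑sampling one node (changing one $a_i$ from $1$ to $0$, which lowers one disc's left‑end by exactly $1$ before scaling) can be compensated by an adjusted $\s$ so that all left‑ends still meet $T$, at least for $T$ in the relevant range $0 < T \le 1$ where the disc‑alignment mechanism operates; intuitively, the disc whose center drops can be shrunk via its own $s_i$ and its neighbors expanded, and the headroom $f(K)-T$ absorbs the perturbation. Once that lemma is in hand the contradiction is immediate, giving $f(K) = T$, and then $(\ha,\hs)$ — which is dual‑optimal at $T$ and achieves primal value $T = f(K)$ with exactly $K$ samples — is simultaneously a primal optimizer, which is the claim.
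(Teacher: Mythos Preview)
Your proposal contains a genuine gap: you set out to prove $f(K)=T$, but that equality is neither what the proposition asserts nor true in general. The hypothesis is only that \emph{some} $T$ has $g(T)=K$; since $g$ is a nondecreasing integer-valued step function, there is typically a whole interval of such $T$, and for any $T$ strictly below the supremum of that interval one has $f(K)>T$. Concretely, take $K=1$: with zero samples every disc left-end equals $0$, so $g(T)=1$ for every $T\in(0,f(1)]$; picking $T=f(1)/2$ gives $g(T)=K$ yet $f(K)=f(1)>T$. In this same example your ``remove a node and re-scale'' lemma fails outright: un-sampling the single node sends all $a_i$ to $0$, and then for \emph{any} diagonal $\bS$ the transformed matrix $\bS(\mu\L)\bS^{-1}$ still has $\lambda_{\min}=0$, so by Proposition~\ref{pr:yuji_bound} some left-end is $\le 0<T$. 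Scaling only redistributes slack among discs; it cannot manufacture a positive lower bound when the underlying matrix is singular. So the contradiction you are chasing does not exist.

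The paper's argument avoids this entirely by reversing the direction of the reduction. Rather than starting from the dual optimizer $(\ha,\hs)$ and trying to certify it as primal-optimal, it starts from a primal optimizer and shows it must be dual-optimal. The key step you already have is that any primal-feasible $(\a,\s)$ with $\sum_i a_i<K$ must have smallest left-end strictly below $T$ (else it would be dual-feasible at $T$ with fewer than $K$ samples, contradicting $g(T)=K$). Combined with the observation that $(\ha,\hs)$ is primal-feasible with objective $\ge T$, this forces any primal optimizer $(\a^\star,\s^\star)$ to use exactly $K$ samples and to achieve smallest left-end $\ge T$. But then $(\a^\star,\s^\star)$ is dual-feasible at target $T$ with $K=g(T)$ samples, hence dual-optimal. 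That pair---not necessarily $(\ha,\hs)$---is the witness the proposition asks for. Your framework with $f$ and $g$ is fine; you just need to drop the attempt to pin down $f(K)$ and instead transport a primal optimizer back into the dual.
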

\begin{proof}
For feasible solutions $(\a, \s)$ to the dual problem (\ref{eq:disc_alignment}) with $\sum_i a_i > K$, they are not feasible to the primal problem (\ref{eq:max_lower_bound}), and thus can be ignored. For each solution $(\a, \s)$ to the dual problem (\ref{eq:disc_alignment}) with $\sum_i a_i < K$ satisfying all constraints except $\min_i c_{ii} - \sum_{j\neq i} |c_{ij}| \geq T$, they must have $\min_i c_{ii} - \sum_{j\neq i} |c_{i,j}| < T$, otherwise $(\ha, \hs)$ would not be the optimal solution with $\sum_i \hat{a}_i = K$. Thus this solution $(\a, \s)$, while feasible in the primal, has a worse objective than $(\ha, \hs)$ in the primal problem. So the optimal solution to the primal problem (\ref{eq:max_lower_bound}) must be a solution to the dual problem (\ref{eq:disc_alignment}) with $\sum_i a_i = K$ (though not necessarily the one $(\ha, \hs)$ if multiple optimal solutions exist).
\end{proof}
Denote by $(\ha,\hs)$ an optimal solution to the dual problem (\ref{eq:disc_alignment}) given lower bound target $T$, and $\hK_T$ the corresponding objective function value. We can further prove that $\hK_T$ is non-decreasing with respect to $T$.
\begin{proposition}
    Let $\hK_{T}$ be the optimal objective function value to (\ref{eq:disc_alignment}) given lower bound $T$.
    Then, we have $\hK_{T_1}\ge \hK_{T_2}$, $\forall T_1\ge T_2$.
    \label{pr:non_decreasing}
\end{proposition}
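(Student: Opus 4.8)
The plan is to prove the statement by a feasible-set inclusion argument, which reduces the monotonicity of $\hK_T$ to the elementary fact that minimizing a fixed objective over a smaller set yields a value no smaller than minimizing it over a larger set. To set this up, fix $T_1 \ge T_2$ and let $\cF(T)$ denote the feasible set of the dual problem (\ref{eq:disc_alignment}) at lower-bound target $T$, i.e., the collection of pairs $(\a,\s)$ with $a_i \in \{0,1\}$, $s_i > 0$, $\C = \bS(\A + \mu\L)\bS^{-1}$, $\A = \diag(\a)$, $\bS = \diag(\s)$, and $c_{ii} - \sum_{j\ne i}|c_{ij}| \ge T$ for all $i$.

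The first step is to show $\cF(T_1) \subseteq \cF(T_2)$. This is where one must be slightly careful: I would point out explicitly that every constraint defining $\cF(T)$ other than the disc-alignment inequality ($\A = \diag(\a)$, $a_i \in \{0,1\}$, $\bS = \diag(\s)$, $s_i > 0$, and $\C = \bS(\A+\mu\L)\bS^{-1}$) does not depend on $T$. Hence if $(\a,\s) \in \cF(T_1)$, it satisfies all those $T$-independent constraints, and moreover $c_{ii} - \sum_{j\ne i}|c_{ij}| \ge T_1 \ge T_2$ for every $i$, so the alignment constraint at level $T_2$ holds as well; therefore $(\a,\s) \in \cF(T_2)$. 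The second step is then immediate: since the objective $\sum_{i=1}^N a_i$ is the same function in both instances and both are minimizations,
\[
  \hK_{T_1} = \min_{(\a,\s)\in\cF(T_1)} \sum_{i=1}^N a_i \;\ge\; \min_{(\a,\s)\in\cF(T_2)} \sum_{i=1}^N a_i = \hK_{T_2},
\]
because a minimum taken over a subset cannot be smaller than the minimum taken over the superset.

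Finally I would dispose of a boundary subtlety. If $T_1$ exceeds the largest common left-end attainable by any $(\a,\s)$, then $\cF(T_1) = \varnothing$ and we read $\hK_{T_1} = +\infty$ (or, equivalently, $N+1$, since at most $N$ nodes can be sampled), for which the inequality is trivial. When $T$ lies in the attainable range, $\cF(T)$ is nonempty — for instance, sampling every node and taking $\bS = \I$ gives all left-ends equal to $1$ — and since $\sum_{i=1}^N a_i$ is integer-valued and bounded below by $0$, the minimum $\hK_T$ is attained, so the argument above is well-posed. There is no substantive obstacle here: the only points requiring care are verifying that the non-alignment constraints are genuinely independent of $T$ (so the inclusion $\cF(T_1)\subseteq\cF(T_2)$ is exact) and handling the infeasible-$T$ case with a consistent convention.
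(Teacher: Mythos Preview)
Your proof is correct and is essentially the same argument as the paper's: both observe that any $(\a,\s)$ feasible at target $T_1$ remains feasible at the smaller target $T_2$, so the optimal value at $T_1$ is a feasible value at $T_2$ and hence dominates $\hK_{T_2}$. You have simply made the feasible-set inclusion $\cF(T_1)\subseteq\cF(T_2)$ explicit and added a (helpful but inessential) remark on the infeasible-$T$ boundary case, which the paper omits.
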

\begin{proof}
    Since $T_1\ge T_2$, $\hK_{T_1}$ is a feasible objective function value to (\ref{eq:disc_alignment}) given lower bound $T_2$. Hence, $\hK_{T_1}\ge \hK_{T_2}$ must be satisfied, since $\hK_{T_2}$ is the optimal (minimum) objective function value to (\ref{eq:disc_alignment}) given $T_2$.
\end{proof}

Proposition~\ref{prop:primal_dual} and Proposition~\ref{pr:non_decreasing} provide us with an important insight:

\begin{quote}
\textit{
Instead of solving the primal problem  (\ref{eq:max_lower_bound}) directly,  we can solve the dual problem (\ref{eq:disc_alignment}) iteratively for different $T\in(0,1)$, where we employ a binary search to seek the largest $\hT$ such that the objective value of (\ref{eq:disc_alignment}) is $K$ exactly.}
\end{quote}
When this largest $\hT$ is found, we achieve an optimal solution $(\ha,\hs)$ to (\ref{eq:max_lower_bound}), and $\hT$ is the corresponding optimal objective value.
When solving the dual problem (\ref{eq:disc_alignment}), we select the smallest sampling set and scale radii of all Gershgorin discs, so that the discs' left-ends (the lower bound of $\lambda_{\min}(\B)$) is no smaller than target $T$.
Hence, we refer to the dual problem (\ref{eq:disc_alignment}) also as a \textit{disc alignment} problem.

When left-ends of all discs are aligned to $T$ as closely as possible, not only is the smallest left-end of Gershgorin discs a lower bound for $\lambda_{\min}(\B)$, {as stated in GCT}, but we can also prove that the largest left-end gives an upper bound:

\begin{proposition}
    Let $\Gamma_i$ be the Gershgorin disc from the $i$-th row of matrix $\C=\bS\B\bS^{-1}$, where $\bS$ is a non-singular diagonal matrix satisfying $\bS=\diag(\s)$ where $\s > \mathbf{0}$.
    Denote by $\ell_i$ the left-end of $\Gamma_i$, we have
    \begin{equation}
        \min_i \ell_i \le\lambda_{\min}(\B)\le \max_i \ell_i
        \label{eq:eigenvalue_bound}
    \end{equation}
    In words, while the smallest left-end $\ell_i$ gives a lower bound for $\lambda_{\min}(\B)$, the largest left-end $\ell_i$ gives an upper bound.
    \label{pr:yuji_bound}
\end{proposition}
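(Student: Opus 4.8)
The plan is to get the lower inequality for free from the Gershgorin Circle Theorem and to prove the upper inequality by constructing an explicit test vector for the Rayleigh quotient of $\B$. For the lower bound, note that $\C=\bS\B\bS^{-1}$ is a similarity transformation, hence $\C$ and $\B$ share the same spectrum; applying Theorem~\ref{th:gct} to $\C$, every eigenvalue of $\B$ lies in some disc $\Gamma_i$, so in particular $\lambda_{\min}(\B)\ge \ell_i\ge \min_i\ell_i$ for whichever disc contains it.

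For the upper bound, the first step is structural. Because $\A$ is diagonal and $\L=\D-\W$ with $w_{ij}\ge 0$, the off-diagonal entries of $\B$ are non-positive, $b_{ij}=-\mu w_{ij}\le 0$ for $j\ne i$. Since $\bS=\diag(\s)$ with $\s>\mathbf{0}$, the transformed entries $c_{ij}=s_i b_{ij}s_j^{-1}$ are also non-positive for $j\ne i$, so $|c_{ij}|=-c_{ij}$ and the left-end of $\Gamma_i$ collapses to the $i$-th row sum of $\C$:
\[
\ell_i \;=\; c_{ii}-\sum_{j\ne i}|c_{ij}| \;=\; \sum_{j=1}^{N} c_{ij} \;=\; s_i\sum_{j=1}^{N} b_{ij}\,s_j^{-1}.
\]

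The second step is to evaluate the Rayleigh quotient of the symmetric PD matrix $\B$ at the test vector $\bv=\bS^{-1}\mathbf{1}$, \ie, $v_i=s_i^{-1}>0$. Using the identity above,
\[
\bv^{\top}\B\bv \;=\; \sum_{i} s_i^{-1}\Big(\sum_{j} b_{ij}s_j^{-1}\Big) \;=\; \sum_i \ell_i\, s_i^{-2},
\qquad
\bv^{\top}\bv \;=\; \sum_i s_i^{-2},
\]
so $\bv^{\top}\B\bv/\bv^{\top}\bv$ is a convex combination of the $\ell_i$'s with strictly positive weights $s_i^{-2}$, hence at most $\max_i\ell_i$. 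Since $\lambda_{\min}(\B)\le \bv^{\top}\B\bv/\bv^{\top}\bv$ by the variational characterization of the smallest eigenvalue, we conclude $\lambda_{\min}(\B)\le\max_i\ell_i$, which together with the lower bound gives \eqref{eq:eigenvalue_bound}.

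The one genuinely non-routine move is spotting the test vector: the sign structure of $\L$ is what turns each disc's left-end into a row sum of $\C$, and $\bS^{-1}\mathbf{1}$ is then exactly the vector whose Rayleigh quotient reduces to a weighted average of those row sums. Everything else---the similarity-invariance of the spectrum, the appeal to GCT, and the short algebra above---is standard. As a sanity check, the same weighted-average identity also re-proves $\min_i\ell_i\le\lambda_{\min}(\B)$, though we obtain that half more cheaply from GCT.
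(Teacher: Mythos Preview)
Your proof is correct, and it takes a genuinely different route from the paper's. Both arguments use the sign structure of $\B=\A+\mu\L$ to identify each disc left-end $\ell_i$ with the $i$-th row sum of $\C$, but they diverge on the upper bound. The paper shifts to $\tilde{\C}=-\C+t\I$, invokes the Perron--Frobenius theorem on this non-negative matrix, and then applies the Collatz--Wielandt inequality with the all-ones vector to bound the Perron root and hence $\lambda_{\min}(\C)$. You instead exploit the symmetry of $\B$ directly: the test vector $\bv=\bS^{-1}\mathbf{1}$ turns the Rayleigh quotient into the weighted average $\sum_i \ell_i s_i^{-2}\big/\sum_i s_i^{-2}$, which is at most $\max_i\ell_i$, and the variational principle finishes. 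Your approach is more elementary---no Perron--Frobenius machinery---and slightly more transparent, since the same weighted-average identity simultaneously re-derives the lower bound, as you note. The paper's argument, on the other hand, would generalize to the case where $\B$ is not symmetric but still has real spectrum, whereas yours is tied to the Rayleigh characterization. In the present context both are equally valid.
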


The proof of Proposition~\ref{pr:yuji_bound} is in the Appendix~\ref{ap:yuji}.
Note that $\lambda_{\min}(\B)$ can be exactly the lower bound $T$, if the left-ends of all Gershgorin discs are aligned at $T$ satisfying $\min_i \ell_i = \max_i \ell_i$.

\section{Sampling Set Selection Algorithm Development}
\label{sec:sampling}
Although, unlike the primal \eqref{eq:max_lower_bound}, the dual \eqref{eq:disc_alignment} has a linear objective, it is nonetheless difficult to solve optimally.
Instead of manipulating all discs simultaneously, we first define a coverage subset $\Omega_i$ for each node $i$---the maximum subset of discs whose left-ends can be aligned at or beyond $T$ by sampling only node $i$. We introduce an approximation algorithm to pre-compute each coverage subset $\Omega_i$ (\textit{Algorithm~\ref{al:1}}) in Section~\ref{subsec:disc_aligned_subset}.

Pre-computed coverage subsets lead to a coarse but intuitive reinterpretation of the dual \eqref{eq:disc_alignment} as the disc coverage problem, which we prove to be NP-hard via reduction from the famous \textit{set cover} (SC) problem in Section~\ref{subsec:disc_coverage}.
The reinterpretation enables us to derive a fast algorithm for the dual \eqref{eq:disc_alignment} based on a known SC approximation algorithm (\textit{Algorithm~\ref{al:2}}) in the combinatorial optimization literature in Section~\ref{subsec:disc_alignment}.

We further employ a binary search to seek the largest $\hT$ such that the objective of the dual \eqref{eq:disc_alignment} is $K$ exactly. The complete sampling algorithm (\textit{Algorithm~\ref{al:3}}) is proposed in Section~\ref{subset:binary_search}, which is summarized in Fig.~\ref{fig:bs_gda}.
Finally, we discuss the computational complexities of different graph sampling approaches in Section~\ref{subsec:complexity}.

\begin{figure}[!t]
\centering
\includegraphics[width=0.92\linewidth]{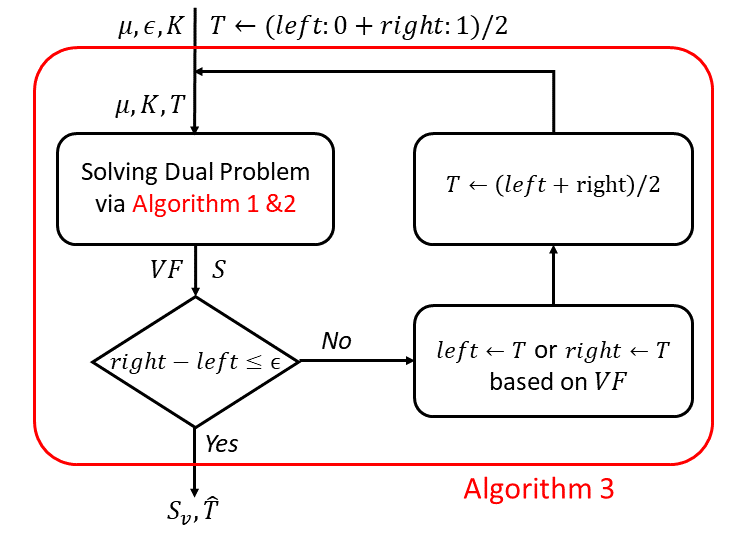}
\caption{The workflow of the complete sampling algorithm (Algorithm~\ref{al:3}). Algorithm~\ref{al:1} and \ref{al:2} solve the dual \eqref{eq:disc_alignment}. Binary search seeks the largest $\hT$ such that the objective of the dual \eqref{eq:disc_alignment} is $K$ exactly. }
\label{fig:bs_gda}
\end{figure}

\subsection{Coverage Subset of Each Sampled Node}
\label{subsec:disc_aligned_subset}
Before we approximately solve \eqref{eq:disc_alignment}---align all Gershgorin discs' left-ends at or beyond target $T\in(0,1)$ using as few sampled nodes as possible---we first define a \textit{coverage subset} $\Omega_i$ for each node $i$.
\begin{definition}[Coverage subset]
    A coverage subset $\Omega_i$ for each node $i$ is the optimal subset $\cX$ satisfying
    \begin{align}
        \max_{\s}&~~|\cX| \\
        \mbox{s.t.}~&\C=\bS \left( \A+\mu\L \right) \bS^{-1} \notag\\
        &\A=\diag(\a), ~~~ a_{i}=1, ~~~a_{j\ne i}=0 \notag \\
        &\bS=\diag(\s), ~~~ s_j > 0 \notag \\
        &\cX=\{j|~c_{jj}-\sum_{k\ne j} |c_{jk}|\ge T\}, ~~~\forall j,k\in\{1,\dots,N\}\notag
    \end{align}
    In words, a coverage subset $\Omega_i$ is the maximum subset of discs whose left-ends can be aligned at or beyond $T$ by sampling only node $i$.
    \label{def:coverage_subset}
\end{definition}

We propose an approximation algorithm to estimate each coverage subset $\Omega_i$.
As described in Section~\ref{subsec:max_lower_bound}, disc scaling operation can balance left-ends of discs in the local neighbourhood of a sampled node.
Starting from the sampled node $i$, we progressively employ disc scaling operations to align left-ends of discs at $T$ with increasing hops from the node $i$, as illustrated in Fig.~\ref{fig:align_op_general}. The algorithm stops when farther nodes cannot be aligned.

\begin{figure}[!t]
\centering
\includegraphics[width=0.7\linewidth]{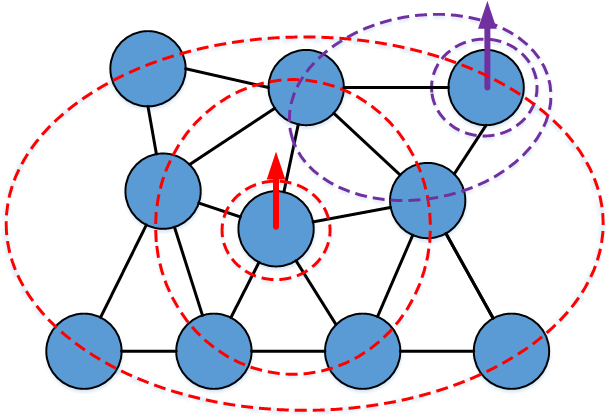}
\caption{Coverage subset estimation. Starting from each sampled node $i$ (marked by vertical arrows), we progressively employ disc scaling operations to align left-ends of discs at $T$ with increasing hops from node $i$. The algorithm stops when farther nodes cannot be aligned.}
\label{fig:align_op_general}
\end{figure}

Specifically, we first sample node $i$ (thus moving the corresponding disc $\Psi_i$'s center $b_{ii}$ from $\mu d_i$ to $1 + \mu d_i$, and $\Psi_i$'s left-end $a_{i}$ from $0$ to $1$).
We next apply scalar $s_i$ to \textit{expand} $\Psi_i$'s radius $R_i$ and align its left-end at exactly $T$.
Scalar $s_i$ must hence satisfy
\begin{equation}
    a_{i}+\mu \, \left(d_i-s_i\cdot\sum_{j\in \mathcal{N}_i}\frac{w_{ij}}{s_j}\right)=T,
    \label{eq:scale_function}
\end{equation}
where initially $s_j=1, \forall j$.
$\cN_i$ denotes the neighbors of node $i$ satisfying $w_{ij}>0$. Solving for $s_i$ in (\ref{eq:scale_function}), we obtain
\begin{equation}
    s_i=\frac{a_{i}+\mu \, d_i-T}{\mu \, \sum_{j\in \cN_i} \frac{w_{ij}}{s_j}}.
    \label{eq:scale_factor}
\end{equation}

Computing $s_i$ for sample node $i$ using \eqref{eq:scale_factor} implies that $s_i > 1$ (\textit{i.e.}, expansion of $\Psi_i$'s radius),
which means that node $i$'s neighbors' discs $\Psi_j$'s radii will \textit{shrink} due to $s_i^{-1}$.
Specifically, disc left-end $b_{jj} - R_j$ of an unsampled neighbor $j$'s disc (thus $a_{j}=0$) is now:
\begin{equation}
b_{jj} - R_j = a_{j} + \mu \left( d_j - s_j \cdot \sum_{k \in \mathcal{N}_j \setminus \{i\}} \frac{w_{jk}}{s_k}
- s_j \cdot \frac{w_{ji}}{s_i} \right).
\end{equation}

If a neighboring disc $\Psi_j$'s left-end remains smaller than $T$, then nothing further needs to be done.
However, if $\Psi_j$'s left-end is larger than $T$, then $j$ belongs to $\Omega_i$.
We then subsequently expand its radius to align its left-end at $T$ using \eqref{eq:scale_factor}.
This shrinks the disc radii of node $j$'s neighbors and so on; The sequence of nodes explored starting from sample $i$ is determined using \textit{Breadth First Search} (BFS) \cite{cormen2009introduction}.
Expansion factor $s_j$ decreases with hops away from node $i$.
Since we always expand a current disc ($s_j \ge 1$) leading to shrinking of neighboring discs (${s_j}^{-1} \le 1$) in each step, the left-end of each already scaled node remains larger than or equal to $T$.

Note that $1 \leq |\Omega_i| \leq N$ for any value of $T \in (0, 1)$.
When $T\rightarrow 0$, $|\Omega_i|\rightarrow N$, and we need to visit all $N$ nodes using BFS.
The worst-case time complexity for computing a single $\Omega_i$ is $\cO(N+M)$, where $M=|\cE|$.
Typically, sampling budget $K$ is much larger than $1$, and it is not necessary to compute $\Omega_i$ corresponding to a very small $T$.
To enable faster computation, we set a parameter $p$ to estimate $\Omega_i$ only within a $p$-hop neighborhood from node $i$.
Thus, the time complexity of estimating $\Omega_i$ within $p$ hops is $\cO(P+Q)$, where $P$ is the number of nodes within $p$ hops from node $i$, and $Q$ is the number of edges within $p$ hops from node $i$.
Assuming a maximum degree per node $d_{\max}$, we have $P\le Q\le d^p_{\max}$. $d^p_{\max}$ is achieved only if graph $\cG$ is a tree.
Since $p$ is typically set much smaller than the number of hops required to traverse the entire graph with BFS, $P\ll N$ and $Q\ll M$.
The procedure of estimating coverage subset $\Omega_i$ is sketched in Algorithm~\ref{al:1}.

\begin{figure}[!t]
\centering
\subfloat[]{
\label{fig:adj_matrix_s3}
\includegraphics[width=0.32\linewidth]{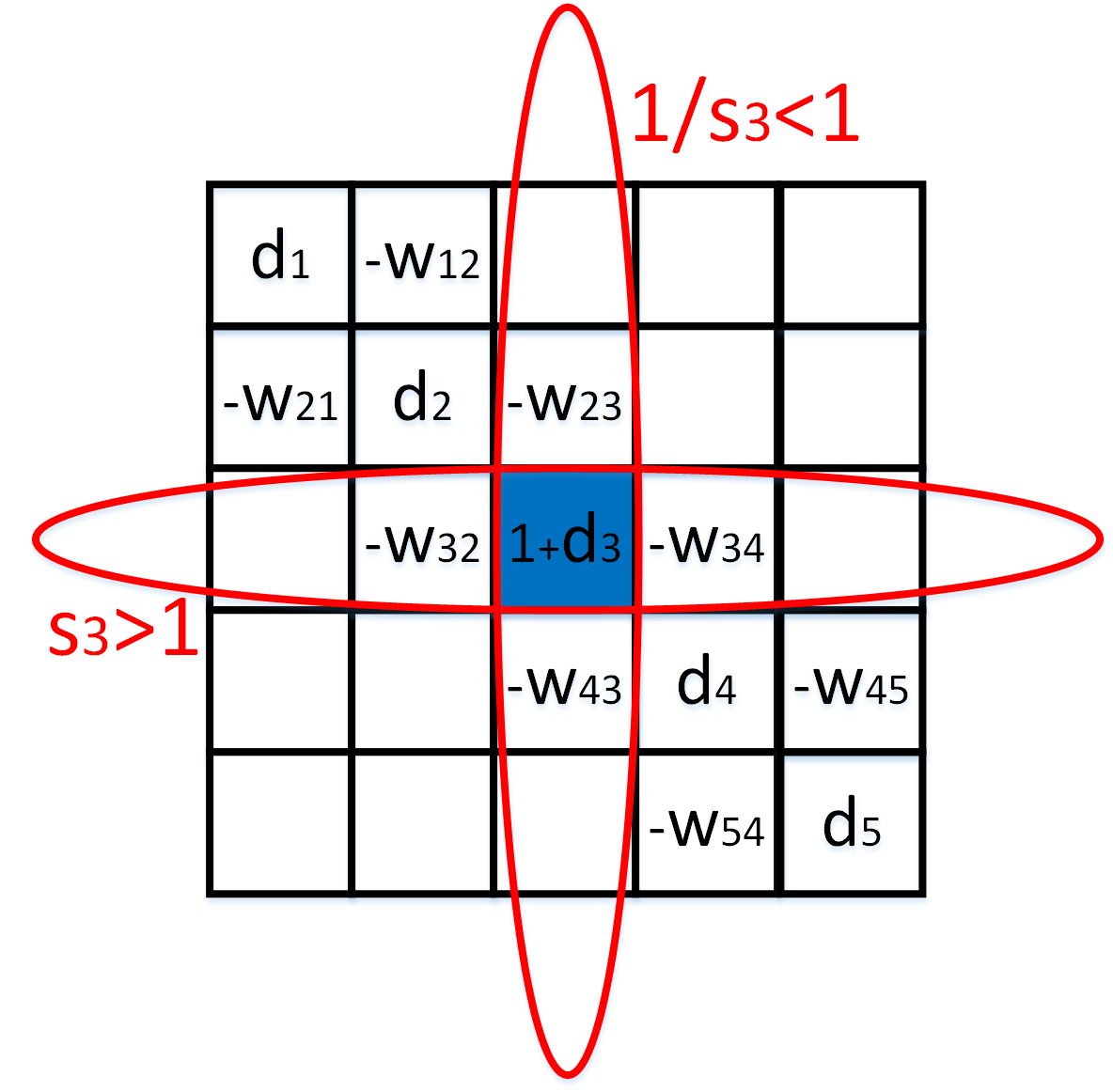}}
\subfloat[]{
\label{fig:adj_matrix_s4}
\includegraphics[width=0.32\linewidth]{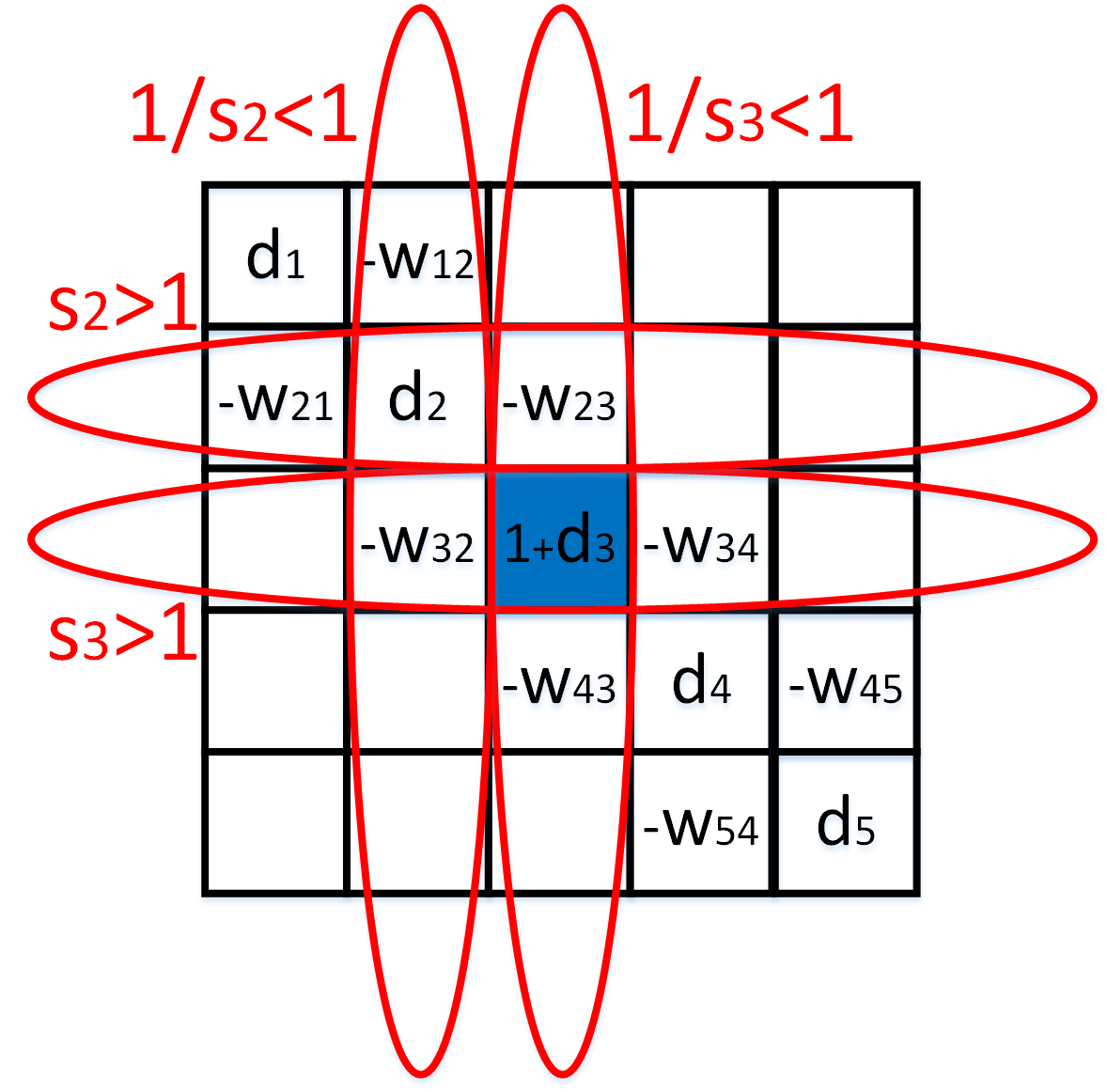}}
\subfloat[]{
\label{fig:adj_matrix_s5}
\includegraphics[width=0.32\linewidth]{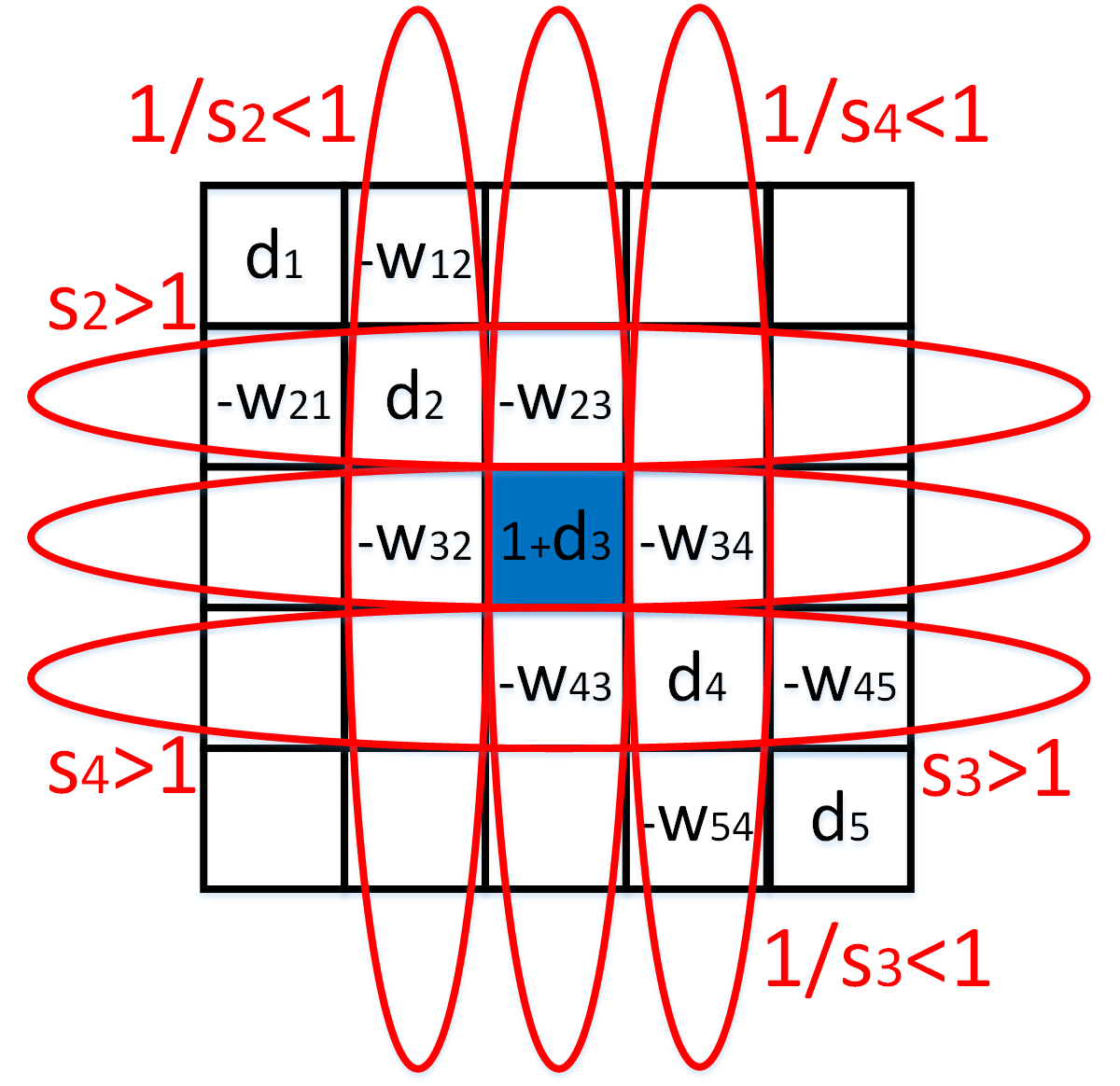}} \\
\subfloat[]{
\label{fig:disk_e4}
\includegraphics[width=0.32\linewidth]{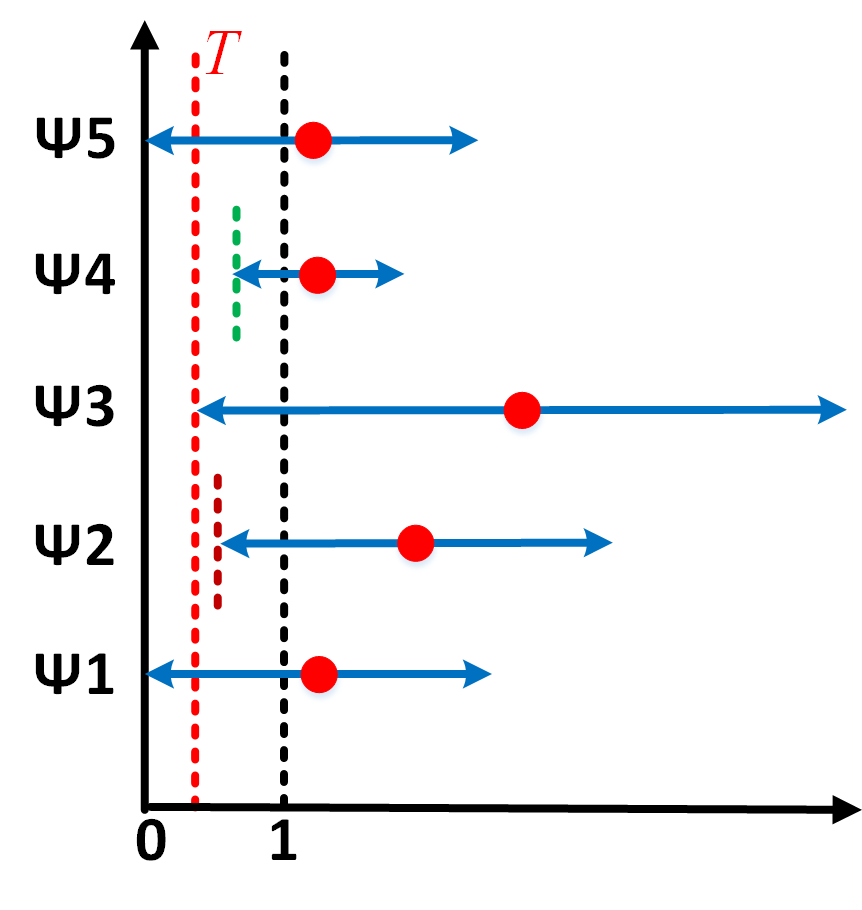}}
\subfloat[]{
\label{fig:disk_e5}
\includegraphics[width=0.32\linewidth]{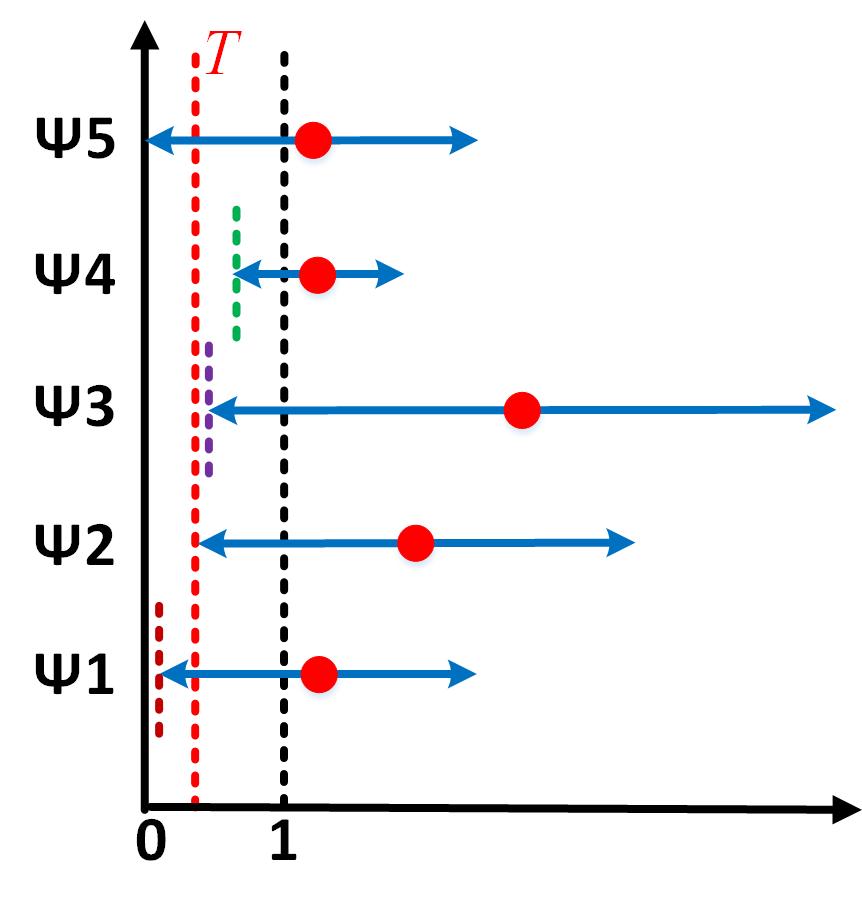}}
\subfloat[]{
\label{fig:disk_e6}
\includegraphics[width=0.32\linewidth]{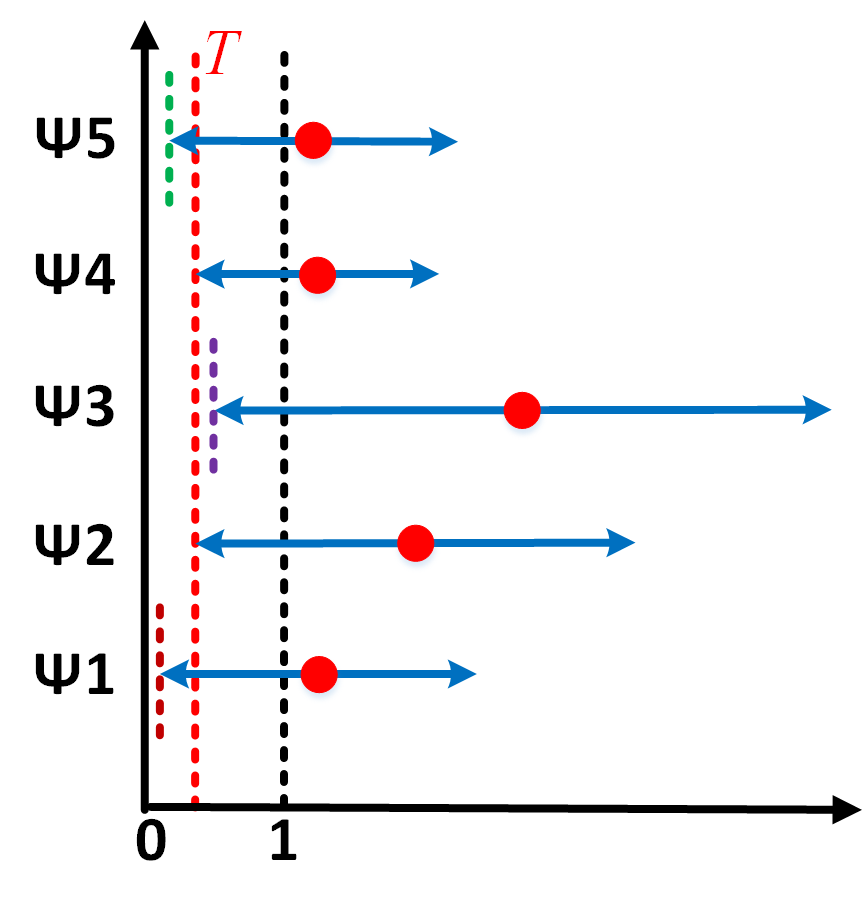}}
\caption{An illustration of estimating $\Omega_3$ on a $5$-node path graph. (a) \& (d) aligning left-end of disc $\Psi_3$ to $T$. (b) \& (e) aligning left-end of disc $\Psi_2$ to $T$. (c) \& (f) aligning left-end of disc $\Psi_4$ to $T$. In (a), (b) and (c), we show disc operations on coefficient matrices $\B$. In (d), (e) and (f), red dots and blue arrows represent disc centers and radii respectively.
}
\label{fig:align_op}
\end{figure}

\begin{algorithm}[ht]
\caption{Estimating Coverage Subset $\Omega_i$}
\label{al:1}
\begin{algorithmic}[1] 
\small
\REQUIRE  
Graph $\cG$, lower bound $T$, node $i$, hop constraint $p$ and $\mu$.\\
\STATE Initialize $\d=\W\mathbf{1}$, $\s=\mathbf{1}$ and $\a=[0,\ldots,0,a_i=1,0,\ldots,0]$. \\
\STATE Initialize $\h=\mathbf{0}$ for hop number.
\STATE Initialize $\Omega_i=\varnothing$. 
\STATE Initialize $\Q=\varnothing$ for enqueued nodes. \\
\STATE Initialize an empty $queue$. \\
\STATE $Enqueue(queue,i)$ and $\Q\leftarrow\Q\cup\{i\}$.\\

\STATE\textbf{while} $queue$ is not empty \textbf{do}\\
\STATE\ \ \ \ \ \ $k\leftarrow$Dequeue($queue$). \\
\STATE\ \ \ \ \ \ Update $s_k$ using (\ref{eq:scale_factor}).\\
\STATE\ \ \ \ \ \ \textbf{if} $s_k\ge1$ and $h_k\le p$ \textbf{do}\\
\STATE\ \ \ \ \ \ \ \ \ \ $\Omega_i\leftarrow\Omega_i\cup\{k\}$.
\STATE\ \ \ \ \ \ \ \ \ \ \textbf{for} $t$ \textbf{in} $k$'s neighbours $\cN_k$ \textbf{do}\\
\STATE\ \ \ \ \ \ \ \ \ \ \ \ \ \ \textbf{if} $t\notin\Q$ \textbf{do}\\
\STATE\ \ \ \ \ \ \ \ \ \ \ \ \ \ \ \ \ $Enqueue(queue,t)$ and $\Q\leftarrow\Q\cup\{t\}$.\\
\STATE\ \ \ \ \ \ \ \ \ \ \ \ \ \ \ \ \ $h_t\leftarrow h_k+1$.\\
\STATE\ \ \ \ \ \ \ \ \ \ \ \ \ \ \textbf{endif}\\
\STATE\ \ \ \ \ \ \ \ \ \ \textbf{endfor}\\
\STATE\ \ \ \ \ \ \textbf{endif}\\

\STATE\textbf{endwhile}\\

\ENSURE  Subset $\Omega_i$.
\end{algorithmic}
\end{algorithm}

\vspace{0.1in}
\noindent
\textit{Illustrative Example on 5-node Path Graph:}
We use an example to illustrate the operations of the above algorithm, shown in Fig.\;\ref{fig:align_op}.
We assume the same $5$-node path graph in Fig.~\ref{fig:nodes_example} and choose to sample node $3$.
Assuming $\mu=1$, the graph's coefficient matrix $\B$ with entry $(3,3)$ updated is shown in Fig.\;\ref{fig:adj_matrix_s0}.
The left-end of node 3's Gershgorin disc shifts from $0$ to $1$, as shown in Fig.\;\ref{fig:disk_e1}.

We next perform disc scaling to align the left-end of disc $\Psi_3$ to $T$.
As shown in Fig.\;\ref{fig:adj_matrix_s3}, scalar $s_3$ is applied to the third row of $\B$, and thus the radius of disc $\Psi_3$ is expanded by $s_3$, where $s_3>1$.
Simultaneously, scalar $s_3^{-1}$ is applied to the third column, and thus the radii of discs $\Psi_2$ and $\Psi_4$ are shrunk due to the scaling of $w_{23}$ and $w_{43}$ respectively by $s_3^{-1}$.
Note that entry $(3,3)$ of $\B$ (and $\Psi_3$'s disc center) is unchanged, since scalar $s_3$ is offset by $s_3^{-1}$.
We see that due to the expansion of disc $\Psi_3$ radius, the disc left-ends of its neighboring nodes (nodes $2$ and $4$) move beyond threshold $T$, as shown in Fig.\;\ref{fig:disk_e4}.

We next apply scalar $s_2$ to disc $\Psi_2$ to expand its radius by $s_2$, where $s_3>s_2>1$, and the radii of discs $\Psi_1$ and $\Psi_3$ are shrunk due to the scaling of $w_{12}$ and $w_{32}$ by $s_2^{-1}$, as shown in Fig.\;\ref{fig:adj_matrix_s4}.
$s_2$ must be smaller than $s_3$ for the left-end of $\Psi_2$ not to move past $0$.
The discs are shown in Fig.~\ref{fig:disk_e5}.
Scaling $\Psi_2$ does not move the left-end of disc $\Psi_1$ beyond threshold $T$.
Thus, we stop aligning the further-away disc $\Psi_1$.

Subsequently, similar disc operations can be performed on $\Psi_4$ as shown in Fig.\;\ref{fig:adj_matrix_s5} and \ref{fig:disk_e6}.
Scaling $\Psi_4$ does not move the left-end of disc $\Psi_5$ beyond $T$. Thus, we stop aligning the further-away disc $\Psi_5$.
Finally, the left-ends of discs $\Psi_2$, $\Psi_3$ and $\Psi_4$ are moved beyond threshold $T$, \ie, $\Omega_3=\{2,3,4\}$.

\vspace{0.1in}
\noindent
\textit{Analogy of Throwing Pebbles in a Pond:}
To impart intuition, we consider the following real-world analogy of computing subset $\Omega_i$ for sample $i$.
Suppose a planar graph $\mathcal{G}$ models the water surface of a pond, where nodes denote water surface locations.
Sampling a node $i$ is analogous to throwing a pebble at a water surface location $i$ in the pond.
The thrown pebble causes a ripple to adjacent locations $j \in \mathcal{N}_i$, where the magnitude of the ripple at location $j$ is smaller than at $i$.
Analogously, when aligning left-ends of Gershgorin discs at $T$, scalar $s_j$ of neighbor $j \in \mathcal{N}_i$ is necessarily smaller than $s_i$.
The ripple in the pond dissipates away from the thrown location $i$.
Analogously, scalars $s_j$ becomes smaller further away in hop count from sample $i$.
The ripple stops increasing in size when the dissipating fringes become unnoticeable.
Analogously, subset $\Omega_i$ stops growing when left-ends of neighboring discs fail to move beyond target $T$.

\subsection{NP-Hardness of Disc Coverage Problem}
\label{subsec:disc_coverage}

After pre-computing coverage subsets, we can reinterpret the dual \eqref{eq:disc_alignment} more coarsely but in a more intuitive manner:
given pre-computed coverage subsets $\Omega_i, \forall i \in \{1, \ldots, N\}$,
how to select minimum-cost sampling vector $\a$ so that every node $j$ belongs to at least one sample's coverage subset $\Omega_i$, \textit{i.e.},
\begin{align}
\min_{\a} & \sum_{i=1}^N a_i
\label{eq:disc_coverage} \\
\mbox{s.t.} & ~~~
j \in \bigcup_{i ~|~ a_i=1} \Omega_i,
~~~ \forall j \in \{1, \ldots, N\} \nonumber \\
& a_i \in \{0, 1\}  \nonumber
\end{align}
We term this the \textit{disc coverage} (DC) problem.
DC \eqref{eq:disc_coverage} is related to the dual \eqref{eq:disc_alignment} in the sense that: i) they have the same optimization objective, and ii) every feasible solution $\a$ to  \eqref{eq:disc_coverage} is also a feasible solution to \eqref{eq:disc_alignment}, where the scalars $\s$ corresponding to $\a$ are the ones calculated when coverage subsets $\Omega_i$'s are pre-computed.
Thus solving DC provides an approximate feasible solution to the dual.

Unfortunately, even this DC problem \eqref{eq:disc_coverage} is NP-hard.
A simple NP-hardness proof via a reduction from the famous NP-hard set cover (SC) problem \cite{cormen2009introduction} is provided in Appendix~\ref{ap:np}.
However, the resemblance of the DC problem to SC enables us to derive an efficient approximation algorithm to the dual \eqref{eq:disc_alignment}, which we describe next.

\subsection{Disc Alignment via Greedy Set Cover}
\label{subsec:disc_alignment}

Given nodes $\cU=\{1,2,\ldots,N\}$ and a collection of pre-computed $N$ coverage subsets $\Omega=\{\Omega_1,\ldots,\Omega_N\}$, we derive a fast algorithm to DC \eqref{eq:disc_coverage} based on an error-bounded approximation algorithm to SC.
We first describe SC formally.
\begin{definition}[Set cover problem]
    An instance $(\cX,\cF)$ of the set cover problem consists of a finite element set $\cX$ and a family $\cF$ of subsets of $\cX$, such that every element $x \in \cX$ belongs to at least one subset $\cS \in \cF$:
    \begin{equation}
        x \in \bigcup_{\cS\in\cF} \cS, ~~~
        \forall x \in \cX
    \end{equation}
    We say that a subset $\cS\in\cF$ covers its elements. The problem is to find a minimum-size subset collection $\cC\subseteq\cF$ whose members cover all of $\cX$:
    \begin{equation}
        x \in \bigcup_{\cS\in\cC} \cS, ~~~
        \forall x \in \cX
    \end{equation}
    \label{def:set_cover}
\end{definition}

Given the resemblance between DC and SC, we borrow a known greedy approximation algorithm for SC with error-bounded performance \cite{cormen2009introduction}:
at each iteration, we select a yet-to-be-chosen subset $\cS$ that covers the largest number of uncovered elements in $\cX$.
Similarly, for our algorithm for DC, we select sample $i$ and coverage subset $\Omega_i$ with the largest uncovered nodes at each iteration.
{When a sampling vector $\a$ satisfying the condition that each node belongs to at least one sample coverage subset $\Omega_i$ is computed, the greedy algorithm can achieve a $H(\max_i |\Omega_i|)$-approximate solution to DC \cite{cormen2009introduction}, \ie,
\begin{equation}
    |\a|\le H(\max_i |\Omega_i|)\cdot|\hat{\a}|,
\end{equation}
where $H(n)=\sum_{i=1}^n1/i$ is the harmonic series and $\hat{\a}$ is the optimal solution to DC. $|\a|$ denotes $\sum_i a_i$.}

Since we are given a sampling budget $K$, we terminate the algorithm when the number of selected samples reaches $K$ for practical implementation.
The resulting sampling vector $\a$ is valid only if $| \a| \le K$ \textit{and} each node belongs to at least one sample coverage subset $\Omega_i$.
The procedure of disc alignment via greedy set-covering is outlined in Algorithm~\ref{al:2}.

\begin{algorithm}[htb]
\caption{Disc Alignment via Greedy Set Cover}
\label{al:2}
\begin{algorithmic}[1] 
\small
\REQUIRE  
Graph $\cG$, lower bound $T$, sampling budget $K$ and $\mu$.\\
\STATE Initialize $\cU=\{1,\ldots,N\}$. \\
\STATE Initialize $\Omega=\varnothing$ and $\cS=\varnothing$. 
\STATE Initialize validity flag $VF=true$ and $n=1$. \\
\STATE \textbf{for} $i=1\rightarrow N$ \textbf{do}\\
\STATE\ \ \ \ \ \ Estimate $\Omega_i$ using Algorithm~1.
\STATE\ \ \ \ \ \ $\Omega\leftarrow \Omega\cup\{\Omega_i\}$.
\STATE \textbf{endfor} \\

\STATE \textbf{while} $|\cU|\ne\varnothing$ and $n\le K$ \textbf{do}\\
\STATE\ \ \ \ \ \ Select $\Omega_i\in\Omega$ that maximizes $|\Omega_i\cap\cU|$.
\STATE\ \ \ \ \ \ $\cU\leftarrow\cU\setminus\{\Omega_i\cap\cU\}$.
\STATE\ \ \ \ \ \ $\cS\leftarrow\cS\cup \{i\}$.
\STATE\ \ \ \ \ \ $n\leftarrow n+1$.
\STATE\textbf{endwhile}\\

\STATE\textbf{if} $|\cU|\ne\varnothing$ \textbf{do}\\
\STATE\ \ \ \ \ \ $VF\leftarrow false$.\\
\STATE\textbf{endif}\\

\ENSURE  validity flag $VF$, sampling set $\cS$.
\end{algorithmic}
\end{algorithm}

The time complexity of $N$ subsets preparation within $p$ hops is $\cO(PN+QN)$.
We represent uncovered elements in $\cU$ using $0$-$1$ bit array.
The uncovered elements in $\Omega_i$ to $\cU$ can be computed through at most $\cO(P)$ times fast bit operations, since the number of ``1'' in the bit array of $\Omega_i$ is at most $P$.
We compute uncovered elements of $N$ subsets at each subset selection stage.
We employ no more than $K$ times subset selections, thus the time complexity of the greedy subset selection algorithm is $\cO(NKP)$.
The total time complexity of Algorithm~\ref{al:2} comprises $\cO\left(PN+QN\right)$ node search plus $\cO\left(NKP\right)$ bit operations.

\subsection{Binary Searching the Maximum Lower Bound}
\label{subset:binary_search}
According to Proposition~\ref{pr:non_decreasing}, the number of sampled nodes $K$ is non-decreasing with respect to lower bound $T$ in the dual problem (\ref{eq:disc_alignment}).
Hence, we can perform binary search to find the maximum lower bound $\hT\in(0, 1)$ corresponding to $K$ sampled nodes.
We name the proposed algorithm \textit{Binary Search with Gershgorin Disc Alignment} (\texttt{BS-GDA}), as outlined in Algorithm~\ref{al:3}.

\begin{algorithm}[htb]
\caption{Binary Search with Gershgorin Disc Alignment}
\label{al:3}
\begin{algorithmic}[1] 
\small
\REQUIRE  
Graph $\cG$, sampling budget $K$, numerical precision $\epsilon$ and $\mu$.\\
\STATE Initialize $left=0$, $right=1$. \\
\STATE Initialize valid sampling set $\cS_v=\varnothing$. \\
\STATE\textbf{while} $right-left>\epsilon$ \textbf{do}\\
\STATE\ \ \ \ \ \ $T\leftarrow(left+right)/2$. \\
\STATE\ \ \ \ \ \ Estimate $VF, \cS$ given $T$ using Algorithm~2.\\
\STATE\ \ \ \ \ \ \textbf{if} $VF$ is $false$ \textbf{do}\\
\STATE\ \ \ \ \ \ \ \ \ \ $right\leftarrow T$ \\
\STATE\ \ \ \ \ \ \textbf{else} \\
\STATE\ \ \ \ \ \ \ \ \ \ $left\leftarrow T$ \\
\STATE\ \ \ \ \ \ \ \ \ \ $\cS_v\leftarrow S$ \\
\STATE\ \ \ \ \ \ \textbf{endif}\\
\STATE\textbf{endwhile}\\
\STATE $\hT\leftarrow left$.

\ENSURE  Sampling set $\cS_v$, maximum lower bound $\hT$.

\end{algorithmic}
\end{algorithm}

At each iteration, we estimate validity flag $VF$ and sampling set $\cS$ given a certain lower bound $T$ using Algorithm~\ref{al:2}. If $VF=false$, then the lower bound $T$ is set too large, and we update $right$ to reduce $T$. Otherwise, $VF=true$, then $T$ may be still too small, leading to $|\cS|<K$ or not the maximum $T$ corresponding to $|\cS|=K$. We update $left$ to increase $T$.
When $right-left\le\epsilon$, \texttt{BS-GDA} converges and we find the maximum lower bound $\hT$ with numerical error lower than $\epsilon$.

In order to achieve numerical precision $\epsilon$ in \texttt{BS-GDA}, we need to employ $\cO(\log_2\frac{1}{\epsilon})$ times binary search. For instance, to achieve numerical precision $\epsilon=10^{-3}$, we employ $10$ times binary search.
The total time complexity of Algorithm~3 comprises $\cO\left(N(P+Q)\log_2\frac{1}{\epsilon}\right)$ node search plus $\cO\left(NKP\log_2\frac{1}{\epsilon}\right)$ bit operations.

\subsection{Complexity Analysis}
\label{subsec:complexity}

We here summarize the theoretical computational complexities of different graph sampling approaches, and list them  in Table\;\ref{complexity}, assuming that the signal bandwidth $\omega$ is the $K$-th eigenvalue $\theta{_K}$ of $\L$ and the graph is sparse.
We separate the complexity into two parts: \emph{preparation} (for eigen-decomposition or computing initial preparation information) and \emph{sampling} (during greedy sampling step).
The competing algorithms are referred to as  \texttt{E-optimal} \cite{e_optimal2015}, Spectral proxies (\texttt{SP}) \cite{sp_proxy2016}, \texttt{MFN} \cite{MFN2016TSP},  \texttt{MIA} \cite{MIA2018Fen} and Eigendecomposition-free (\texttt{Ed-free}) \cite{akie2018eigenFREE}, respectively.
Complexities of methods \texttt{E-optimal}, \texttt{SP}, \texttt{MFN} and \texttt{MIA} are directly borrowed from paper \cite{MIA2018Fen}, with modified notations. For \texttt{E-optimal} and \texttt{MFN}, $T_1$ is the convergence steps for computing the first $K$ eigen-pairs of Laplacian $\L$ and $R$ is a constant for combining complexities. In strategy \texttt{SP}, $k$ is the degree of spectral proxies and $T_2$ is the convergence steps for obtaining the first eigen-pair of $\L$ via LOBPCG \cite{sp_proxy2016}. The parameters $q$ and $F$ in \texttt{MIA} are the degree of Chebyshev polynomials for a low-pass filter approximation and the truncated parameter of a Neuamnn series, respectively.
As for \texttt{Ed-free} method, paper \cite{akie2018eigenFREE} claimed that its complexity was dominated by the non-zero entries in matrix $\L^{\bar{q}}$, \textit{i.e}, $J$ in paper \cite{akie2018eigenFREE}, where $\bar{q}$ is degree of Chebyshev polynomial approximation. Similarly, the non-zero entries in $\L^{\bar{q}}$ is at most $Nd_{\max}^{\bar{q}}$ in a degree-bounded graph, so we use $\bar{Q}$ to denote the largest number of edges in $\bar{q}$-hop neighborhood, which implies $J=\cO(N\bar{Q})$.
From Table\;\ref{complexity}, we can see that the complexities of \texttt{Ed-free} and our proposed \texttt{BS-GDA} method are both roughly linear in terms of graph size $N$, which is substantially lower than other eigen-decomposition-based sampling methods and implies those two methods are applicable in fairly large graphs. The explicit execution time comparison will be demonstrated in the coming experimental section.

\begin{table*}
\caption{Theoretical Complexity Comparison of Different Graph Sampling Strategies. }
\label{complexity}
\begin{center}
\begin{tabular}{ccccccc}
 \hline
\hline
  \textbf{}&{\texttt{E-optimal}}&\texttt{SP}&\texttt{MFN}&\texttt{MIA}&\texttt{Ed-free}&proposed \texttt{BS-GDA}\\\hline
Preparation
&$\mathcal{O}\left( {\left( { MK + R{K^3}} \right){T_1}} \right)$
&$\mathcal{O}\left({kMK{T_2}}\right)$
&$\mathcal{O}\left( {\left( {MK + R{K^3}} \right){T_1}} \right)$
& $\mathcal{O}(qNM)$
&$\cO(M\bar{q}+N\bar{Q})$
&$\cO\left(N(P+Q)\log_2\frac{1}{\epsilon}\right)$
\\
Sampling
&$\mathcal{O}\left(NK^{4}\right)$
&$\mathcal{O}\left(NK\right)$
&$\mathcal{O}\left(NK^{4}\right)$
&$\mathcal{O}\left(NFK^{3.373}\right)$
&$\cO(NK\bar{Q})$
&$\cO\left(NKP\log_2\frac{1}{\epsilon}\right)$
\\
\hline
 \end{tabular}
 \end{center}
\end{table*}

\section{Experiments}
\label{sec:experiments}
\subsection{Experimental Setup}
\label{subsec:experimental_design}
We demonstrate the efficacy of our proposed fast graph subset sampling algorithm (\texttt{BS-GDA}) via extensive simulations.
Our experimental platform is Windows~10 desktop computer with Intel i5-4670K CPU and 24GB memory.
All algorithms are performed on MATLAB R2016b.

We use four types of graphs in GSPBOX \cite{perraudin2014gspbox} for testing:
\begin{enumerate}
    \item Random sensor graph with $N$ nodes.
    \item Community graph with $N$ nodes and $\lfloor\sqrt{N}/2\rfloor$ random communities.
    \item Barab{\'{a}}si-Albert graph with $N$ nodes.
    Each graph is constructed by adding new nodes, each with $m=1$ edge, that are preferentially attached to existing nodes with high degrees.
    \item Minnesota road graph with fixed $N=2642$ nodes.
\end{enumerate}

Random sensor graphs are weighted sparse graphs generated by GSPBOX, where each node connects to its six nearest neighbours. Edge weights are computed using:
\begin{equation}
    w_{ij}=\exp \left\{-\frac{\|\x_i-\x_j\|^2_2}{\sigma_x^2} \right\}
    \label{eq:weight_function_experiment}
\end{equation}
where $\x_i$ is the 2D coordinate of node $i$, and $\sigma_x$ is set automatically by the toolbox.
Community graphs, Barab{\'{a}}si-Albert graphs and Minnesota road graph are unweighted sparse graphs generated by GSPBOX, where
we manually compute weight for each edge.
For community graphs and Minnesota road graph, we compute edge weights using \eqref{eq:weight_function_experiment}.
We set parameter $\sigma_x=1$ for community graphs and set parameter $\sigma_x=0.1$ for Minnesota road graphs.
Since Barab{\'{a}}si-Albert graphs are scale-free graphs without 2D coordinates, we randomly generate edge weights using a uniform probability distribution with interval $(0,1)$. Illustrations of four types of graphs are shown in Fig.~\ref{fig:four_graphs}.

\begin{figure}[!t]
\centering
\subfloat[]{
\label{fig:random_sensor_graph}
\includegraphics[width=0.45\linewidth]{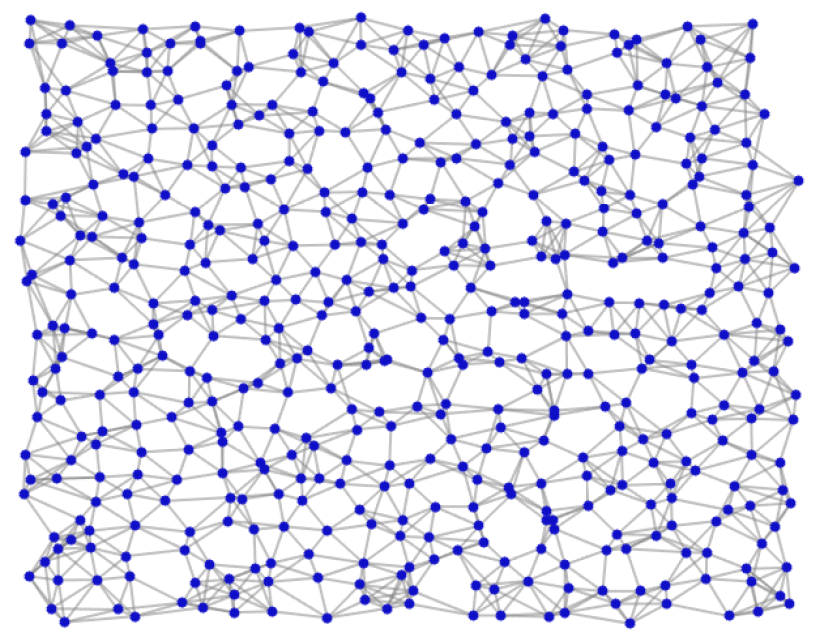}}~~
\subfloat[]{
\label{fig:community_graph}
\includegraphics[width=0.45\linewidth]{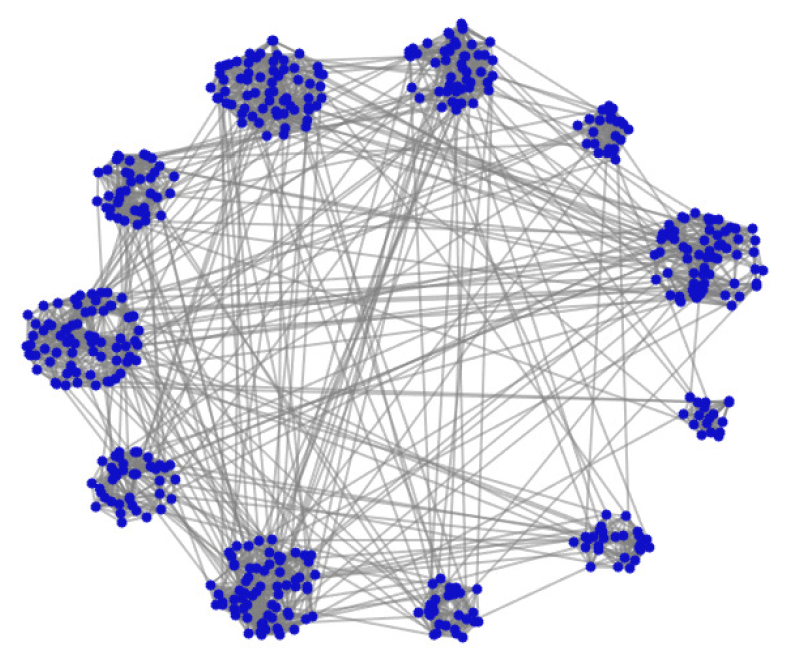}} \\
\subfloat[]{
\label{fig:BA_graph}
\includegraphics[width=0.45\linewidth]{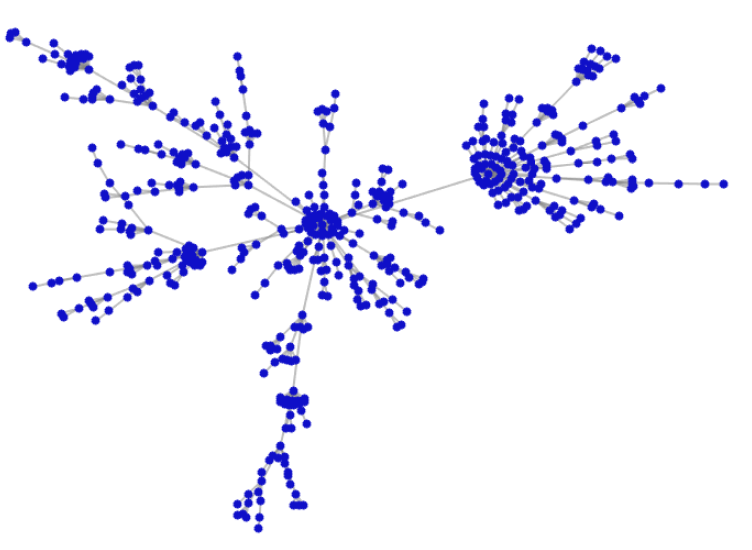}}~~
\subfloat[]{
\label{fig:Minnesota_road_graph}
\includegraphics[width=0.45\linewidth]{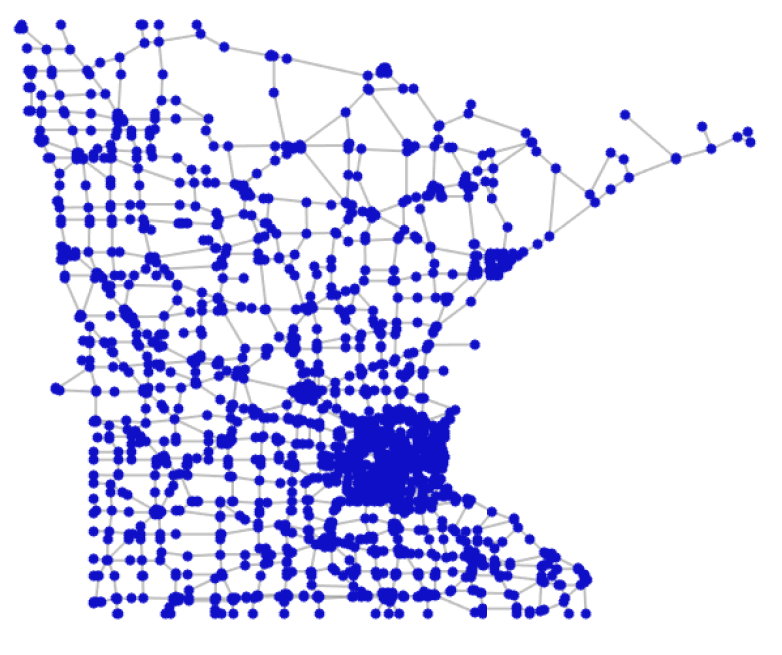}}
\caption{Illustrations of four types of graphs. (a) Random sensor graph. (b) Community graph. (c) Barab{\'{a}}si-Albert graph. (d) Minnesota road graph.}
\label{fig:four_graphs}
\end{figure}

For each graph type, we consider two types of graph signals described as follows:
\begin{enumerate}
    \item \textbf{GS1}: The true signals are exactly $\omega$-bandlimited, where $\omega=\theta_{\lfloor N/10\rfloor}$ is the $\lfloor N/10\rfloor$-th eigenvalue of $\L$.
    For example, $\omega=\theta_{50}$ for $N=500$.
    The non-zero GFT coefficients are randomly generated from {$\cN(0,10)$}.
    We then add i.i.d. Gaussian noise generated from $\cN(0,0.1^2)$ such that the SNR equals to $20$\;dB.
    \item \textbf{GS2}: The true signals are generated from multivariate Gaussian distribution $\cN\left(\mathbf{0},(\L+\delta\I)^{-1}\right)$, where $\delta=10^{-5}$. Because the power of the generated graph signals is inconsistent,
    we normalize the signals using $\x'=\frac{\x-mean(\x)}{std(\x)}$, where $mean(\x)=\sum_i x_i/N$ and $std(\x)=\sqrt{\frac{\sum_i\left(x_i-mean(\x)\right)^2}{N}}$, respectively.
    We then add i.i.d. Gaussian noise generated from $\cN(0,0.1^2)$ such that the SNR equals to $20$\;dB.
\end{enumerate}

We compare the performance of our proposed algorithm with six recent graph sampling set selection methods, of which implementations are available:
\begin{enumerate}
    \item Random graph sampling with non-uniform probability distribution (referred to as \texttt{Random}) \cite{r_sampling2018ACHA}.
    \item Deterministic sampling set selection methods, including  \texttt{E-optimal} \cite{e_optimal2015}, Spectral proxies (\texttt{SP}) \cite{sp_proxy2016}, \texttt{MFN} \cite{MFN2016TSP}, \texttt{MIA} \cite{MIA2018Fen} and Eigendecomposition-free (\texttt{Ed-free}) \cite{akie2018eigenFREE}, respectively.
\end{enumerate}

Our proposed algorithm \texttt{BS-GDA} avoids expensive eigen-decomposition and matrix computations, thus we build MEX functions for the iterative BFS and the greedy set covering, leading to a MATLAB+MEX implementation.
For all experiments, we set the target numerical precision $\epsilon=10^{-5}$ for binary search and $\mu=0.01$.
We empirically set hop constraint $p=12$ to consider no more than $12$ hops for each subset $\Omega_i$ estimation.
It is analogus to the neighborhood considered by Chebyshev polynomial approximation with order $12$ in \cite{akie2018eigenFREE}.

\subsection{Running Time Comparisons}
\begin{figure*}[!t]
\centering
\subfloat[]{
\label{fig:r_time_sensor}
\includegraphics[width=0.4\linewidth]{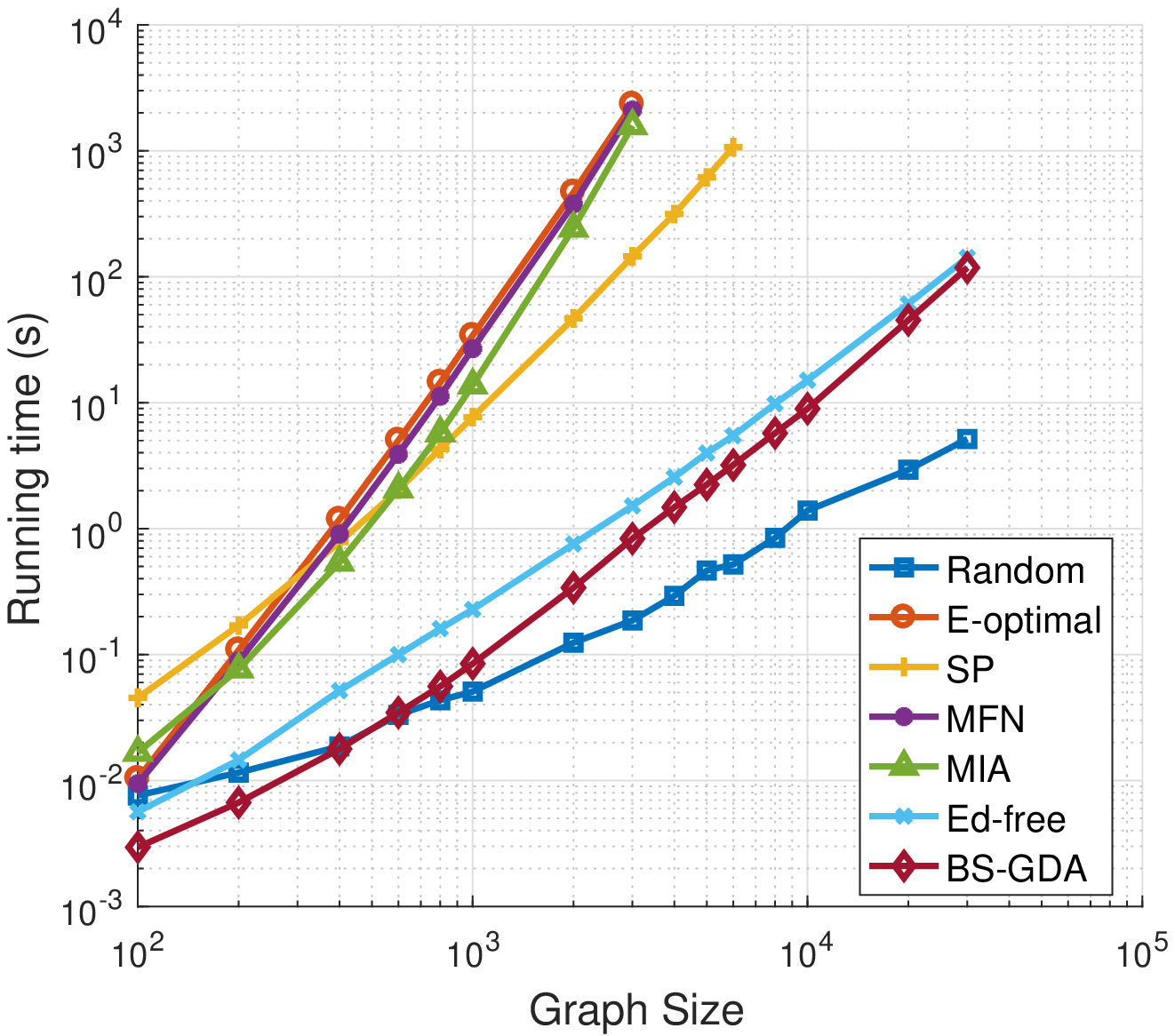}}
\subfloat[]{
\label{fig:r_time_community}
\includegraphics[width=0.4\linewidth]{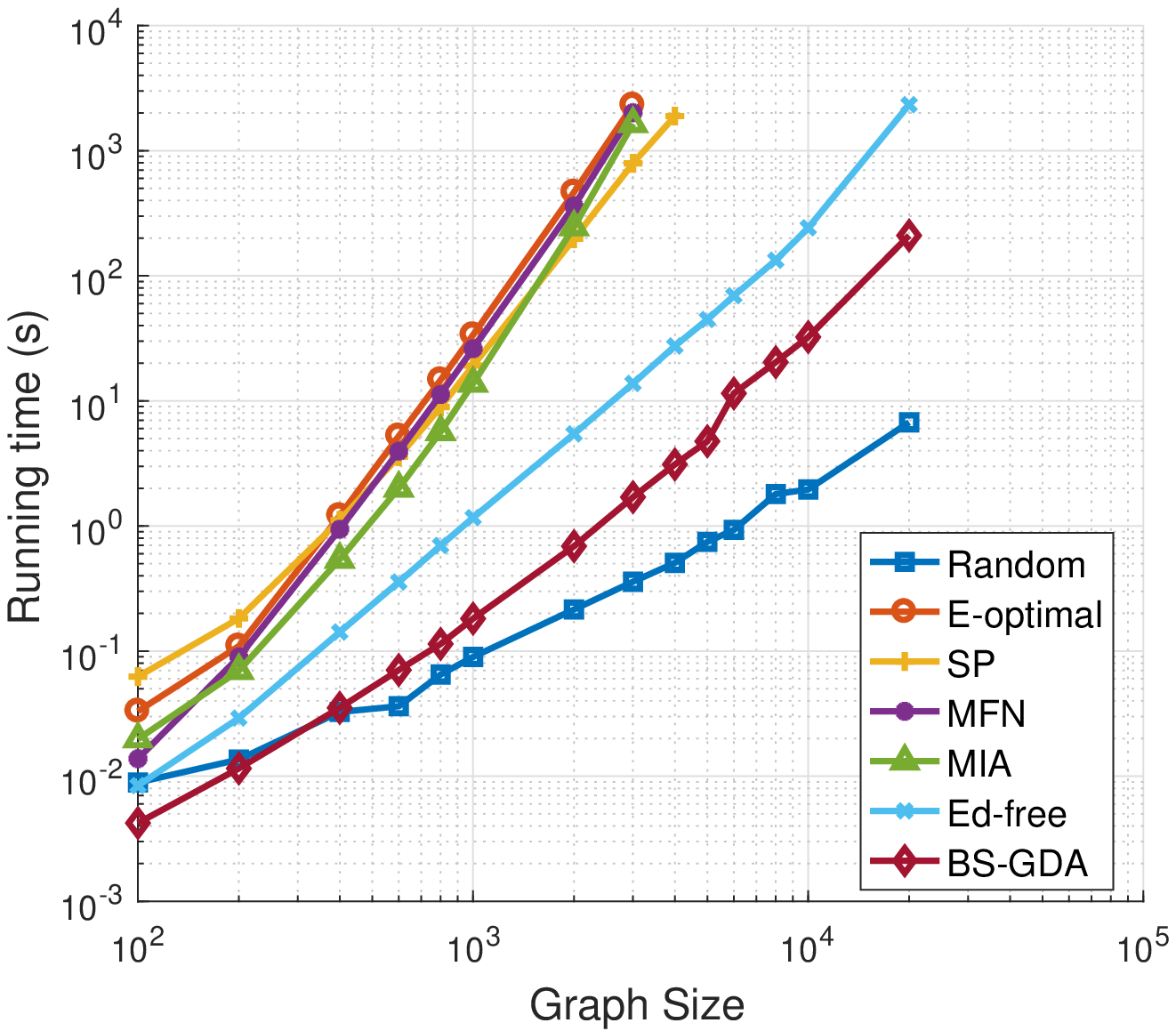}}
\caption{Running time comparisons on two different graphs. (a) Random sensor graph. (b) Community graph. Both axes are represented on logarithmic scales.}
\label{fig:running_time}
\end{figure*}

We compare the running time of our proposed \texttt{BS-GDA} with other competing algorithms on random sensor graphs and community graphs with increasing graph size $N$.
We select sampling budget $K=N/10$ nodes for both graphs. Fig.~\ref{fig:running_time} shows the curves of running time versus graph size $N$.
Specifically, for $N=3000$, we further compute the speedup factor (SF) \cite{akie2018eigenFREE} of our algorithm with respect to the competing algorithms using
\begin{equation}
    SF=\frac{Running\ time\ of\ competing\ method}{Running\ time\ of\ our\ method},
\end{equation}
as summarized in Table.~\ref{tb:speedup}.

\begin{table}[!t]
\caption{Speedup factors of our algorithm with respect to other sampling algorithms for $N=3000$}
\label{tb:speedup}
\centering
\begin{tabular}{C{9em}|C{6em}C{6em}}
\hline
  Sampling Algorithms & Sensor & Community \tabularnewline
\hline
\hline
   \texttt{Random} \cite{r_sampling2018ACHA} & 0.22    & 0.21 \tabularnewline
    \texttt{E-optimal} \cite{e_optimal2015}   & 2812.77 & 1360.76 \tabularnewline
    \texttt{SP} \cite{sp_proxy2016}           & 174.09  & 466.18 \tabularnewline
    \texttt{MFN} \cite{MFN2016TSP}            & 2532.91 & 1184.23 \tabularnewline
    \texttt{MIA} \cite{MIA2018Fen}            & 1896.19  & 964.65 \tabularnewline
    \texttt{Ed-free} \cite{akie2018eigenFREE} & 1.82    & 8.11 \tabularnewline
\hline
\end{tabular}
\end{table}

We observe that \texttt{BS-GDA} is the fastest deterministic sampling set selection algorithm for both graphs.
As mentioned, \texttt{E-optimal}, \texttt{SP} and \texttt{MFN} all require computation of extreme eigen-pairs of the graph Laplacian matrix (or submatrix), which is computationally demanding.
Though there is no explicit computation of eigen-pairs in \texttt{MIA}, the large matrix series computation also results in large computation cost.
Thus, \texttt{BS-GDA} is hundreds to thousands times faster than \texttt{E-optimal}, \texttt{SP}, \texttt{MFN} and \texttt{MIA}.

The recent eigen-decomposition-free method, \texttt{Ed-free} \cite{akie2018eigenFREE}, utilizes the localized operator to design an intuitive sampling strategy, whose complexity is $\cO(NK\bar{Q})$ in Table\;\ref{complexity}.
Our \texttt{BS-GDA} is faster than \texttt{Ed-free} for both graphs, in particular on the community graph.
The reason is that edges within each community are relatively dense, and thus $\bar{Q}$ for community graph is much larger than $\bar{Q}$ for random sensor graph, leading to a slower speed for \texttt{Ed-free}.
In contrast, the computation efficiency of \texttt{BS-GDA} is similar on both graphs.

\texttt{Random} \cite{r_sampling2018ACHA} is faster than our proposed \texttt{BS-GDA}, but its reconstruction error performance is noticeably worse than \texttt{BS-GDA}, as shown in the following section.

\subsection{Reconstruction Error Comparisons}
\label{subsec:recon_mse}
We also compare the resulting reconstruction MSEs between our proposed algorithm and the competing ones.
We choose $N=500$ for random sensor graphs, community graphs and Barab{\'{a}}si-Albert graphs.
Minnesota road graph is of fixed size $N=2642$.
All the graph sampling set selection algorithms are compared on these four graphs with increasing sampling budgets.

For each graph, we consider two types of graph signals, as described in Sec.\;\ref{subsec:experimental_design}.
We randomly generate $50$ graph signals for each type, and for each signal we add $50$ different i.i.d. Gaussian noise, resulting in $2500$ noisy signals for each experiment.
For signal type \textbf{GS2}, because the generated signals are not strictly bandlimited, we approximately set the bandwidth to $\omega=\theta_{\lfloor N/10\rfloor}$ for sampling algorithms that require this parameter.

We use our proposed \texttt{BS-GDA} and the competing algorithms to sample the noisy signals. All the competing algorithms are run with the default or the recommended settings in the referred publications.
We then reconstruct original graph signals by solving GLR-based reconstruction (\ref{eq:linear_equation}) with $\mu=0.01$.

\begin{figure*}[!t]
\centering
\subfloat[]{
\label{fig:sensors_bw}
\includegraphics[width=0.24\linewidth]{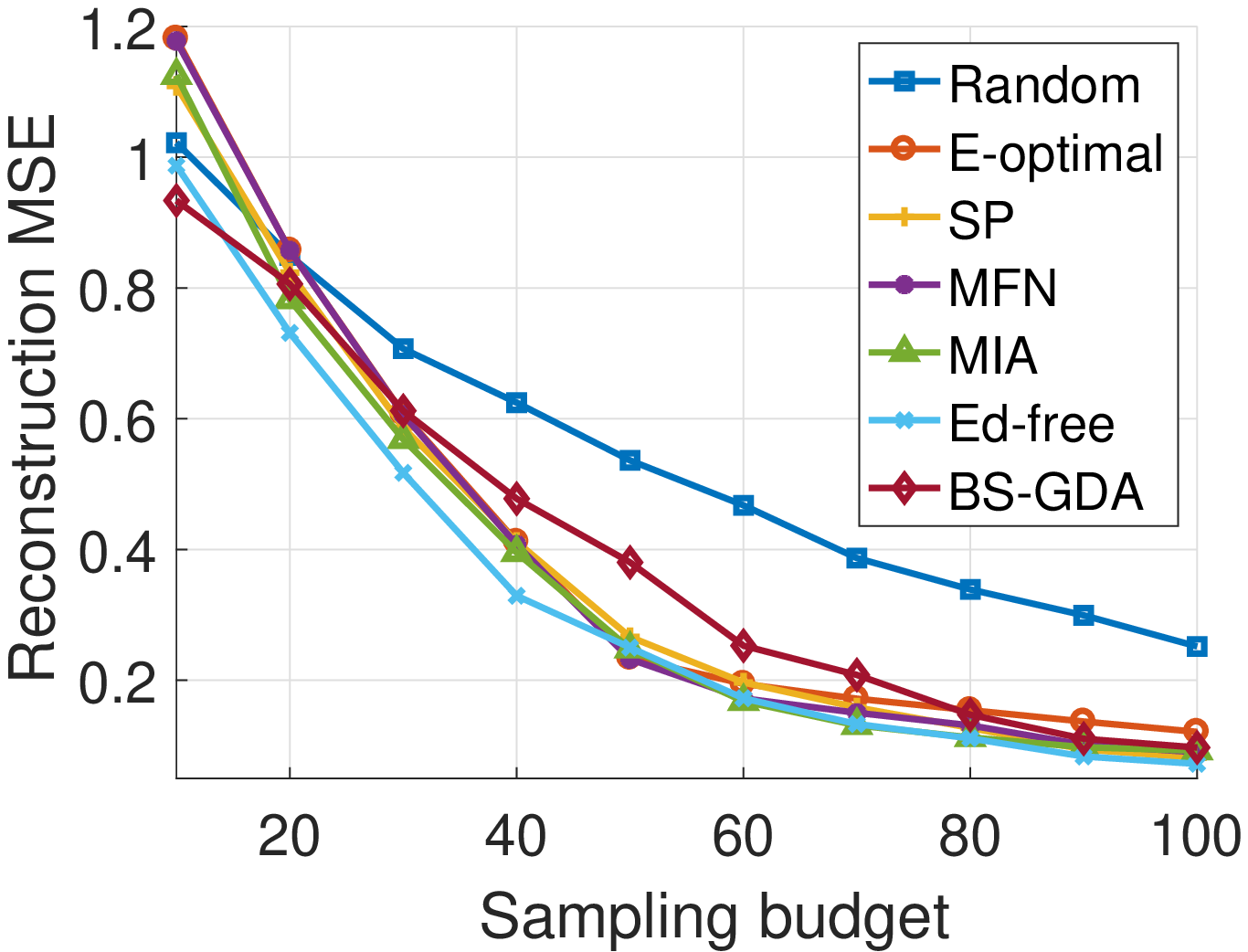}}
\subfloat[]{
\label{fig:community_bw}
\includegraphics[width=0.24\linewidth]{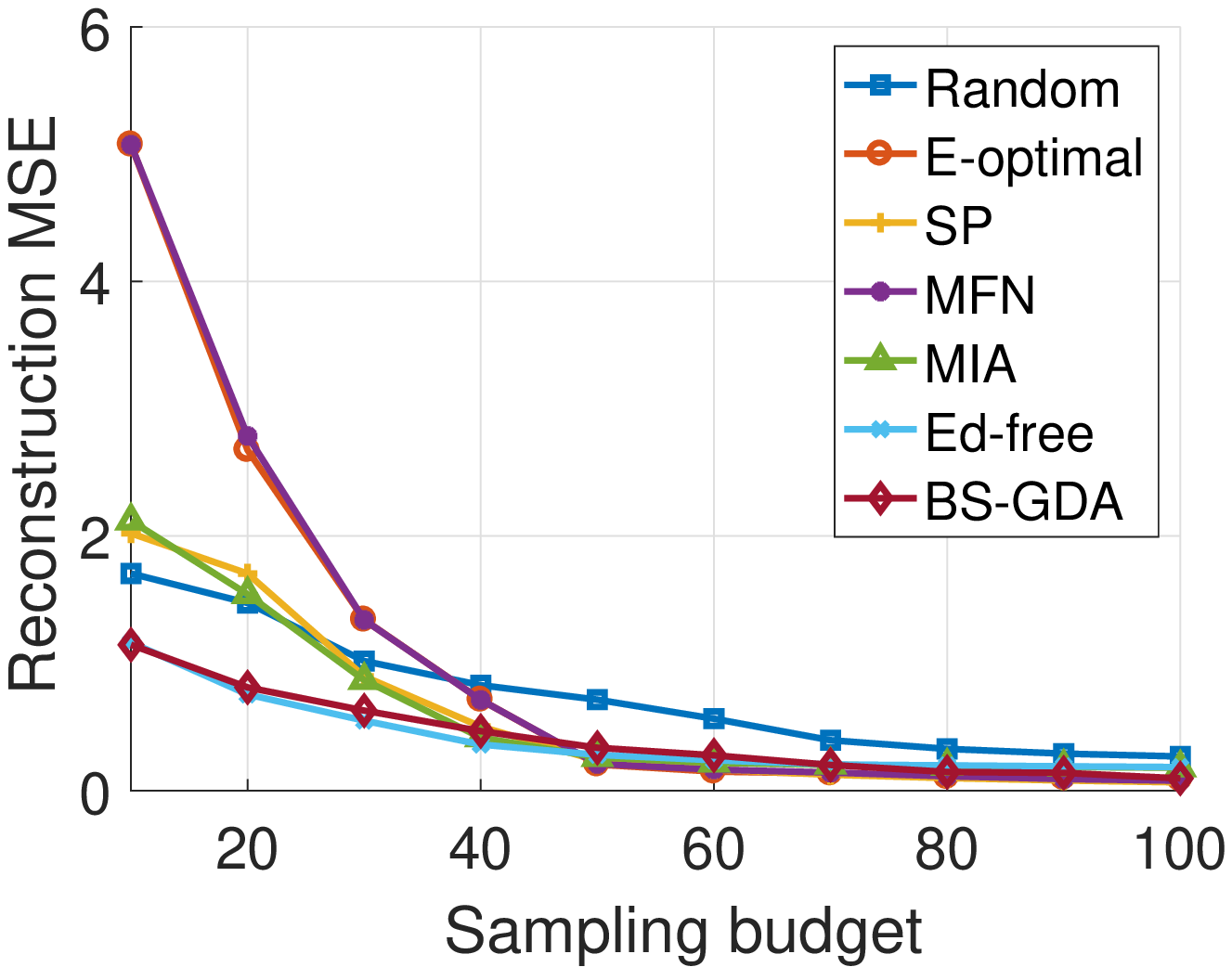}}
\subfloat[]{
\label{fig:ba_bw}
\includegraphics[width=0.24\linewidth]{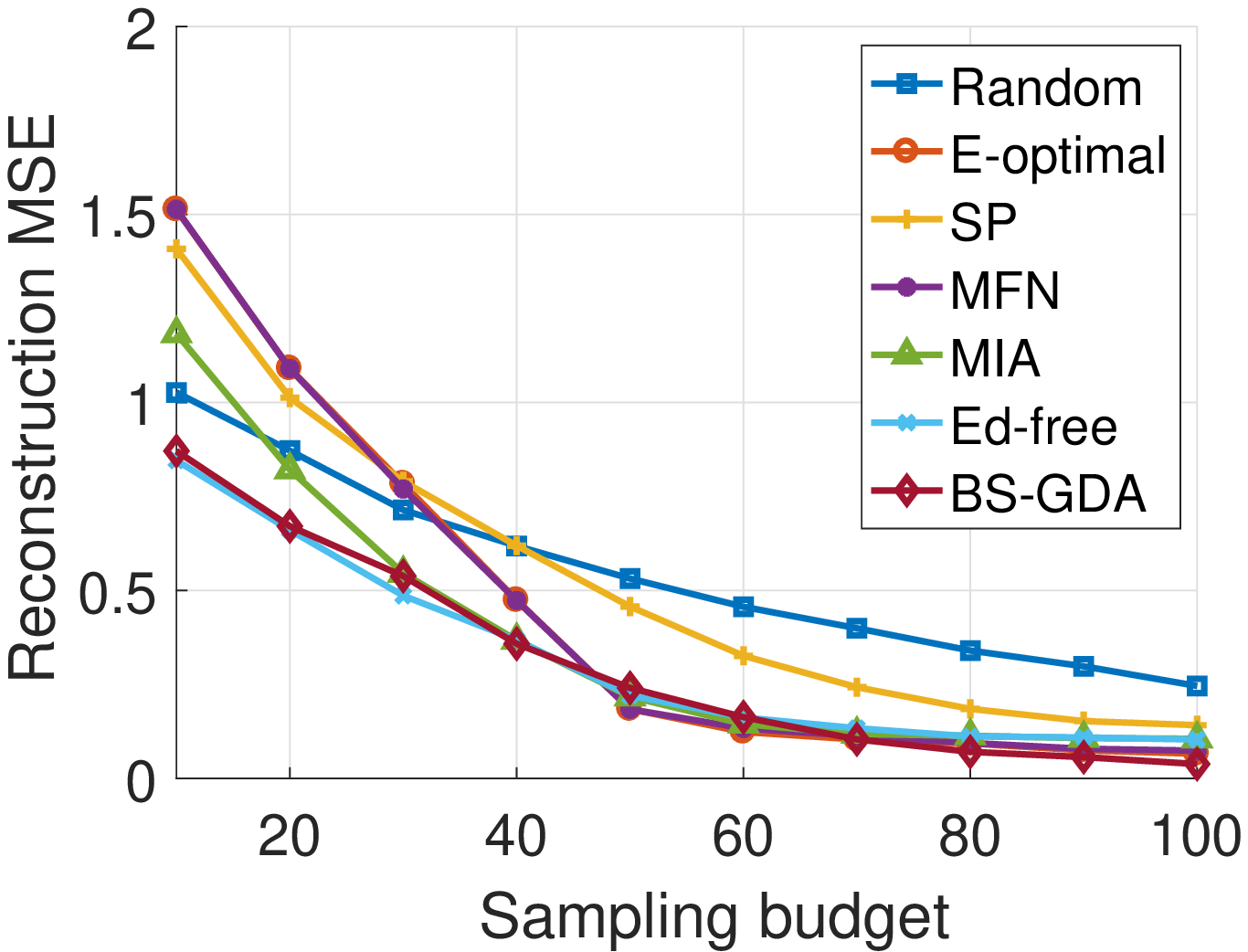}}
\subfloat[]{
\label{fig:minnesota_bw}
\includegraphics[width=0.24\linewidth]{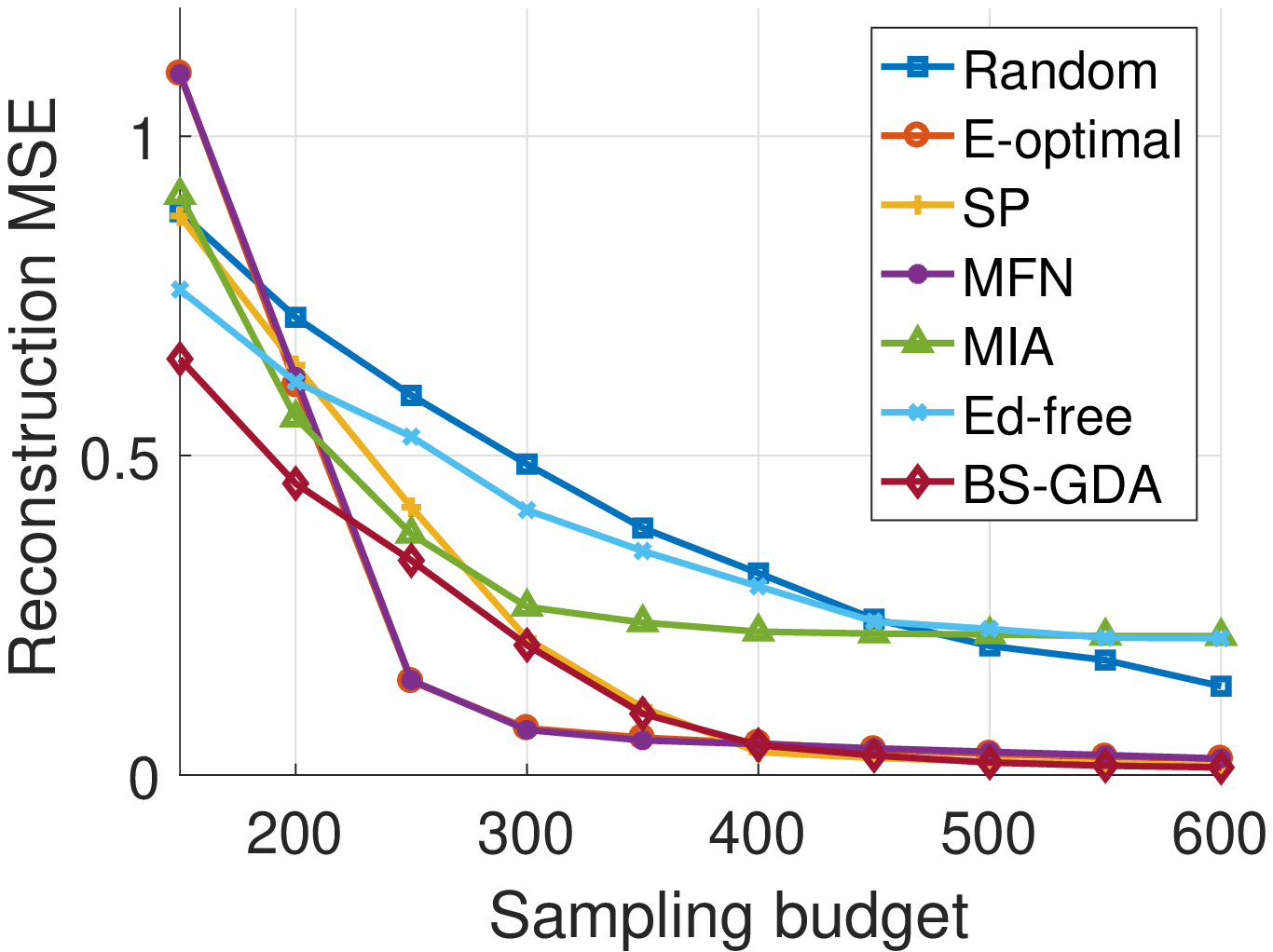}}\\
\subfloat[]{
\label{fig:sensors_smooth}
\includegraphics[width=0.24\linewidth]{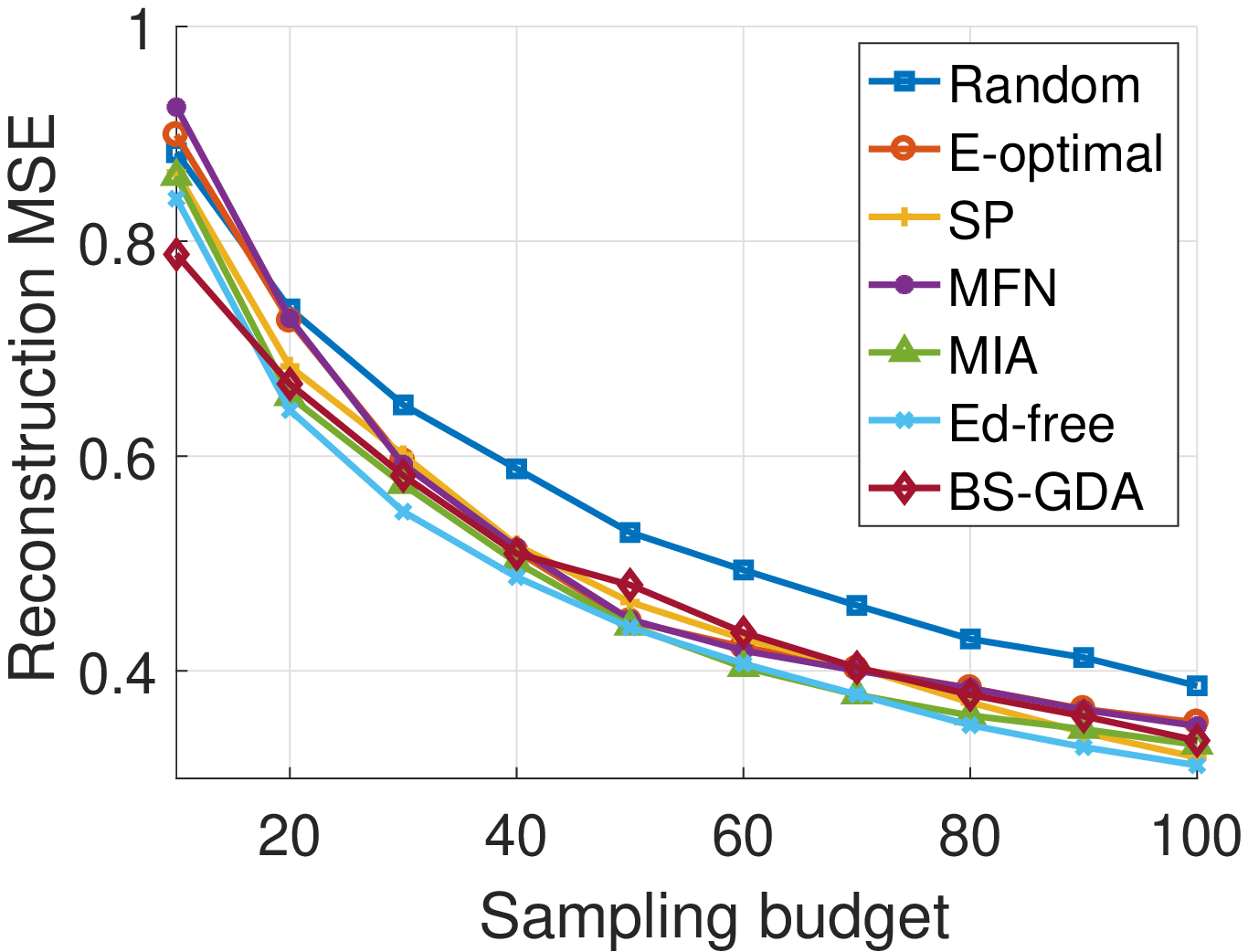}}
\subfloat[]{
\label{fig:community_smooth}
\includegraphics[width=0.24\linewidth]{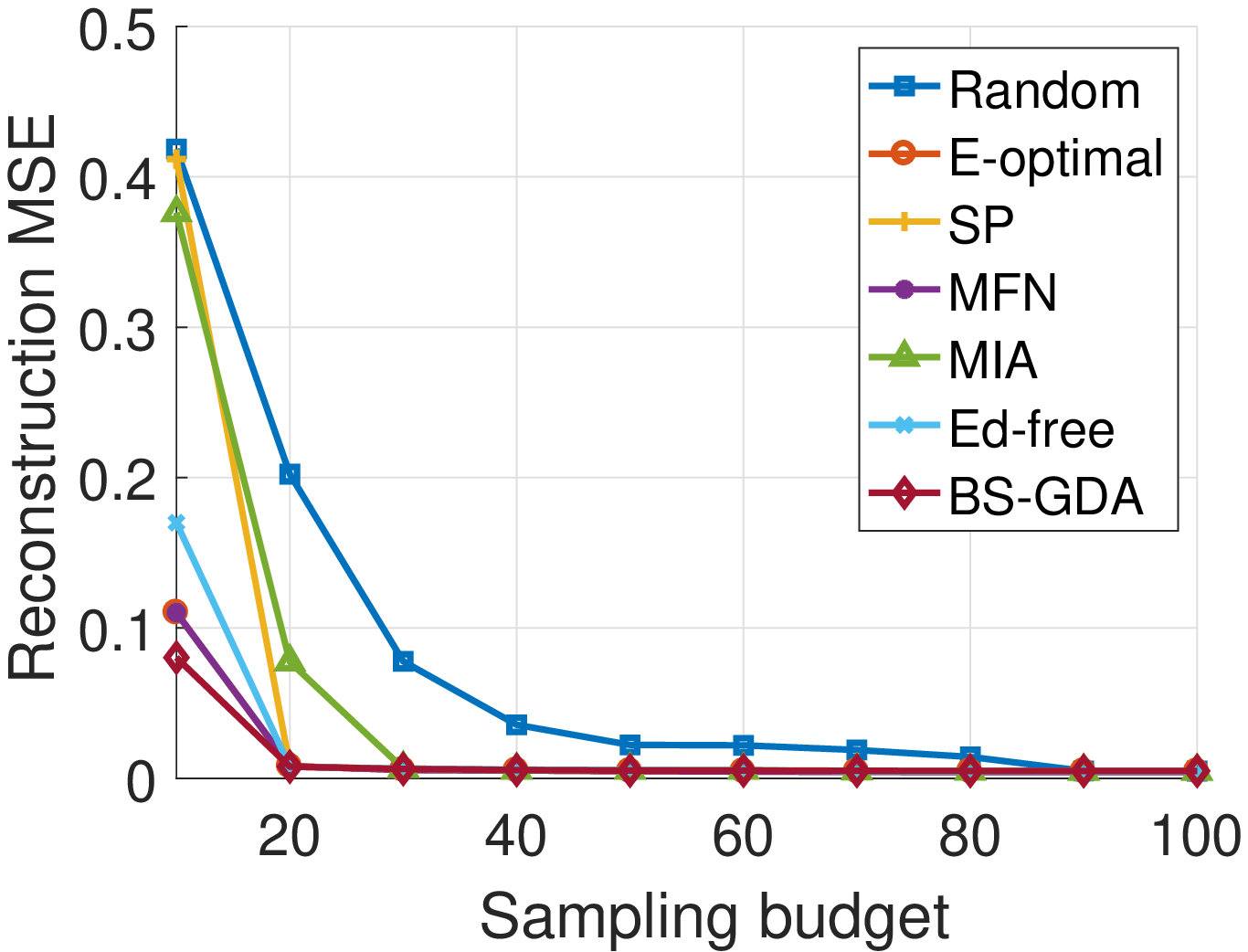}}
\subfloat[]{
\label{fig:ba_smooth}
\includegraphics[width=0.24\linewidth]{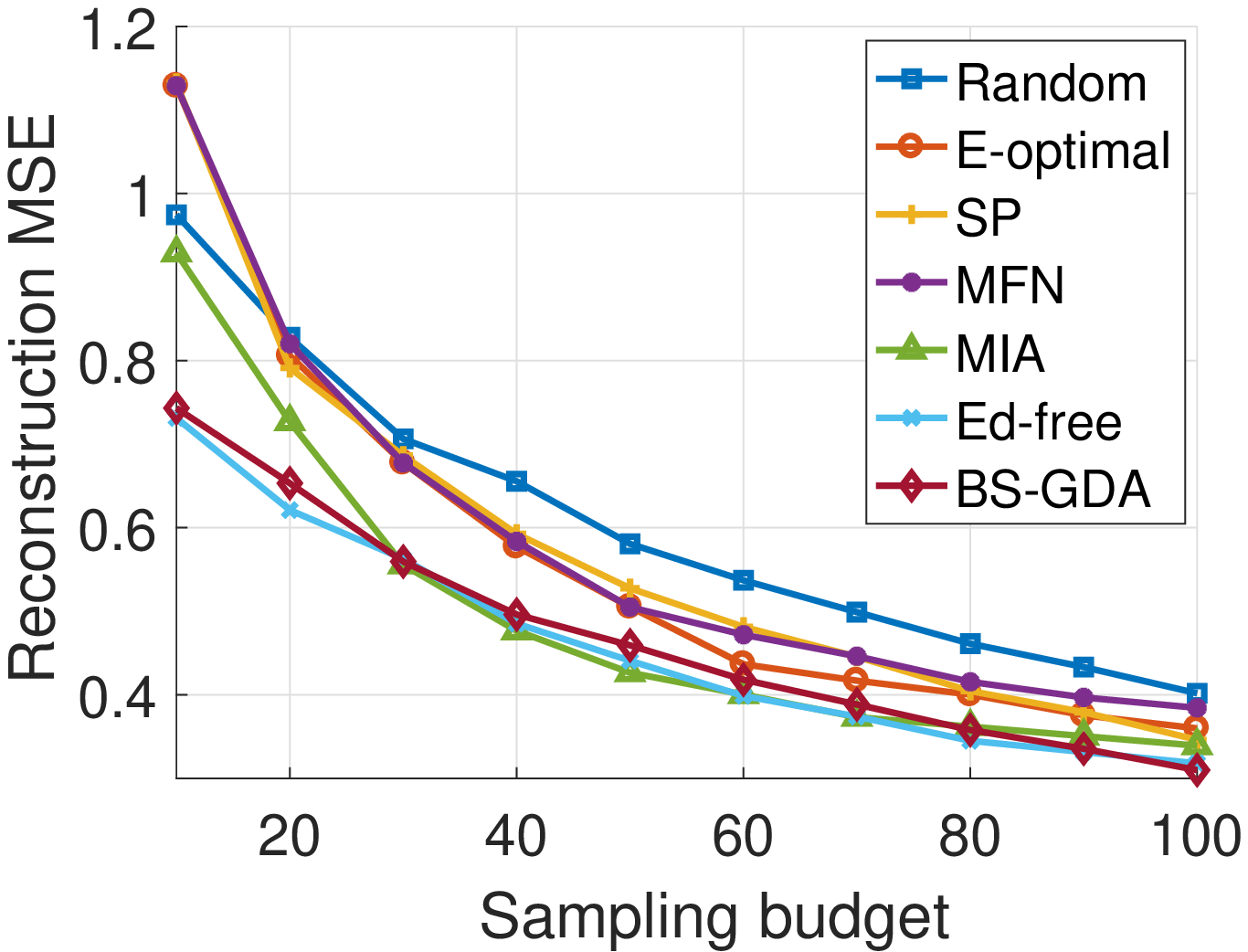}}
\subfloat[]{
\label{fig:minnesota_smooth}
\includegraphics[width=0.24\linewidth]{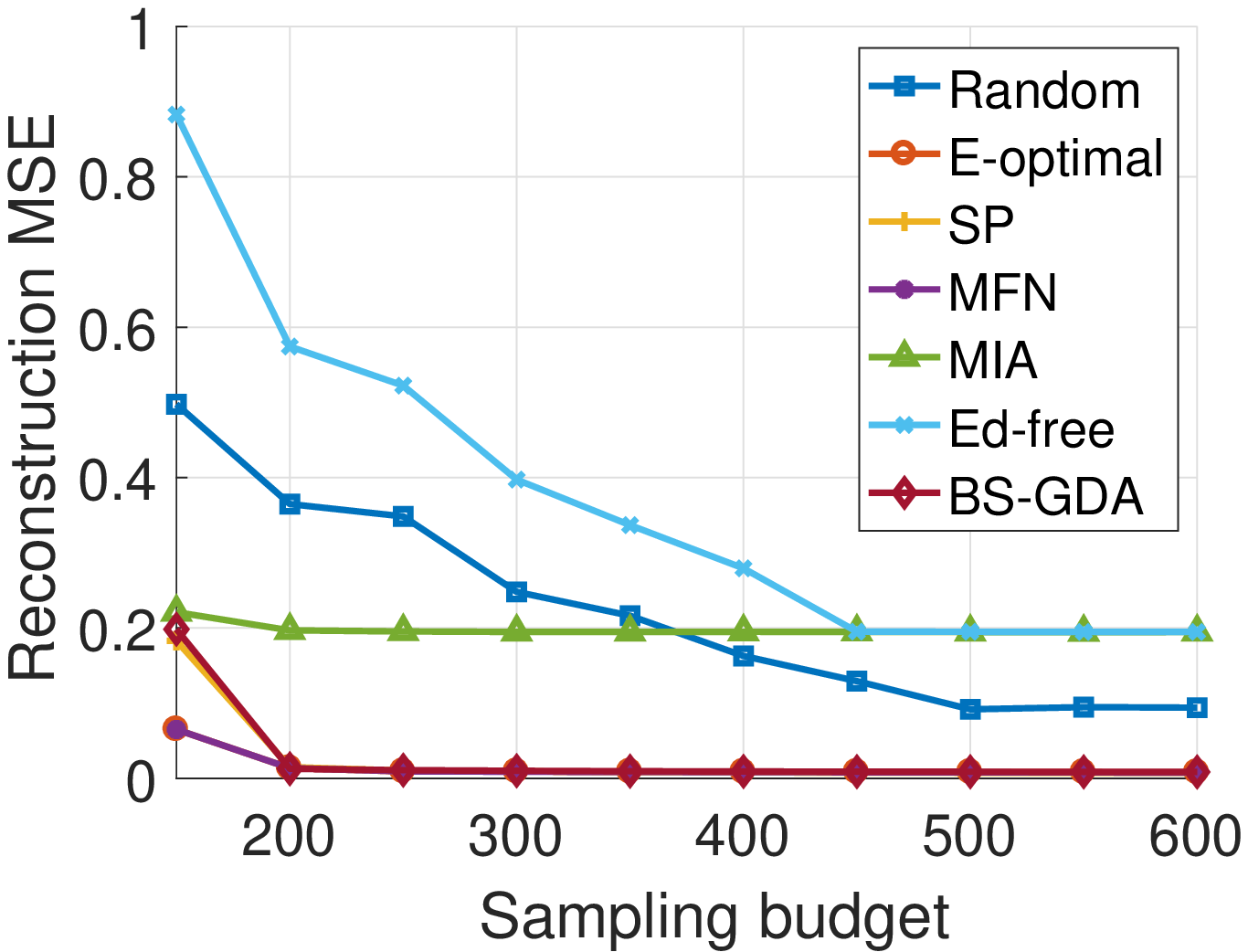}}
\caption{Reconstruction MSEs for different graphs and sampling algorithms using GLR-based reconstruction.
(a) Random sensor graph of size $N=500$ and signal type \textbf{GS1} with bandwidth $\omega=\theta_{50}$.
(b) Community graph of size $N=500$ and $11$ random communities and signal type \textbf{GS1} with bandwidth $\omega=\theta_{50}$.
(c) Barab{\'{a}}si-Albert graph of size $N=500$ and signal type \textbf{GS1} with bandwidth $\omega=\theta_{50}$.
(d) Minnesota road graph of size $N=2642$ and signal type \textbf{GS1} with bandwidth $\omega=\theta_{264}$.
(e) Random sensor graph of size $N=500$ and signal type \textbf{GS2}.
(f) Community graph of size $N=500$ and $11$ random communities and signal type \textbf{GS2}.
(g) Barab{\'{a}}si-Albert graph of size $N=500$ and signal type \textbf{GS2}.
(h) Minnesota road graph of size $N=2642$ and signal type \textbf{GS2}.
}
\label{fig:mse}
\end{figure*}

The average MSEs in terms of sampling budgets are illustrated in Fig.\;\ref{fig:mse}.
As shown, our proposed \texttt{BS-GDA} achieves comparable or smaller MSE values to the competing deterministic schemes on all the four graphs and achieves much better performance than the random sampling scheme \texttt{Random}.

For eigen-decomposition based algorithms, the performances of \texttt{E-optimal}, \texttt{MFN} and \texttt{SP} are unsatisfactory when the sampling budget is much smaller than the bandwidth.
Further, these eigen-decomposition based algorithms are computationally too expensive for large graphs, such as the Minnesota road graph.
For eigen-decomposition-free algorithms, \texttt{MIA} and \texttt{Ed-free} work well on the random sensor graphs, the community graphs and the Barab{\'{a}}si-Albert graphs, but their performance is poor on the large Minnesota road graph.
To avoid explicit eigen-decomposition, symmetrically normalized $\L_n$ is designated for \texttt{MIA} and \texttt{Ed-free}, whose eigenvalues are theoretically bounded in $[0,2]$, to utilize Chebyshev polynomials of $\L_n$ to approximate the low-pass graph filter operator. These two schemes are not robust enough on the graph signals generated on the Minnesota road graph.
It has been observed that, in general, the performance of \texttt{Random} is not comparable to deterministic graph sampling algorithms.

Since the previous deterministic graph sampling algorithms such as \texttt{E-optimal} and \texttt{MFN} are designed assuming a standard least-squares reconstruction, we further test all the sampling algorithms with least-squares reconstruction on random sensor graphs, as shown in Fig.~\ref{fig:ls_mse}.
We set the sampling budget to be no smaller than the bandwidth for \textbf{GS1} or \textbf{GS2}.
As observed in Fig.~\ref{fig:ls_mse}, \texttt{E-optimal}, \texttt{MFN} and \texttt{SP} achieve robust performance.
For \texttt{MIA}, \texttt{Ed-free} and the proposed \texttt{BS-GDA}, their reconstruction MSEs are somewhat larger when the sampling budget is close to the bandwidth.
As the sampling budget increases, all deterministic sampling algorithms have similar reconstruction MSEs.
In contrast, the performance of \texttt{Random} is noticeably worse than the deterministic sampling algorithms.

\begin{figure}[!t]
\centering
\subfloat[]{
\label{fig:ls_sensors_bw}
\includegraphics[width=0.48\linewidth]{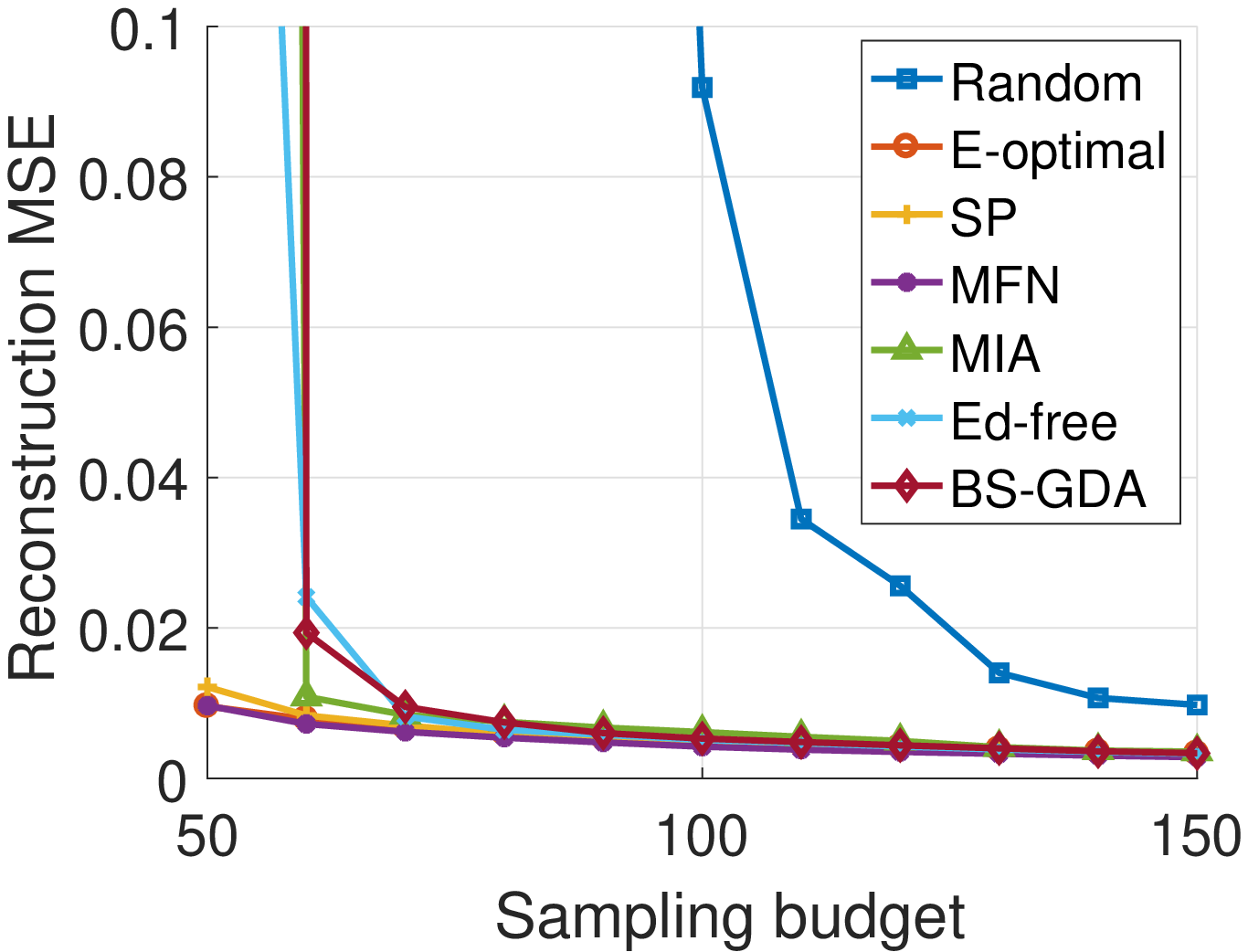}}
\subfloat[]{
\label{fig:ls_sensors_smooth}
\includegraphics[width=0.48\linewidth]{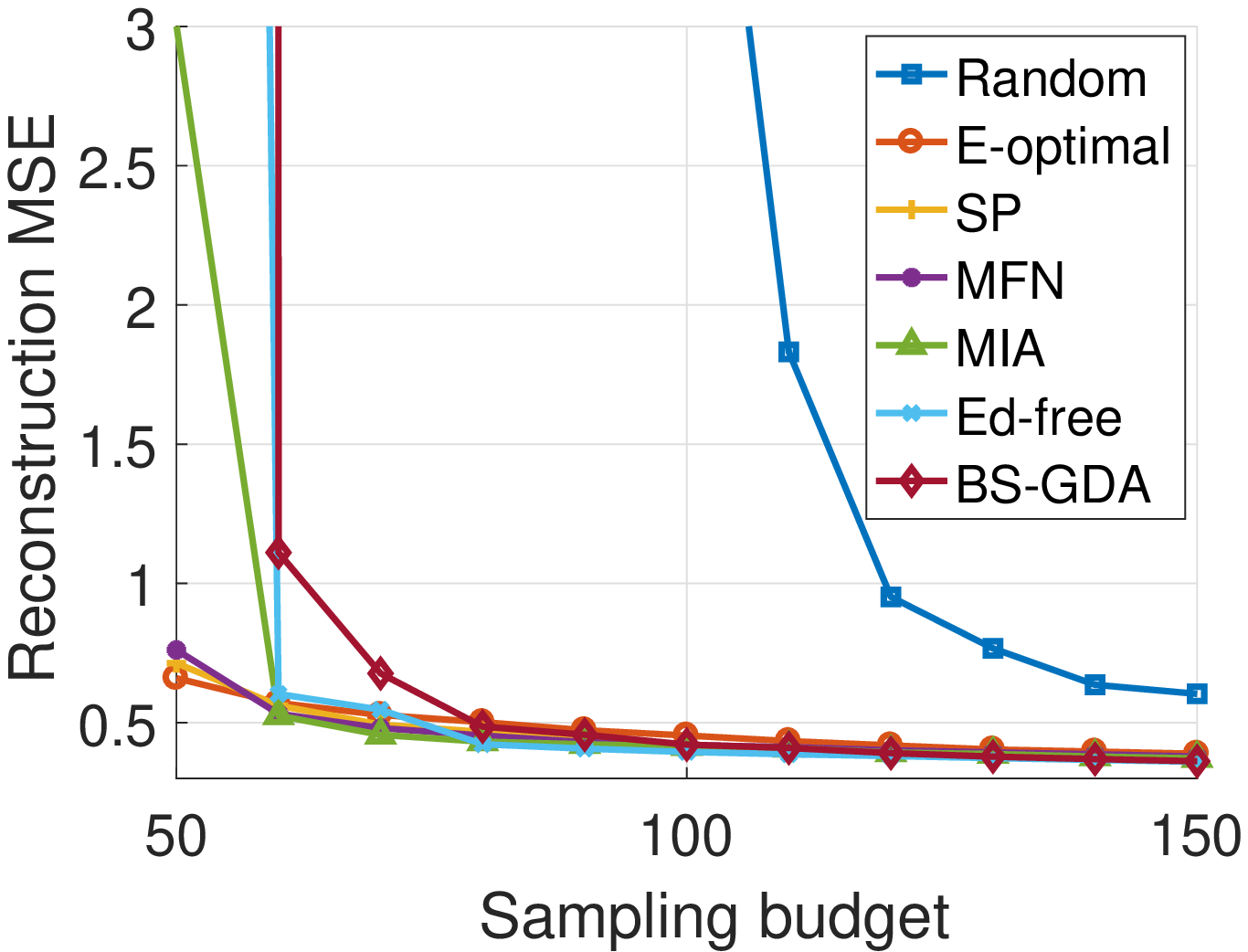}}
\caption{Reconstruction MSEs for different sampling algorithms using least-squares reconstruction on random sensor graph.
(a) Random sensor graph of size $N=500$ and signal type \textbf{GS1} with bandwidth $\omega=\theta_{50}$.
(b) Random sensor graph of size $N=500$ and signal type \textbf{GS2} assuming bandwidth $\omega=\theta_{50}$.
}
\label{fig:ls_mse}
\end{figure}

The proposed \texttt{BS-GDA} solves the dual (\ref{eq:disc_alignment}) via greedy set-covering and performs a binary search to find the optimal lower bound of the primal (\ref{eq:max_lower_bound}), which reduces the global upper bound of the MSE value.
Hence, \texttt{BS-GDA} guarantees robust performance for different graphs in the experiments.
We further show an illustrative example of sampling $K=11$ nodes on the community graph of size $N=500$ with 11 communities in Fig.\;\ref{fig:visualization}.
This illustration clearly shows that \texttt{BS-GDA} captures the graph structure very well and evenly samples one node in each community.
This matches our sampling analogy to the set cover (SC) problem, where each sampled node covers many uncovered nodes.

\begin{figure*}[!t]
\centering
\subfloat[]{
\label{fig:v_graph}
\includegraphics[width=0.23\linewidth]{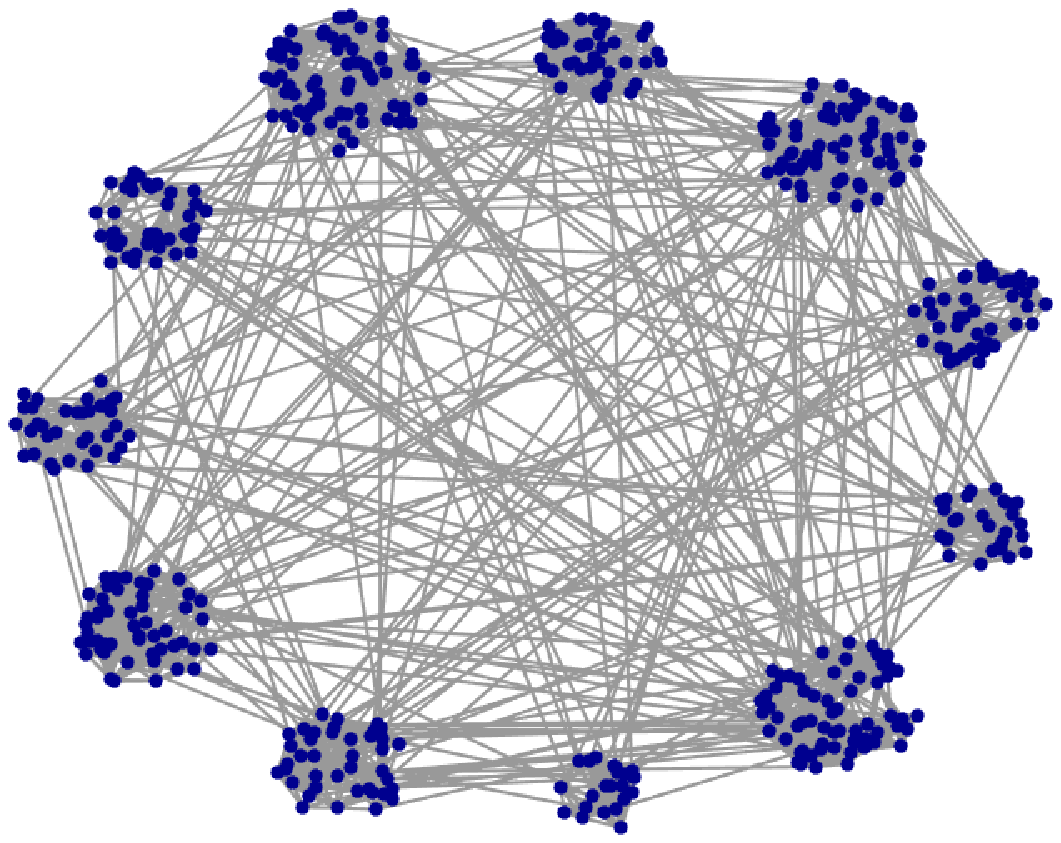}}
\subfloat[]{
\label{fig:v_random}
\includegraphics[width=0.23\linewidth]{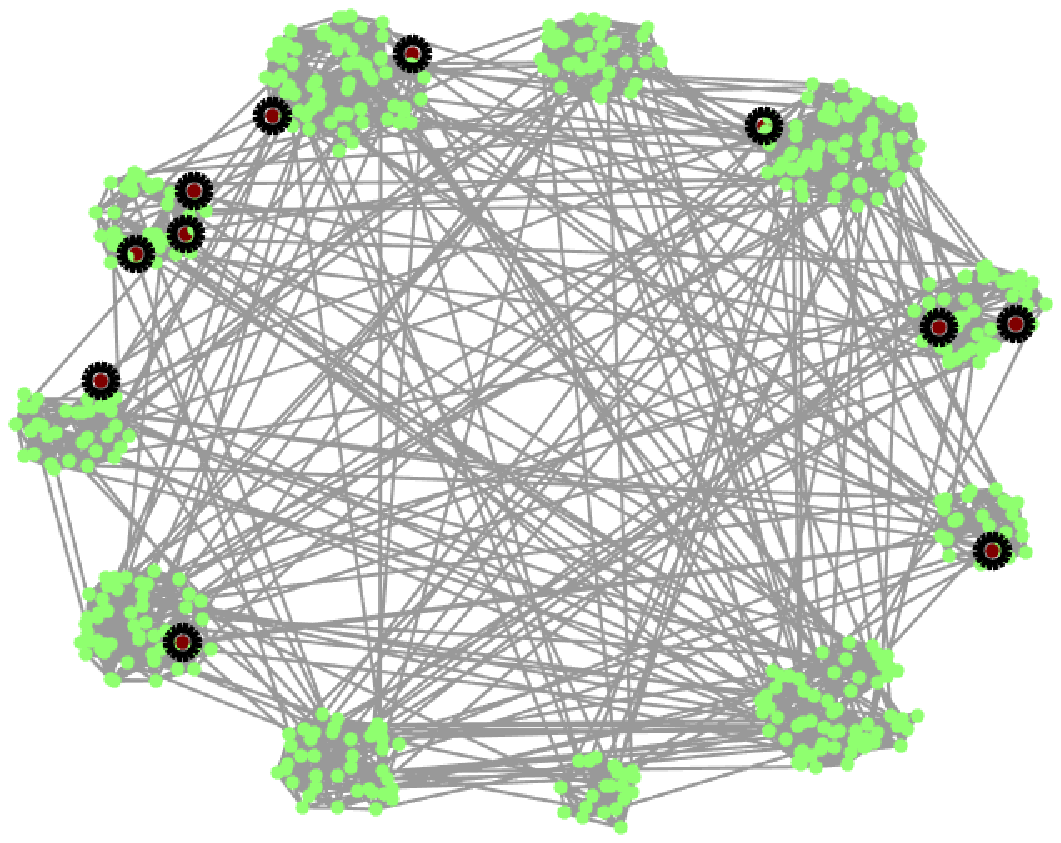}}
\subfloat[]{
\label{fig:v_e_optimal}
\includegraphics[width=0.23\linewidth]{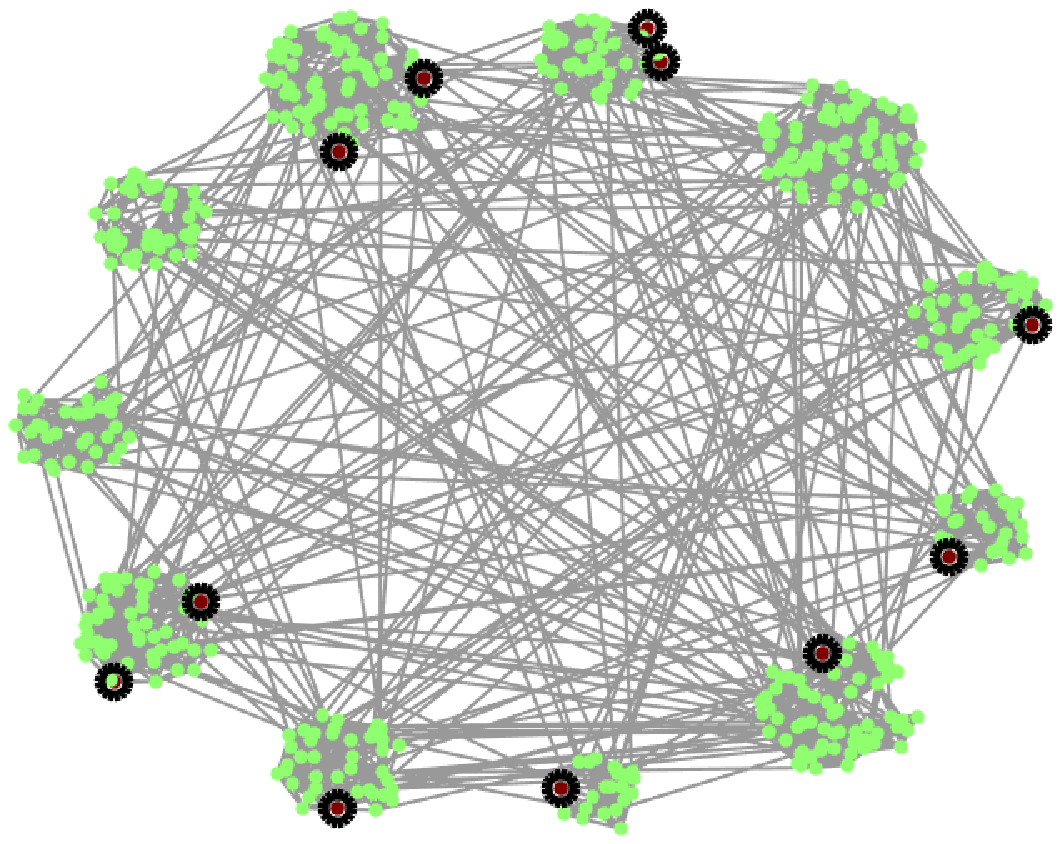}}
\subfloat[]{
\label{fig:v_sp}
\includegraphics[width=0.23\linewidth]{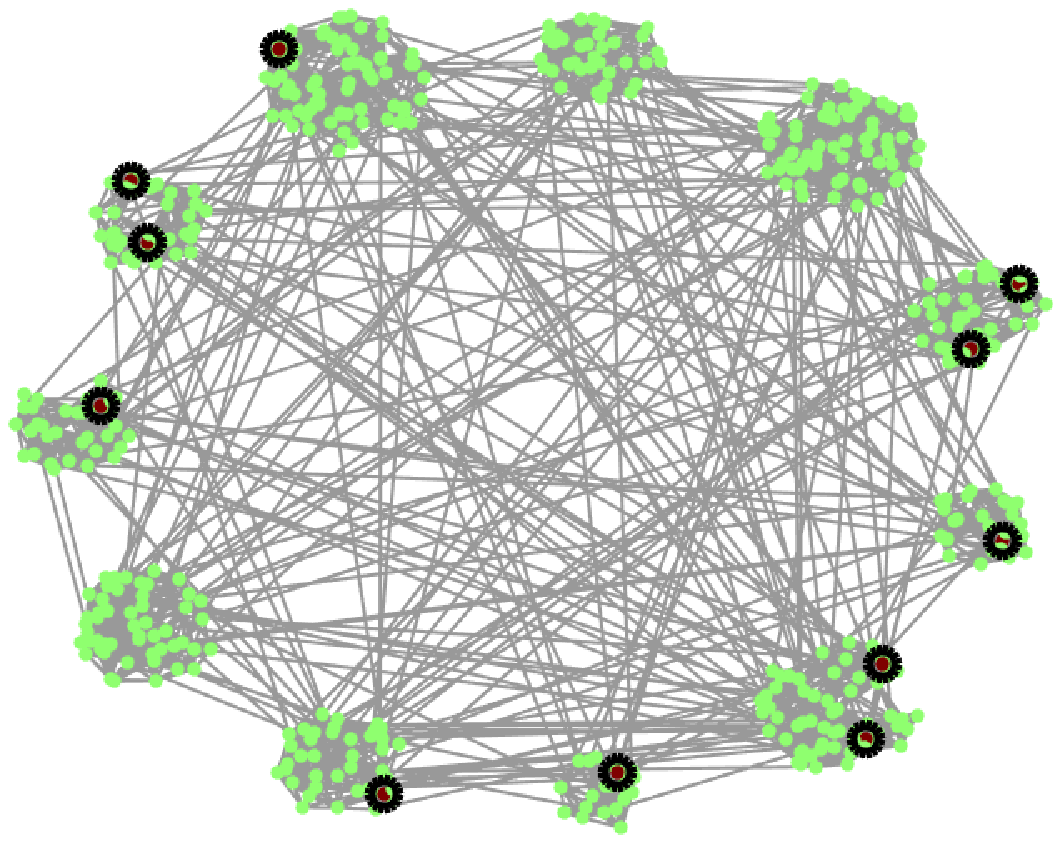}}\\
\subfloat[]{
\label{fig:v_mfn}
\includegraphics[width=0.23\linewidth]{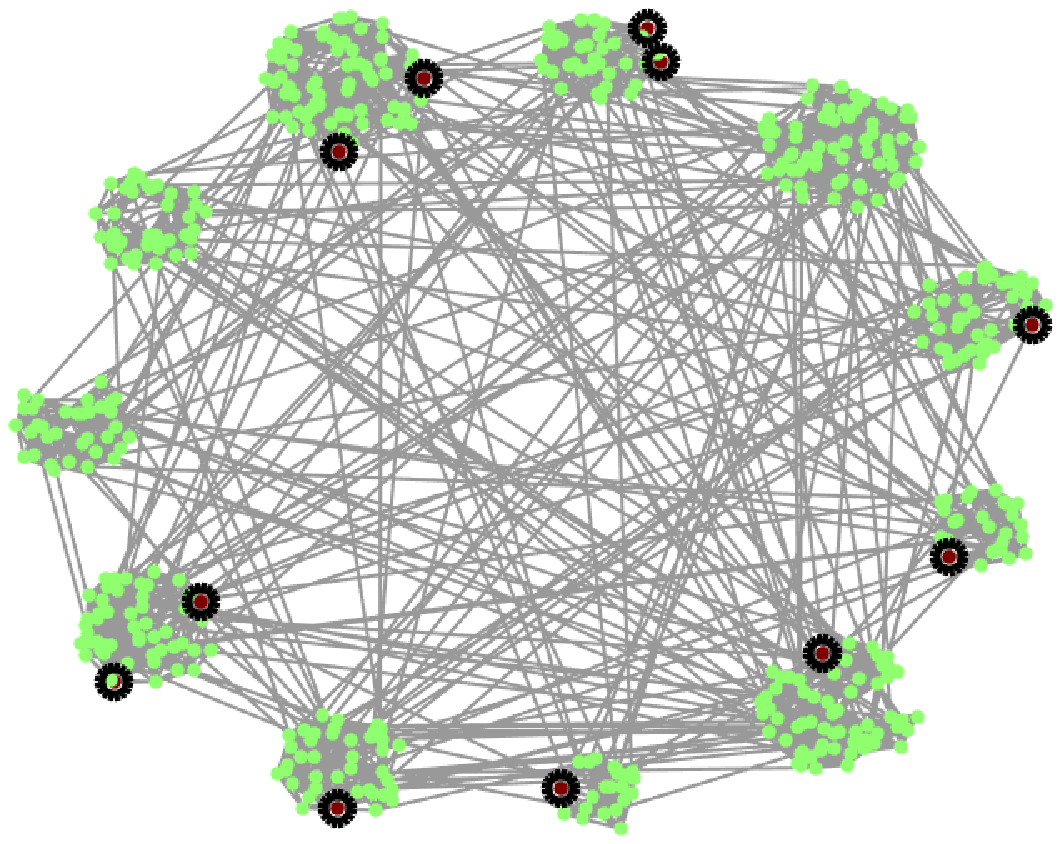}}
\subfloat[]{
\label{fig:v_mia}
\includegraphics[width=0.23\linewidth]{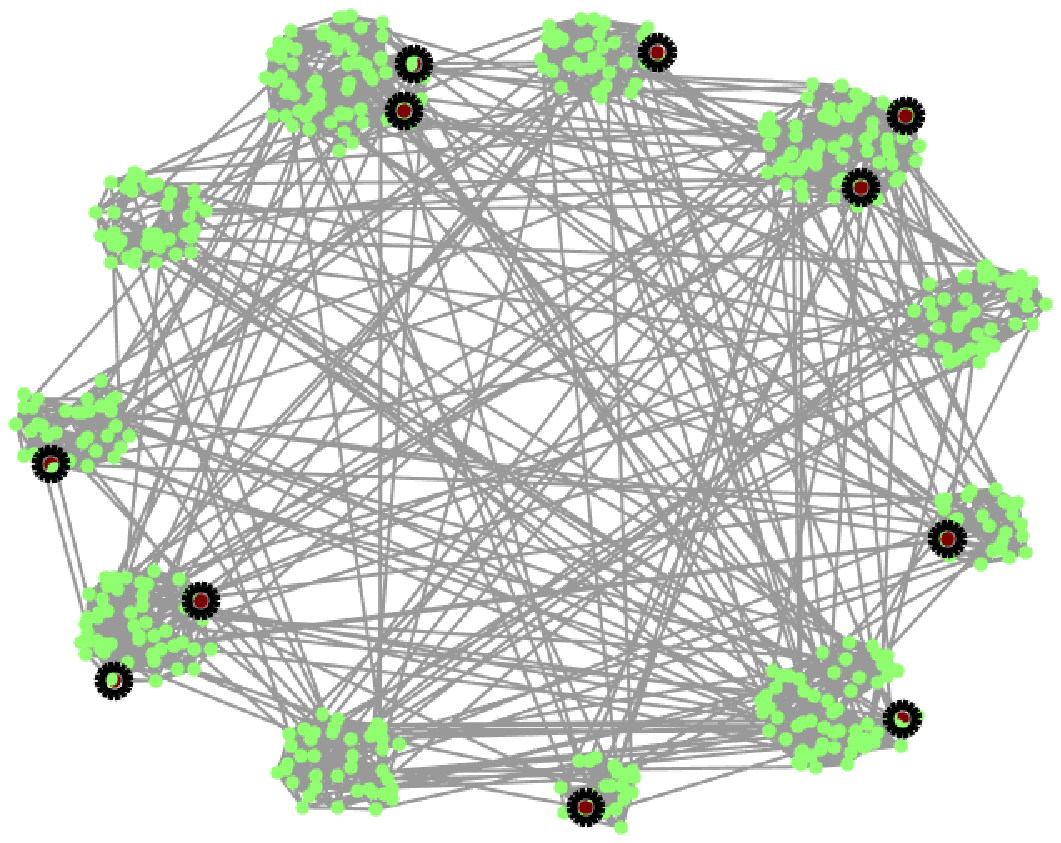}}
\subfloat[]{
\label{fig:v_ed_free}
\includegraphics[width=0.23\linewidth]{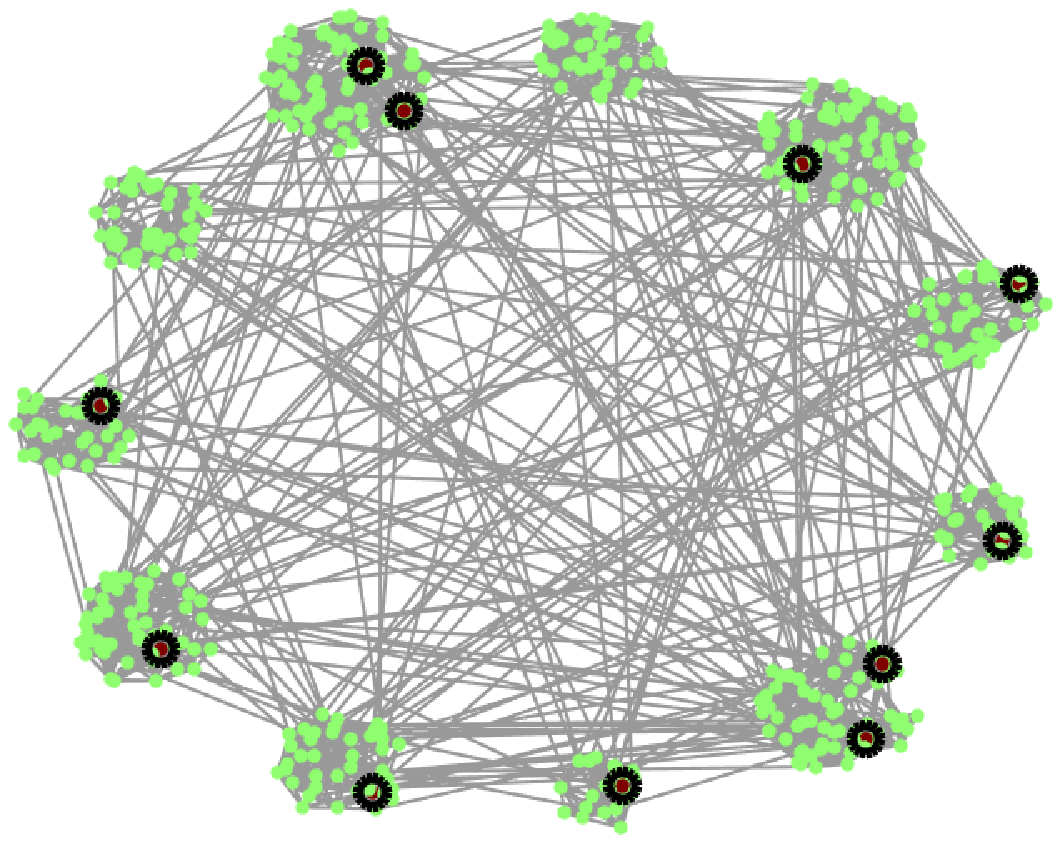}}
\subfloat[]{
\label{fig:v_gda}
\includegraphics[width=0.23\linewidth]{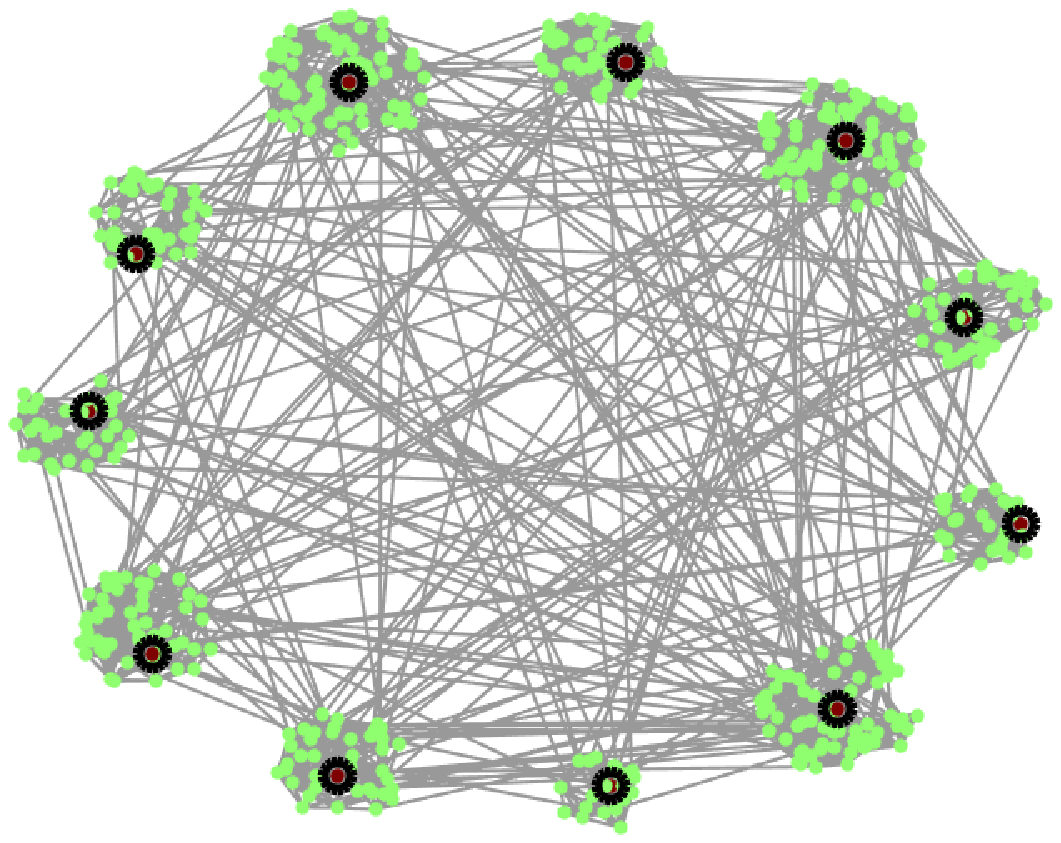}}
    \caption{Visualization of selected nodes on the community graph $(N=500, K=11)$. Black circles denote sampled nodes. (a) Original graph. (b) \texttt{Random} \cite{r_sampling2018ACHA}. (c) \texttt{E-optimal} \cite{e_optimal2015}. (d) \texttt{SP} \cite{sp_proxy2016}. (e) \texttt{MFN} \cite{MFN2016TSP}. (f) \texttt{MIA} \cite{MIA2018Fen}. (g) \texttt{Ed-free} \cite{akie2018eigenFREE}. (h) The proposed \texttt{BS-GDA}.
}
\label{fig:visualization}
\end{figure*}

\section{Conclusion}
\label{sec:conclusion}
To address the ``graph sampling with noise'' problem, in this paper, we first assume a biased graph Laplacian regularization (GLR) based signal reconstruction scheme from samples, and then propose a fast graph sampling set selection algorithm to maximize the stability of the resulting linear system.
In particular, our proposed \texttt{BS-GDA} maximizes the lower-bound of $\lambda_{\min}$ of the coefficient matrix via Gershgorin disc alignment, which is proven to mean also minimizing the upper-bound of reconstruction MSE.
Unlike most previous graph sampling schemes, our proposed \texttt{BS-GDA} does not require computation of any eigen-pairs.
Extensive simulation experiments show that our disc-based sampling algorithm runs substantially faster than existing eigen-decomposition-based schemes (hundreds to thousands times faster) for medium-sized graphs and outperforms other eigen-decomposition-free schemes in reconstruction error for large-sized graphs.

\vspace{20pt}

\appendices

\section{Proof of Proposition~\ref{pr:yuji_bound}}
\label{ap:yuji}
\begin{proof}
    Since matrix $\C$ is similar to $\B$, we have $\lambda_{\min}(\B)=\lambda_{\min}(\C)$. Hence (\ref{eq:eigenvalue_bound}) is equivalent to
    \begin{equation}
        \min_i \ell_i \le\lambda_{\min}(\C)\le \max_i \ell_i
        \label{eq:eigenvalue_bound_2}
    \end{equation}

    From GCT, we see that $\lambda_{\min}(\C)$ is bounded from below by the smallest left-end $\ell_i$, \ie
    \begin{equation}
        \lambda_{\min}(\C)\ge \min_i \ell_i=\min_i \left(c_{ii}-\sum_{j\ne i} |c_{ij}|\right).
    \end{equation}

    We next prove that the largest left-end $\ell_i$ is the upper bound of $\lambda_{\min}(\C)$, \ie,
    \begin{equation}
        ~~~~\lambda_{\min}(\C)\le \max_i \ell_i=\max_i \left(c_{ii}-\sum_{j\ne i} |c_{ij}|\right).
    \end{equation}
    Because $\B=\A+\mu\L$ and $\bS=\diag(\s)$, $\s > \mathbf{0}$, the diagonal elements of $\C$ are non-negative $c_{ii}\ge0$ and the off-diagonal elements of $\C$ are non-positive $c_{ij}\le 0$ for $i\ne j$.
    We consider the matrix $\Ct:=-\C+t\I$, where $t$ is chosen large enough so that the diagonal elements of $\Ct$ are all non-negative, and hence the whole matrix $\Ct$ is non-negative.
    Then by the \textit{Perron-Frobenius Theorem} \cite{horn1990matrix}, $\Ct$ has a positive eigenvalue $\phi_{\max}$ (with a non-negative eigenvector), which is also the eigenvalue of largest absolute value.
    Since the eigenvalues $\lambda$'s of $\C$ and those $\phi$'s of $\Ct$ are related simply by $\phi = -\lambda +t$, it follows that
    \begin{equation}
        \label{eq:lammu}
        \lambda_{\min}(\C) = -\phi_{\max}+t
    \end{equation}
    is the minimum eigenvalue of $\C$.

    Let $\bv=[1,1,\ldots,1]^\top$ and consider the vector (recalling that $\ct_{ij}=-c_{ij}=|c_{ij}|$ for $i\neq j$)
    \begin{equation}
        \Ct \bv
        =
        \begin{bmatrix}
        \ct_{11}+\sum_{j\neq 1}\ct_{1j}\\
        \ct_{22}+\sum_{j\neq 2}\ct_{2j}\\
        \vdots\\
        \ct_{nn}+\sum_{j\neq n}\ct_{nj}\\
        \end{bmatrix}
        =
        \begin{bmatrix}
        -c_{11}+t+\sum_{j\neq 1}|c_{1j}|\\
        -c_{22}+t+\sum_{j\neq 2}|c_{2j}|\\
        \vdots\\
        -c_{nn}+t+\sum_{j\neq n}|c_{nj}|\\
        \end{bmatrix}.
    \end{equation}

    Now by \textit{Collatz-Wielandt' bound} \cite{horn1990matrix},
    we have
    \begin{equation}
        \label{eq:collatz}
        \phi_{\max}\geq \min_i(\Ct \bv )_i = \min_i\left(-c_{ii}+t+\sum_{j\neq i}|c_{ij}|\right).
    \end{equation}
    Together with~\eqref{eq:lammu}, we have
    \begin{equation}
        t-\lambda_{\min}(\C)\geq \min_i(\Ct \bv )_i= \min_i\left(-c_{ii}+t+\sum_{j\neq i}|c_{ij}|\right).
    \end{equation}
    Thus, we conclude that
    \begin{align}
        \lambda_{\min}(\C)&\leq  t-\min_i\left(-c_{ii}+t+\sum_{j\neq i}|c_{ij}|\right) \notag \\
        &=\max_i\left(c_{ii}-\sum_{j\neq i}|c_{ij}|\right) = \max_i\ell_i,
    \end{align}
    as required.
\end{proof}

\section{Proof of NP-Completeness}
\label{ap:np}

We prove that the decision version of the \textit{disc coverage} (DC) problem \eqref{eq:disc_coverage} is NP-complete by transforming the decision version of the \textit{set cover} (SC) problem \cite{cormen2009introduction} to a special case of DC.
The SC decision problem asks the following binary decision question:
given a family of subsets $\cF$ of a finite element set $\cX$ and positive integer $K \leq |\cF|$, does there exist a \textit{cover} $\cC$ of size $K$ or less, \textit{i.e.}, $\cC \subseteq \cF$ and $|\cC| \leq K$, where each element $x \in \cX$ is in at least one member of $\cC$?

For each instance of SC, we perform the following transformation.
First, for each element $j$ in set, $x_j \in \cX$, we construct one \textit{element node} $j$ in a graph $\cG$.
Next, for each subset in collection $c_i \in \cF$, we construct one \textit{subset node} $|\cX| + i$ and connect it to element nodes $j$ that correspond to elements $x_j \in c_i$.
We construct one \textit{super node} $|\cX|+|\cF|+1$ and connect it to $|\cF|$ subset nodes.
Finally, we construct $K+1$ \textit{extra nodes} and connect them to the super node.
Coverage subset $\Omega_i$ for each node $i$ is set to be all 1-hop neighbors of node $i$ in the constructed graph.

As an example, consider the following SC instance:
\begin{align}
\cX &= \{1, 2, 3, 4, 5\}, ~~ \cF = \{c_1, c_2, c_3, c_4\} \nonumber \\
c_1 &= \{1, 2, 4\}, ~~ c_2 = \{1, 4, 5\} \nonumber \\
c_3 &= \{3, 5\}, ~~ c_4 = \{3, 4\}
\end{align}
Fig.\;\ref{fig:dual sampling} illustrates the corresponding constructed graph.
The corresponding DC decision problem is:
does there exist a sampling vector $\a$ with objective of \eqref{eq:disc_coverage} $K+1$ or less, while having each graph node $i$ included in at least one sample $i$'s coverage subset $\Omega_i$?

We prove that the SC decision problem is true iff the corresponding DC decision problem is true.
Suppose first that there exists a cover $\cC$ of size $|\cC| \leq K$.
We observe that a corresponding sample selection of super node $|\cX| + |\cF| + 1$ and subset nodes representing members $c_i \in \cC$ will mean that each graph node is included in at least one sample coverage subset $\Omega_i$, \textit{i.e.}, all nodes are within one hop of selected samples.

\begin{figure}[!t]
    \centering
    \includegraphics[width=0.8\linewidth]{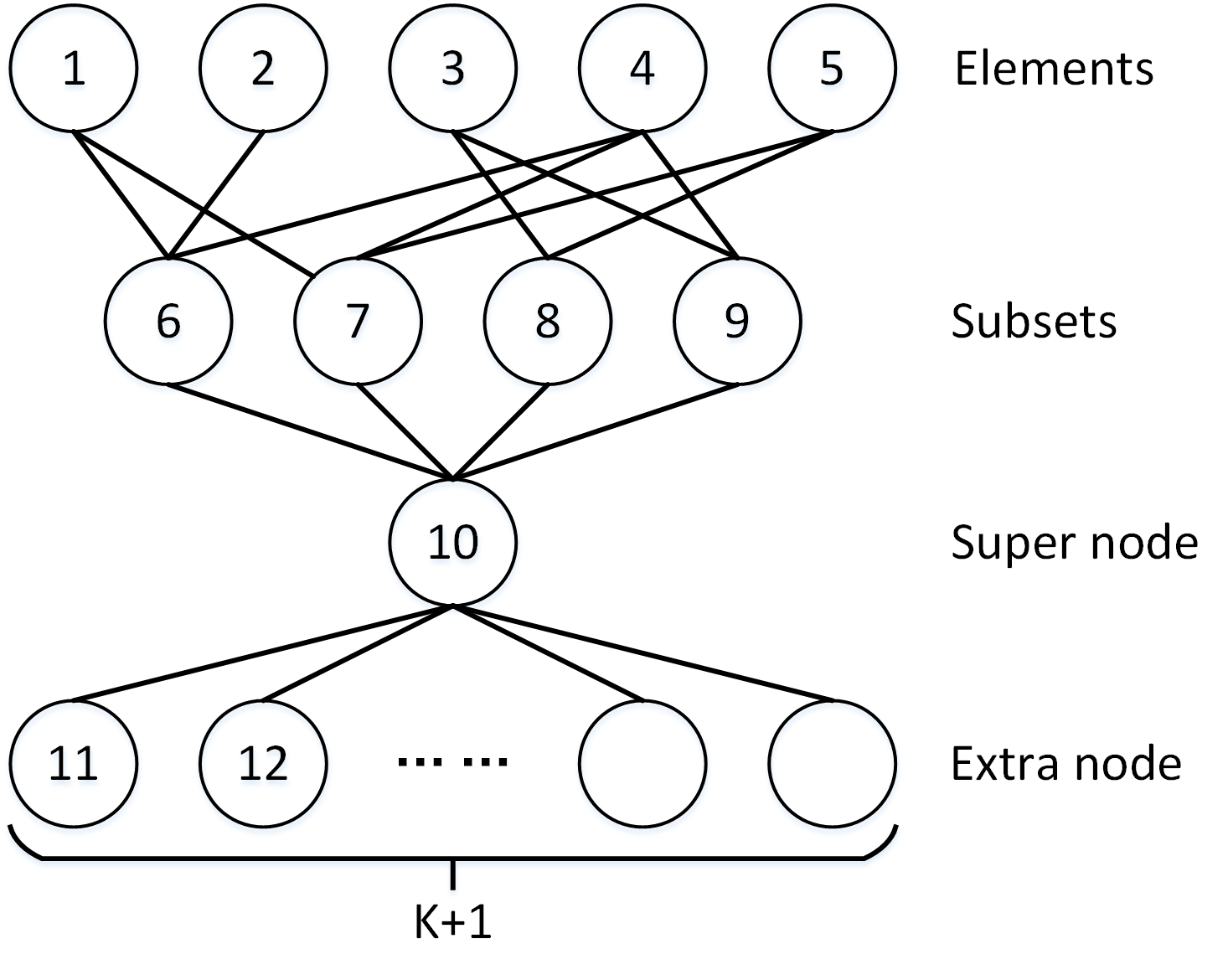}
    \caption{Example of a mapped dual graph sampling problem from a SC problem.}
    \label{fig:dual sampling}
\end{figure}

Now suppose that there exists a sample selection of size $K+1$ such that all nodes are within one hop of selected samples.
We first observe that the super node must be selected.
Otherwise, just covering the $K+1$ extra nodes would require $K+1$ node selections, leaving no leftover budget to select and cover element and subset nodes.

Given that the super node is selected, there remain $K$ sample budget to cover $|\cX|$ element nodes.
We can identify a cover of $K'$ subsets in SC, $K' \leq K$, from the chosen $K$ node samples as follows.
Suppose there are $K_1$, $K_2$ and $K_3$ selected element, subset and extra nodes respectively, where $K_1 + K_2 + K_3 = K$.
For each one of $K_2$ chosen subset node, we select the corresponding subset in SC.
For each one of $K_1$ chosen element node, we randomly select one of its connected subset node and choose the corresponding subset in SC (if not already chosen in previous step).
We ignore all $K_3$ chosen extra nodes.
We see that the number of chosen subsets in SC is no larger than $K_1 + K_2 \leq K$, but covers all elements in set $\cX$.
Thus a feasible sample set for the DC decision problem maps to a feasible cover $\cC$ for the SC decision problem.
We have thus proven that an instance of the SC problem maps to our construction of the DC problem.
Therefore, we can conclude that the DC problem is at least as hard as the SC problem.
Since the SC decision problem is known to be NP-complete, the DC decision problem is also NP-complete.

Given the DC decision problem is NP-complete, the corresponding DC optimization problem in \eqref{eq:disc_coverage} is NP-hard. $\Box$

\ifCLASSOPTIONcaptionsoff
  \newpage
\fi



%
\bibliographystyle{IEEEbib}
\bibliography{ref2,graph_refs}

%








\end{document}